\newcolumntype{L}[1]{>{\raggedright\arraybackslash}p{#1}}
\newcolumntype{C}[1]{>{\centering\arraybackslash}m{#1}}
\newcolumntype{R}[1]{>{\raggedleft\arraybackslash}p{#1}}
\newcommand{\dgraph}{d_\textsf{graph}}
\newcommand{\dham}{d_\textsf{Hamil}}  
\def\paragraph{\@startsection{paragraph}{4}%
  \z@\z@{-\fontdimen2\font}%
  {\normalfont\bfseries}}
\renewcommand{\tocsection}[3]{%
  \indentlabel{\@ifnotempty{#2}{\bfseries\ignorespaces#1 #2\quad}}\bfseries#3}
\renewcommand{\tocsubsection}[3]{%
  \indentlabel{\@ifnotempty{#2}{\ignorespaces#1 #2\quad}}#3}
\newcommand\@dotsep{4.5}
\def\@tocline#1#2#3#4#5#6#7{\relax
  \ifnum #1>\c@tocdepth 
  \else
    \par \addpenalty\@secpenalty\addvspace{#2}%
    \begingroup \hyphenpenalty\@M
    \@ifempty{#4}{%
      \@tempdima\csname r@tocindent\number#1\endcsname\relax
    }{%
      \@tempdima#4\relax
    }%
    \parindent\z@ \leftskip#3\relax \advance\leftskip\@tempdima\relax
    \rightskip\@pnumwidth plus1em \parfillskip-\@pnumwidth
    #5\leavevmode\hskip-\@tempdima{#6}\nobreak
    \leaders\hbox{$\m@th\mkern \@dotsep mu\hbox{.}\mkern \@dotsep mu$}\hfill
    \nobreak
    \hbox to\@pnumwidth{\@tocpagenum{\ifnum#1=1\bfseries\fi#7}}\par
    \nobreak
    \endgroup
  \fi}
\renewcommand\csname r@tocindent0\endcsname{0pt}
\def\l@subsection{\@tocline{2}{0pt}{2.5pc}{5pc}{}}
\newcommand{\todo}[1]{\typeout{TODO: \the\inputlineno: #1}\textbf{{\color{red}[[[ #1 ]]]}}}
\newcommand{\Update}{\textsf{Update}}
\newcommand{\DTV}[2]{d_{\mathrm{TV}}\left({#1},{#2}\right)}
\newcommand{\E}[1]{\mathbb{E}\left[{#1}\right]}
\newcommand{\one}[1]{\mathbf{1}\left[#1\right]}
\newcommand{\I}{\mathcal{I}}
\newcommand{\D}{\mathcal{D}}
\newcommand{\X}{\boldsymbol{X}}
\newcommand{\Y}{\boldsymbol{Y}}
\newcommand{\margin}[2]{\mu_{{#2},{#1}}}
\newcommand{\Brac}[1]{\left\langle\,{#1}\,\right\rangle}
\newcommand{\T}{\mathcal{T}}
\newcommand{\hatT}{\widehat{\T}}
\newcommand{\exelog}[3]{\left\langle{#2}, #1_t({#2})\right\rangle_{t=1}^{{#3}}}
\newcommand{\Exelog}{\mathsf{Exe\text{-}Log}}
\newcommand{\Imid}{\mathcal{I}_{\mathsf{mid}}}
\newcommand{\UpdateGraph}{\textsf{UpdateGraph}}
\newcommand{\UpdateHamiltonian}{\textsf{UpdateHamiltonian}}
\newcommand{\AddVertex}{\textsf{AddVertex}}
\newcommand{\UpdateEdge}{\textsf{UpdateEdge}}
\newcommand{\DeleteVertex}{\textsf{DeleteVertex}}
\newcommand{\opt}{\mathsf{opt}}
\newcommand{\Rham}{R_{\mathsf{Hamil}}}
\newcommand{\Rgra}{R_{\mathsf{graph}}}
\newcommand{\Tpre}{T_{\mathsf{preparation}}^{\mathsf{single}}}
\newcommand{\Tupd}{T_{\mathsf{update}}^{\mathsf{single}}}
\newcommand{\Tprem}{T_{\mathsf{preparation}}^{\mathsf{multi}}}
\newcommand{\Tupdm}{T_{\mathsf{update}}^{\mathsf{multi}}}
\newcommand{\Trefine}{T_{\mathsf{completion}}^{\mathsf{multi}}}
\newcommand{\Lham}{L_{\mathsf{Hamil}}}
\newcommand{\Lgra}{L_{\mathsf{graph}}}
\newtheorem{theorem}{Theorem}[section]
\newtheorem{observation}[theorem]{Observation}
\newtheorem{claim}[theorem]{Claim}
\newtheorem{lemma}[theorem]{Lemma}
\newtheorem{proposition}[theorem]{Proposition}
\newtheorem{corollary}[theorem]{Corollary}
\theoremstyle{definition}
\newtheorem{definition}[theorem]{Definition}
\newtheorem{condition}[theorem]{Condition}
\newtheorem*{remark*}{Remark}
\newcommand{\datainsert}{\textsf{Insert}}
\newcommand{\dataremove}{\textsf{Remove}}
\newcommand{\dataupdate}{\textsf{Change}}
\newcommand{\dataeval}{\textsf{Eval}}
\newcommand{\datasuccessor}{\textsf{Succ}}
\newcommand{\LengthFix}{\textsf{LengthFix}}
\newcommand{\pup}{p^{\mathsf{up}}}
\newcommand{\Isingset}{\mathcal{P}}
\renewcommand{\emptyset}{\varnothing}
\newcommand{\norm}[1]{\left\Vert#1\right\Vert}
\newcommand{\abs}[1]{\left\vert#1\right\vert}
 \newcommand{\tuple}[1]{\left(#1\right)} \newcommand{\eps}{\varepsilon}
 \newcommand{\tp}{\tuple}
\renewcommand{\d}{\,\-d}
\def\*#1{\boldsymbol{#1}} 
\def\+#1{\mathcal{#1}} 
\def\-#1{\mathrm{#1}} 
\title{Dynamic inference in probabilistic graphical models}
\thanks{Weiming Feng and Yitong Yin are supported by the National Key R\&D Program of China 2018YFB1003202 and the National Natural Science Foundation of China under Grant Nos. 61722207 and 61672275.
Kun He and Xiaoming Sun are supported by the National Natural Science Foundation of China Grants No. 61832003, 61433014 and K.C.Wong Education Foundation.}
\author{Weiming Feng}
\author{Kun He}
\author{Xiaoming Sun}
\author{Yitong Yin}
\address[Weiming Feng, Yitong Yin]{State Key Laboratory for Novel Software Technology, Nanjing University. \textnormal{E-mail: \url{fengwm@smail.nju.edu.cn, yinyt@nju.edu.cn}}}
\address[Kun He]{Shenzhen University; Shenzhen Institute of Computing Sciences.
\textnormal{E-mail: \url{hekun.threebody@foxmail.com}}}
\address[Xiaoming Sun]{CAS Key Lab of Network Data Science and Technology, Institute of Computing Technology, Chinese Academy of Sciences. \textnormal{E-mail: \url{sunxiaoming@ict.ac.cn}}}
\date{}
\begin{document}
\begin{abstract}
Probabilistic graphical models, such as Markov random fields (MRFs), are useful for describing high-dimensional distributions in terms of local dependence structures. 
The {probabilistic inference} is a fundamental problem related to graphical models, and sampling is a main approach for the problem.
In this paper, we study probabilistic inference problems when the graphical model itself is changing dynamically with time.
Such dynamic inference problems arise naturally in today's application, e.g.~multivariate time-series data analysis and practical learning procedures.

We give a dynamic algorithm for sampling-based probabilistic inferences in MRFs, where each dynamic update can change the underlying graph and all parameters of the MRF simultaneously, as long as the total amount of changes is bounded.
More precisely, suppose that the MRF has $n$ variables and polylogarithmic-bounded maximum degree, and $N(n)$ independent samples are sufficient for the inference for a polynomial function $N(\cdot)$.
Our algorithm dynamically maintains an answer to the inference problem using $\widetilde{O}(n N(n))$ space cost,
and $\widetilde{O}(N(n) + n)$ incremental time cost upon each update to the MRF, as long as the Dobrushin-Shlosman condition is satisfied by the MRFs.
This well-known condition has long been used for guaranteeing the efficiency of Markov chain Monte Carlo (MCMC) sampling in the traditional static setting.
Compared to the static case, which requires $\Omega(n N(n))$ time cost for redrawing all $N(n)$ samples whenever the MRF changes,
our dynamic algorithm gives a $\widetilde\Omega(\min\{n, N(n)\})$-factor speedup.
Our approach relies on a novel dynamic sampling technique, which transforms  local Markov chains (a.k.a. single-site dynamics) to dynamic sampling algorithms, and an ``algorithmic Lipschitz'' condition that we establish for sampling from graphical models, namely, when the MRF changes by a small difference, samples can be modified to reflect the new distribution, with cost proportional to the difference on MRF.

%

%


\end{abstract}
\maketitle

\setcounter{page}{0} \thispagestyle{empty} \vfill
\pagebreak

\setcounter{tocdepth}{2}
\tableofcontents{}
\setcounter{page}{0} \thispagestyle{empty} \vfill
\pagebreak

\section{Introduction}\label{sec:introduction}
The probabilistic graphical models provide a rich language for describing high-dimensional distributions in terms of the dependence structures between random variables. 
The \emph{Markov random filed} (MRF) is a basic graphical model that encodes pairwise interactions of complex systems.  
Given a graph $G=(V,E)$, each vertex $v\in V$ is associated with a function $\phi_v:Q\to\mathbb{R}$, called the \emph{vertex potential}, on a finite domain $Q=[q]$ of $q$ \emph{spin states}, and each edge $e\in E$ is associated with a symmetric function $\phi_e:Q^2\to\mathbb{R}$, called the \emph{edge potential}, which describes a pairwise interaction.
Together, these induce a probability distribution $\mu$ over all configurations $\sigma\in Q^V$:
\begin{align*}
\mu(\sigma)	 \propto \exp(H(\sigma)) = \exp\Big( \sum_{v \in V}\phi_v(\sigma_v) + \sum_{e=\{u,v\} \in E}\phi_e(\sigma_u,\sigma_v)\Big).	
\end{align*}
This distribution $\mu$ is known as the Gibbs distribution and $H(\sigma)$ is the \emph{Hamiltonian}.
It arises naturally from various physical models, statistics or learning problems, and combinatorial problems in computer science~\cite{mezard2009information,koller2009probabilistic}.

The \emph{probabilistic inference} is one of the most  fundamental computational problems in graphical model.
Some basic inference problems ask to calculate the marginal distribution, conditional distribution, 
or maximum-a-posteriori probabilities of one or several random variables~\cite{wainwright2008graphical}.
{Sampling} is perhaps the most widely used approach for probabilistic inference.
Given a graphical model, independent samples are drawn from the Gibbs distribution and certain statistics are computed using the samples to give estimates for the inferred quantity. 
%
For most typical inference problems, such statistics are easy to compute once the samples are given, 
for instance, for estimating the marginal distribution on a  variable subset $S$, the statistics is the frequency of each configuration in $Q^S$ among the samples,
thus the cost for inference is dominated by the cost for generating random samples~\cite{JVV86,vstefankovivc2009adaptive}.

The classic probabilistic inference assumes a static setting, where the input graphical
model is fixed. In today's application, dynamically changing graphical models naturally arise in many scenarios.
In various practical algorithms for learning graphical models,
e.g.~the contrastive divergence algorithm for learning the restricted Boltzmann machine~\cite{hinton2012practical} and the iterative proportional fitting algorithm for maximum likelihood estimation of graphical models~\cite{wainwright2008graphical},
the optimal model $\I^*$ is obtained by updating the parameters of the graphical model iteratively (usually by gradient descent),
which generates a sequence of graphical models $\I_1,\I_2,\cdots,\I_{M}$,
with the goal that $\I_{M}$ is a good approximation of $\I^*$. 
Also in the study of the multivariate time-series data, the dynamic Gaussian graphical models~\cite{carvalho2007dynamic}, multiregression dynamic model~\cite{queen1993multiregression}, dynamic graphical model~\cite{feng2019dynamic},
and dynamic chain graph models~\cite{anacleto2017dynamic}, are all dynamically changing graphical models and have been used in a variety of applications. 
Meanwhile, with the advent of Big Data, scalable machine learning systems need to deal with continuously evolving graphical models (see e.g.~\cite{renggli2019continuous} and~\cite{smyth2009asynchronous}).

The theoretical studies of probabilistic inference in dynamically changing graphical models are lacking.
In the aforementioned scenarios in practice, it is common that a sequence of graphical models is presented with time, where any two consecutive graphical models can differ from each other in all potentials but by a small total amount.
Recomputing the inference problem from scratch at every time when the graphical model is changed, can give the correct solution, but is very wasteful.
A fundamental question is whether probabilistic inference can be solved dynamically and efficiently.

In this paper,  we study the problem of probabilistic inference in an MRF when the MRF itself is changing dynamically with time.
At each time, the whole graphical model, including all vertices and edges as well as their potentials, are subject to changes.
Such \emph{non-local} updates are very general and cover all applications mentioned above.
The problem of \emph{dynamic inference} then asks to maintain a correct answer to the inference in a dynamically changing MRF
with low incremental cost proportional to the amount of changes made to the graphical model at each time. 
 


\subsection{Our results}
We give a dynamic algorithm for sampling-based probabilistic inferences.
Given an MRF instance with $n$ vertices, suppose that $N(n)$ independent samples are sufficient to give an approximate solution to the inference problem, where 
$N: \mathbb{N}^+ \to \mathbb{N}^+$ is a polynomial function. 
We give dynamic algorithms for general inference problems on dynamically changing MRF. 

Suppose that the current MRF has $n$ vertices and polylogarithmic-bounded maximum degree, 
and each update to the MRF may change the underlying graph and/or all vertex/edge potentials, as long as the total amount of changes is bounded.
Our algorithm maintains an approximate solution to the inference with $\widetilde{O}(n N(n))$ space cost, and with $\widetilde{O}(N(n) + n)$ incremental time cost upon each update,
assuming that the MRFs satisfy the Dobrushin-Shlosman condition~\cite{dobrushin1985completely,dobrushin1985constructive,dobrushin1987completely}. 
The condition has been widely used to imply the efficiency of Markov chain Monte Carlo (MCMC) sampling (e.g.~see \cite{hayes2006simple,dyer2008dobrushin}).
Compared to the static algorithm, which requires $\Omega(n N(n))$ time for redrawing all $N(n)$ samples each time,
our dynamic algorithm significantly improves the time cost with an $\widetilde\Omega(\min\{n, N(n)\})$-factor speedup.

On specific models,
the Dobrushin-Shlosman condition has been established in the literature,  which directly gives us following efficient dynamic inference algorithms, with $\widetilde{O}\left(nN(n)\right)$ space cost and 
$\widetilde{O}\left(N(n) +n \right)$ time cost per update, on graphs with $n$ vertices and maximum degree $\Delta = O(1)$:
\begin{itemize}
\item for Ising model with temperature $\beta$ satisfying $\mathrm{e}^{-2|\beta|}> 1-\frac{2}{\Delta+1}$, which is close to the uniqueness threshold $\mathrm{e}^{-2|\beta_c|}= 1-\frac{2}{\Delta}$, beyond which the static versions of sampling or marginal inference problem for anti-ferromagnetic Ising model is intractable~\cite{galanis2016inapproximability,galanis2015inapproximability};
\item for hardcore model with fugacity $\lambda<\frac{2}{\Delta-2}$, which matches the best bound known for sampling algorithm with near-linear running time on general graphs with bounded maximum degree~\cite{vigoda1999fast,luby1999fast,dyer2000markov};
\item for proper $q$-coloring with $q>2\Delta$, which matches the best bound known for sampling algorithm with near-linear running time on general graphs with bounded maximum degree~\cite{jerrum1995very}.
\end{itemize}

Our dynamic inference algorithm is based on a dynamic sampling algorithm, which efficiently maintains $N(n)$ independent samples for the current MRF while the MRF is subject to changes. 
More specifically, we give a dynamic version of the \emph{Gibbs sampling} algorithm, a local Markov chain for sampling from the Gibbs distribution that has been studied extensively.
%
%
Our techniques are based on: (1)~couplings for dynamic instances of graphical models; and (2)~dynamic data structures for representing single-site Markov chains so that the couplings can be realized algorithmically in sub-linear time.
Both these techniques are of independent interest, and can be naturally extended to more general settings with multi-body interactions.

Our results show that on dynamically changing graphical models,
sampling-based probabilistic inferences can be solved significantly faster than  
rerunning the static algorithm at each time.
This has practical significance in speeding up the iterative procedures for learning graphical models.

\subsection{Related work} 
The problem of dynamic sampling from graphical models was introduced very recently in~\cite{feng2019dynamic}. 
There, a dynamic sampling algorithm was given for graphical models with soft constraints, and can only deal with local updates that change a single vertex or edge at each time.
%
%
The regimes for such dynamic sampling algorithm to be efficient are much more restrictive than the conditions for the rapid mixing of Markov chains.
Our algorithm greatly improves the regimes for efficient dynamic sampling for the Ising and hardcore models in~\cite{feng2019dynamic}, and for the first time, can handle non-local updates that change all vertex/edge potentials simultaneously.
Besides, the dynamic/online sampling from log-concave distributions was also studied in~\cite{narayanan2017efficient,lee2019online}.

Another related topic is the dynamic graph problems, which ask to maintain a solution (e.g.~spanners~\cite{forster2019dynamic, nanongkai2017dynamic, wulff2017fully} or shortest paths~\cite{bernstein2016deterministic, henzinger2016dynamic, henzinger2014decremental}) while the input graph is dynamically changing.
More recently, important progress has been made on dynamically maintaining structures that are related to graph random walks, such as spectral sparsifier~\cite{durfee2019dynamicsoectral,abraham2016fully} or effective resistances~\cite{durfee2018fully,goranci2018dynamic}.
Instead of one particular solution, dynamic inference problems ask to maintain an estimate of a statistics, such statistics comes from an exponential-sized probability space described by a dynamically changing graphical model.

\subsection{Organization of the paper.}
In~\Cref{sec:dynamic-inference-problem}, we formally introduce the dynamic inference problem.
In~\Cref{sec:main-results}, we formally state the main results.
Preliminaries are given in~\Cref{sec:preliminary}. 
In~\Cref{sec:outline-algorithm}, we outline our dynamic inference algorithm.
In~\Cref{sec:agorithm-dynamic-Gibbs}, we present the algorithms for dynamic Gibbs sampling.
The analyses of these dynamic sampling algorithms are given in~\Cref{sec-proof-dynamic}. 
The proof of the main theorem on dynamic inference is given in~\Cref{sec-proof-inference}.
The conclusion is given in Section~\ref{sec:conclusion}.

\section{Dynamic inference problem}\label{sec:dynamic-inference-problem}
\subsection{Markov random fields.} 
%
An instance of \emph{Markov random field (MRF)}  is specified by a tuple $\I = (V, E, Q,\Phi)$, where $G=(V,E)$ is an undirected simple graph; $Q$ is a domain of $q=|Q|$ \emph{spin states}, for some finite $q>1$; and  $\Phi = (\phi_a)_{a \in V \cup E}$ associates each $v\in V$ a  \emph{vertex potential} $\phi_v:Q\to \mathbb{R}$  and each $e\in E$ an \emph{edge potential} $\phi_e:Q^2\to \mathbb{R}$, where $\phi_e$ is symmetric.

A \emph{configuration} $\sigma \in Q^V$ maps each vertex $v\in V$ to a spin state in $Q$, so that each vertex can be interpreted as a variable.
And the \emph{Hamiltonian} of a configuration $\sigma \in Q^V$ is defined as:
\begin{align*}
H(\sigma) \triangleq \sum_{v \in V}\phi_v(\sigma_v) + \sum_{e=\{u,v\}\in E}\phi_e(\sigma_u,\sigma_v). 
\end{align*}
This defines the \emph{Gibbs distribution} $\mu_{\I}$, which is a probability distribution over $Q^V$ such that
 \begin{align*}
\forall\sigma\in Q^V,\quad \mu_{\I}(\sigma)  = \frac{1}{Z}\exp(H(\sigma)),  
\end{align*}
where the normalizing factor $Z \triangleq \sum_{\sigma \in Q^V}\exp(H(\sigma))$ is called the \emph{partition function}.

The Gibbs measure $\mu(\sigma)$ can be $0$ as the  functions $\phi_v, \phi_e$ can take the value $-\infty$. A configuration $\sigma$ is called \emph{feasible} if $\mu(\sigma) > 0$. 
To trivialize the problem of constructing a feasible configuration, we further assume the following natural condition for the MRF instances considered in this paper:\footnote{This condition guarantees that the marginal probabilities are always well-defined, and the problem of constructing a feasible configuration $\sigma$, where $\mu_{\I}(\sigma)>0$, is trivial. The condition holds for all MRFs with soft constraints, or with hard constraints where there is a permissive spin, e.g.~the hardcore model. For MRFs with truly repulsive hard constraints such as proper $q$-coloring, the condition may translate to the  condition $q\ge \Delta+1$  where $\Delta$ is the maximum degree of graph $G$, which is necessary for the irreducibility of local Markov chains for $q$-colorings.}
\begin{align}
\label{eq-assume-MRF}
\forall\, v \in V,\,\, \forall \sigma \in Q^{\Gamma_G(v)}:\quad  \sum_{c \in Q}\exp\left(\phi_v(c)+ \sum_{u \in \Gamma_v}\phi_{uv}(\sigma_u,c)\right) > 0.
\end{align}
where $\Gamma_G(v)\triangleq\{u\in V\mid \{u,v\}\in E\}$ denotes the neighborhood of $v$ in graph $G=(V,E)$.

%
Some well studied typical MRFs include:
\begin{itemize}
\item \emph{Ising model}: The domain of each spin is $Q=\{-1,+1\}$. Each edge $e\in E$ is associated with a \emph{temperature} $\beta_e\in\mathbb{R}$; and each vertex $v\in V$ is associated with a \emph{local field} $h_v\in\mathbb{R}$. For each configuration $\sigma\in\{-1,+1\}^V$, $\mu_{\I}(\sigma)\propto\exp\left(\sum_{\{u,v\}\in E}\beta_e \sigma_u\sigma_v+\sum_{v\in V}h_v\sigma_v\right)$.

%
\item \emph{Hardcore model}: The domain is $Q=\{0,1\}$. 
Each configuration $\sigma\in Q^V$ indicates an independent set in $G=(V,E)$, and $\mu_{\I}(\sigma)\propto \lambda^{\norm{\sigma}}$, where $\lambda > 0$ is the \emph{fugacity} parameter.
\item \emph{proper $q$-coloring}: 
uniform distribution over all proper $q$-colorings of $G=(V,E)$.
\end{itemize}

\subsection{Probabilistic inference and sampling} 
In graphical models, the task of probabilistic inference is to derive the probabilities regarding one or more random variables of the model.
Abstractly, this is described by a function $\*\theta: \mathfrak{M} \rightarrow \mathbb{R}^K$ that 
maps each graphical model $\I \in \mathfrak{M}$ to a target $K$-dimensional probability vector,
where $\mathfrak{M}$ is the class of  graphical models containing the random variables we are interested in and the $K$-dimensional vector describes the probabilities we want to derive.
Given $\boldsymbol{\theta}(\cdot)$ and an MRF instance $\+I \in \mathfrak{M}$, the inference problem asks to estimate the probability vector $\*\theta(\I)$. 

Here are some fundamental inference problems~\cite{wainwright2008graphical} for MRF instances.
Let $\I=(V,E,Q,\Phi)$ be an MRF instance and $A,B\subseteq V$ two disjoint sets where $A \uplus B \subseteq V$. 
\begin{itemize}
\item \emph{Marginal inference}: 
 estimate the marginal distribution $\mu_{A,\I}(\cdot) $ of the variables in $A$, where
 \begin{align*}
 \forall \sigma_A \in Q^A,\quad
\mu_{A,\I}(\sigma_A)  \triangleq \sum_{\tau \in Q^{V\setminus A }}\mu_{\I}(\sigma_A, \tau).
\end{align*}

\item \emph{Posterior inference}: 
given any $\tau_B \in Q^B$, estimate the posterior distribution $\mu_{A,\I}(\cdot \mid \tau_B)$ for the variables in $A$, where
 \begin{align*}
  \forall \sigma_A \in Q^A,\quad
\mu_{A,\I}(\sigma_A\mid \tau_B)  \triangleq \frac{\mu_{A \cup B,\I}(\sigma_A, \tau_B)}{\mu_{B,\I}(\tau_B)}.
\end{align*}

\item \emph{Maximum-a-posteriori (MAP) inference}: 
find the maximum-a-posteriori (MAP) probabilities $P_{A,\I}^{\ast}(\cdot)$
for the configurations over $A$, where 
\begin{align*}
\forall \sigma_A \in Q^A,\quad
P^{\ast}_{A,\I}(\sigma_A)  \triangleq \max_{\tau_B \in Q^B} \mu_{A \cup B,\I}(\sigma_A, \tau_B).
\end{align*}
\end{itemize}
\noindent
All these fundamental inference problems can be described abstractly by a function $\*\theta: \mathfrak{M} \rightarrow \mathbb{R}^K$.
For instances, for marginal inference, 
$\mathfrak{M}$ contains all MRF instances where $A$ is a subset of the vertices, 
$K = \abs{Q}^{|A|}$, and $\*\theta(\I) = (\mu_{A,\I}(\sigma_A))_{\sigma_{A}\in Q^A}$; and for posterior or MAP inference, $\mathfrak{M}$ contains all MRF instances where $A\uplus B$ is a subset of the vertices, $K = \abs{Q}^{|A|}$ and $\*\theta(\I) = (\mu_{A,\I}(\sigma_A\mid \tau_B))_{\sigma_{A}\in Q^A}$ (for posterior inference) or $\*\theta(\I) = (P^{\ast}_{A,\I}(\sigma_A))_{\sigma_{A}\in Q^A}$ (for MAP inference).


One canonical approach for probabilistic inference is by sampling: 
sufficiently many independent samples are drawn (approximately) from the Gibbs distribution of the MRF instance and an estimate of the target probabilities is calculated from these samples.
%
%
Given a probabilistic inference problem $\*\theta(\cdot)$, we use $\+E_{\*\theta}(\cdot)$  to denote an estimating function that approximates $\*\theta(\I)$ using independent samples drawn approximately from $\mu_{\I}$.
For the aforementioned problems of marginal, posterior and MAP inferences, such estimating function $\+E_{\*\theta}(\cdot)$ simply counts the frequency of the samples that satisfy certain properties.

The sampling cost of an estimator is captured in two aspects: the number of samples it uses and the accuracy of each individual sample it requires. 


\begin{definition}[\textbf{$(N,\epsilon)$-estimator for $\*\theta$}]
\label{condition-consistent-estimator}
Let $\*\theta: \mathfrak{M} \to \mathbb{R}^K$ be a probabilistic inference problem and $\+E_{\*\theta}(\cdot)$ an estimating function for $\*\theta(\cdot)$ that for each instance $\I=(V, E, Q, \Phi) \in\mathfrak{M}$, maps samples in $Q^V$ to an estimate of $\*\theta(\I)$.
%
Let $N: \mathbb{N}^+\to\mathbb{N}^+$  and $\epsilon: \mathbb{N}^+ \to (0,1)$. 
For any instance $\I=(V,E,Q,\Phi) \in \mathfrak{M}$ where $n = |V|$,
the random variable $\+E_{\*\theta}(\X^{(1)},\ldots,\X^{(N(n))})$ is said to be an \emph{$(N,\epsilon$)-estimator} for $\*\theta(\I)$ 
if $\X^{(1)},\ldots,\X^{(N(n))} \in Q^V$ are $N(n)$ independent samples drawn approximately from $\mu_\I$ such that 
$\DTV{\X^{(j)}}{\mu_{\I}}\leq \epsilon(n)$ for all $1\le j\le N(n)$.
\end{definition}

In \Cref{condition-consistent-estimator}, an estimator is viewed as a black-box algorithm specified by two functions $N$ and $\epsilon$.
Usually, the estimator is more accurate if more independent samples are drawn and each sample provides a higher level of accuracy.  
Thus, one can choose some large $N$ and small $\epsilon$ to achieve a desired quality of estimate.

\subsection{Dynamic inference problem}
We consider the inference problem where the input graphical model is changed dynamically:
at each step, the current MRF instance $\I = (V,E,Q,\Phi)$ is updated to a new instance $\I'=(V',E',Q,\Phi')$.
We consider general update operations for MRFs that can change both the \textbf{underlying graph} and \textbf{all edge/vertex potentials} simultaneously, where the update request is made by a \emph{non-adaptive adversary} independently of the randomness used by the inference algorithm.
Such updates are general enough and
cover many applications, e.g.~analyses of time series network data~\cite{carvalho2007dynamic,queen1993multiregression,feng2019dynamic,anacleto2017dynamic}, and learning algorithms for graphical  models~\cite{hinton2012practical,wainwright2008graphical}.
%

The difference between the original and the updated instances is measured as follows.

\begin{definition}[\textbf{difference between MRF instances}]
The difference between two MRF instances $\I = (V,E,Q,\Phi)$ and $\I'=(V',E',Q,\Phi')$, where $\Phi=(\phi_a)_{a \in V \cup E}$ and $\Phi'=(\phi'_a)_{a \in V' \cup E'}$, is defined as
 \begin{equation}
 \label{eq-def-dg-dh}
 d(\I,\I') \triangleq\sum_{v\in V\cap V^{'}}\norm{\phi_v-\phi'_v}_1 + \sum_{e\in E\cap E^{'}}\norm{\phi_e- \phi'_e}_1
 +
 |V \oplus V' | + |E\oplus E'|,
\end{equation}
where $A \oplus B = (A \setminus B ) \cup (B \setminus A)$ stands for the symmetric difference between two sets $A$ and $B$, $\norm{\phi_v-\phi'_v}_1 \triangleq \sum_{c \in Q}\abs{\phi_v(c) - \phi'_v(c)}$, and $\norm{\phi_e- \phi'_e}_1 \triangleq \sum_{c,c'\in Q}\abs{\phi_e(c,c')-\phi'_e(c,c')}$.
\end{definition}

Given a probability vector specified by the function $\*\theta:\mathfrak{M} \to \mathbb{R}^K$, the \emph{dynamic inference problem} asks to maintain an estimator $\hat{\*\theta}(\I)$ of $\boldsymbol{\theta}(\I)$ for the current MRF instance $\I = (V, E, Q, \Phi) \in \mathfrak{M}$, with a data structure, such that when $\I$ is updated to $\I'=(V',E',Q,\Phi')\in \mathfrak{M}$, 
the algorithm updates $\hat{\*\theta}(\I)$ to an estimator $\hat{\boldsymbol{\theta}}(\I')$ of the new vector $\boldsymbol{\theta}(\I')$, or equivalently, outputs the difference between the estimators $\hat{\boldsymbol{\theta}}(\I)$ and $\hat{\boldsymbol{\theta}}(\I')$.

It is desirable to have the dynamic inference algorithm which maintains
an $(N,\epsilon)$-estimator for $\*\theta(\I)$ for the current instance $\I$.
However, the dynamic algorithm cannot be efficient 
if $N(n)$ and $\epsilon(n)$ change drastically with $n$ 
(so that significantly more samples or substantially more accurate samples may be needed when a new vertex is added), 
or if recalculating the estimating function $\+E_{\*\theta}(\cdot)$ itself is expensive.
We introduce a notion of \emph{dynamical efficiency} for the estimators that are suitable for dynamic inference.

\begin{definition}[\textbf{dynamical efficiency}]
\label{definition-estimator-dynamic}
Let $N: \mathbb{N}^+\to\mathbb{N}^+$ and $\epsilon: \mathbb{N}^+ \to (0,1)$. 
Let $\+E(\cdot)$ be an estimating function for some $K$-dimensional probability vector of MRF instances.
An tuple $(N,\epsilon,\+E)$ is said to be \emph{dynamically efficient} if it satisfies:
\begin{itemize}
\item \textbf{(bounded difference)} there exist constants $C_{1},C_{2} > 0$ such that for any $n \in \mathbb{N}^+$, 
\begin{align*}
\abs{N(n+1)-N(n)} \leq \frac{C_{1} \cdot N(n)}{n}
\quad\text{ and }\quad
\abs{\epsilon(n+1)-\epsilon(n)} \leq \frac{C_{2} \cdot \epsilon(n)}{n};
\end{align*}
\item \textbf{(small incremental cost)} 
there is a deterministic algorithm that maintains $\+E(\X^{(1)},\ldots, \X^{(m)})$ using $(m n+K)\cdot \mathrm{polylog}(m n)$ bits where $\X^{(1)},\ldots,\X^{(m)} \in Q^V$ and $n=|V|$, 
such that  when $\X^{(1)},\ldots,\X^{(m)} \in Q^V$ are updated to  $\Y^{(1)},\ldots, \Y^{(m')} \in Q^{V'}$, where $n'=|V'|$, the algorithm updates $\+E(\X^{(1)},\ldots, \X^{(m)})$ to  $\+E(\Y^{(1)},\ldots ,\Y^{(m')})$ within time cost $\+D\cdot\mathrm{polylog}(mm'nn') +O(m+m')$, where $\+D$ is the size of the difference between two sample sequences defined as:
\begin{align}
\label{eq-def-diff-sample}
\+D \triangleq \sum_{i \leq \max\{m,m'\}} \sum_{v \in V\cup V'}\one{\X^{(i)}(v) \neq \Y^{(i)}(v)},
\end{align}
where an unassigned $\X^{(i)}(v)$ or $\Y^{(i)}(v)$ is not equal to any assigned spin.
\end{itemize}
\end{definition}

The dynamic efficiency basically asks $N(\cdot), \epsilon(\cdot)$, and $\+E(\cdot)$  to have some sort of ``Lipschitz'' properties.
To satisfy the bounded difference condition, $N(n)$ and $1/\epsilon(n)$ are necessarily polynomially bounded, and they can be any constant, polylogarithmic, or polynomial functions, or multiplications of such functions.
The condition with  small incremental cost also holds very commonly. 
In particular, it is satisfied by the estimating functions for all the aforementioned problems for the marginal, posterior and MAP inferences as long as the sets of variables have sizes $\abs{A},\abs{B} = O(\log n)$.
We remark that the $O(\log n)$ upper bound is somehow necessary for the efficiency of inference, because otherwise the dimension of $\*\theta(\I)$ itself (which is at least $q^{|A|}$) becomes super-polynomial in $n$.

\section{Main results}\label{sec:main-results}

Let $\I = (V, E, Q,\Phi)$ be an MRF instance, where $G=(V,E)$. Let $\Gamma_G(v)$ denote the neighborhood of $v$ in $G$.  
For any vertex $v \in V$ and any configuration $\sigma \in Q^{\Gamma_G(v)}$, we use $\mu_{v,\I}^\sigma(\cdot)=\mu_{v,\I}(\cdot \mid \sigma)$ to denote the marginal distribution on $v$ conditional on $\sigma$:
\begin{align*}
\forall c \in Q:\quad \mu_{v,\I}^\sigma(c) = \mu_{v,\I}(c \mid \sigma) \triangleq  \frac{\exp\left(\phi_v(c)+ \sum_{u \in \Gamma_G(v)}\phi_{uv}(\sigma_u,c)\right)}{\sum_{a \in Q}\exp\left(\phi_v(a)+ \sum_{u \in \Gamma_G(v)}\phi_{uv}(\sigma_u,a)\right)}.
\end{align*}
Due to the assumption in~\eqref{eq-assume-MRF}, the marginal distribution is always well-defined.
The following condition is the \emph{Dobrushin-Shlosman condition}~\cite{dobrushin1985completely,dobrushin1985constructive,dobrushin1987completely, hayes2006simple,dyer2008dobrushin}.
%
%

%
%
%



\begin{condition}[\textbf{Dobrushin-Shlosman condition}]
\label{condition-Dobrushin}
Let $\mathcal{I} = (V, E, Q, \Phi)$ be an MRF instance with Gibbs distribution $\mu=\mu_\I$.
Let $A_{\I} \in \mathbb{R}_{\geq 0}^{V \times V}$ be the \emph{influence matrix} which is defined as 
\begin{align*}
A_{\I}(u,v) \triangleq 
\begin{cases}
\max_{(\sigma,\tau) \in B_{u,v}}\DTV{\mu^\sigma_{v}}{\mu_{v}^\tau},
& \{u,v\}\in E,\\
0 & \{u,v\}\not\in E,
\end{cases}
\end{align*}
where the maximum is taken over the set $B_{u,v}$ of all  $(\sigma,\tau)\in Q^{\Gamma_G(v)}\times Q^{\Gamma_G(v)}$ that differ only at $u$, and $\DTV{\mu^\sigma_{v}}{\mu_{v}^\tau} \triangleq \frac{1}{2}\sum_{c \in Q}\abs{\mu^\sigma_v(c) - \mu^\tau_v(c)}$ is the total variation distance between $\mu^\sigma_{v}$ and $\mu^\tau_{v}$.
An MRF instance $\mathcal{I}$ is said to satisfy the \emph{Dobrushin-Shlosman condition} if there is a constant $\delta>0$ such that 
\begin{align*}
\max_{u \in V} \sum_{v \in V} A_{\I}(u, v) \leq 1 - \delta. 
\end{align*}
\end{condition}

Our main theorem assumes the following setup:
Let $\*\theta: \mathfrak{M} \to \mathbb{R}^K$ be a probabilistic inference problem that maps each  MRF instance in $\mathfrak{M}$ to a $K$-dimensional probability vector,  and let $\+E_{\*\theta}$ be its estimating function.
Let $N:\mathbb{N}^+ \to \mathbb{N}^+$ and $\epsilon:\mathbb{N}^+ \to (0,1)$.
We use $\I = (V, E, Q, \Phi) \in \mathfrak{M}$, where $n=|V|$,  to denote the current  instance and $\I'= (V', E', Q, \Phi')\in\mathfrak{M}$, where $n'=|V'|$, to denote  the updated instance. 

\begin{theorem}[\textbf{dynamic inference algorithm}]
\label{theorem-infer-MRF}
Assume that $(N,\epsilon, \+E_{\*\theta})$ is dynamically efficient, both $\I$ and $\I'$ satisfy the Dobrushin-Shlosman condition, and $d(\I,\I') \leq L = o(n)$.

There is an algorithm 
that maintains an $(N,\epsilon)$-estimator $\hat{\boldsymbol{\theta}}(\I)$ of the probability vector $\boldsymbol{\theta}(\I)$ for the current MRF instance $\I$,  
 using   $\widetilde{O}\left(nN(n)+K\right)$ bits,
such that when $\I$ is updated  to $\I'$, the algorithm updates $\hat{\boldsymbol{\theta}}(\I)$ to 
an $(N,\epsilon)$-estimator $\hat{\boldsymbol{\theta}}(\I')$ of $\boldsymbol{\theta}(\I')$ for the new instance $\I'$,
within expected time cost
$$\widetilde{O}\left(\Delta^2 L N(n) +\Delta n \right),$$
where $\widetilde{O}(\cdot)$ hides a $\mathrm{polylog}(n)$ factor,
$\Delta= \max\{\Delta_G,\Delta_{G'}\}$, where $\Delta_G$ and $\Delta_{G'}$ denote the maximum degree of $G=(V,E)$ and $G'=(V',E')$ respectively.  
\end{theorem}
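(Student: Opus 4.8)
The plan is to reduce dynamic inference to a \emph{dynamic sampling} problem and then dynamize the Gibbs sampler (single-site dynamics). To maintain the estimator $\hat{\*\theta}(\I)=\+E_{\*\theta}(\X^{(1)},\dots,\X^{(N(n))})$, I would keep $N(n)$ independent approximate samples $\X^{(1)},\dots,\X^{(N(n))}\in Q^V$, each within total variation distance $\epsilon(n)$ of $\mu_{\I}$, alongside the data structure that stores the value of $\+E_{\*\theta}$ on these samples guaranteed by the dynamical efficiency of $(N,\epsilon,\+E_{\*\theta})$. When $\I$ is updated to $\I'$, the work splits into three pieces: (i) transform each maintained sample into an approximate sample for $\mu_{\I'}$; (ii) adjust the sample count from $N(n)$ to $N(n')$, which by the bounded-difference condition means adding or deleting only $O(N(n)/n)$ samples; and (iii) refresh the stored estimator, which by the small-incremental-cost condition costs $\+D\cdot\mathrm{polylog}+O(N(n)+N(n'))$, where $\+D$ from~\eqref{eq-def-diff-sample} is the total Hamming difference between the old and new sample sequences. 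Hence everything reduces to performing (i) with a provably small $\+D$ and in sublinear time.

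So the crux is a dynamic sampler whose output changes little when $\I$ changes little. For each maintained sample I would couple the Gibbs chain run on $\I$ with the Gibbs chain run on $\I'$, letting them share randomness so that the two executions differ only at a small, localized set of vertices; the updated sample is then obtained from the stored one by editing only the disagreeing coordinates. This is exactly where the Dobrushin-Shlosman condition enters. Decomposing $d(\I,\I')=L$ into $L$ elementary modifications (a single edge/vertex insertion or deletion, or a unit change to one potential), each elementary modification perturbs the conditional marginals $\mu_{v,\I}^\sigma$ only on an $O(\Delta)$-sized neighborhood, seeding a discrepancy there; the contraction $\max_{u}\sum_{v}A_{\I}(u,v)\le 1-\delta$ then forces the expected number of further vertices ever resampled because of that seed to form a subcritical branching process of total expected size $O(1/\delta)$. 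Summing over the $L$ modifications, with each disagreeing resample costing $\widetilde{O}(\Delta)$ to recompute a conditional marginal over a $\Delta$-neighborhood, yields per-sample expected work $\widetilde{O}(\Delta^2 L)$ and hence $\widetilde{O}(\Delta^2 L N(n))$ over all samples, the two factors of $\Delta$ coming respectively from the neighborhood affected by a modification and from the cost of resampling one vertex.

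To realize this coupling in sublinear time, rather than re-simulating the $\widetilde{O}(n)$-step execution of each chain, I would store each execution as a time-indexed log inside balanced-tree data structures supporting insertion, deletion, rank, and select in $\mathrm{polylog}$ time, so that exactly those update steps whose outcomes change under $\I'$ can be located and edited, while the overwhelming majority of steps---whose coupled outcomes are identical---are left untouched. The per-sample cost is then $\widetilde{O}(\Delta)$ times the expected discrepancy plus a lower-order additive overhead; across all samples, and together with the per-update maintenance of the graph adjacency, the Hamiltonian, and the finalization of newly introduced vertices, this gives expected time $\widetilde{O}(\Delta^2 L N(n)+\Delta n)$. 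The space is $\widetilde{O}(nN(n)+K)$, since we store $N(n)$ configurations over $n$ vertices (with $\mathrm{polylog}$ overhead for the logs) plus the $K$-dimensional estimator.

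I expect the main obstacle to be establishing the \emph{algorithmic Lipschitz} property in both of its guises at once: on the probabilistic side, proving through a path/coupling argument under the Dobrushin-Shlosman condition that the expected discrepancy $\E{\+D}$ is genuinely proportional to $L$ (and independent of $n$), so that the spread of each seed stays subcritical; and on the algorithmic side, designing the data structures so that this coupling can be \emph{executed} by touching only the $O(\E{\+D})$ disagreeing coordinates rather than the full $\widetilde{O}(n)$-length execution of every chain. Reconciling a coupling defined over the entire space--time of the chain with a data structure that must discover its few discrepancies on the fly is the technical heart of the argument.
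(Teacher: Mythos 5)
Your proposal follows essentially the same route as the paper: reduce inference to maintaining $N(n)$ approximate samples plus the black-box estimator update from the dynamical-efficiency condition, realize the sample updates by coupling the Gibbs chain on $\I$ with the chain on $\I'$ under shared randomness, use the Dobrushin-Shlosman contraction to keep the expected set of coupling discrepancies proportional to $L$ (up to $\mathrm{polylog}$ factors, the paper's path-coupling bound $\E{|\D_t|}=O(L/\delta)$ is your subcritical branching process), and store each chain's execution log in balanced-tree/order-statistic structures so that only the discrepant transitions are located and edited. The paper additionally formalizes the step you gesture at with "elementary modifications" — a rejection-sampling filter $\Isingset$ built from per-vertex upper bounds $\pup_v$ on the coupling-failure probability, which is what makes the simultaneous change of all potentials tractable — but this is an implementation of the same idea, and your complexity accounting ($\Delta^2 L N(n)$ from discrepancies times neighborhood resampling cost, plus $\Delta n$ for editing the instance) matches the paper's.
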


\noindent
Note that the extra $O(\Delta n)$ cost is necessary for editing the current MRF instance $\I$ to $\I'$.

Typically,  the difference between two MRF instances $\I,\I'$ is small\footnote{In multivariate time-series data analysis, the MRF instances of two sequential times are similar.
In the iterative algorithms for learning graphical models,
the difference between two sequential MRF instances generated by gradient descent are bounded 
to prevent oscillations.
Specifically, the difference is very small when the iterative algorithm approaches to the convergence state~\cite{hinton2012practical,wainwright2008graphical}.}, and the underlying graphs are sparse~\cite{sa2016ensuring}
, that is, $L,\Delta \leq \mathrm{polylog}(n)$. 
In such cases, our algorithm updates the estimator within time cost $\widetilde{O}(N(n) + n)$, which significantly outperforms static sampling-based inference algorithms that require time cost $\Omega(n'N(n')) = \Omega(nN(n))$
for redrawing all $N(n')$ independent samples.

\bigskip
\paragraph{Dynamic sampling}
The core of our dynamic inference algorithm is a dynamic sampling algorithm: 
Assuming the Dobrushin-Shlosman condition, the algorithm can maintain a sequence of $N(n)$ independent samples $\X^{(1)},\ldots,\X^{(N(n))} \in Q^V$ that are $\epsilon(n)$-close to $\mu_{\I}$ in total variation distance, and when $\I$ is updated to $\I'$ with difference $d(\I,\I')\le L=o(n)$, the algorithm can update the maintained samples to $N(n')$ independent samples $\Y^{(1)},\ldots,\Y^{(N(n'))}\in Q^{V'}$ that are $\epsilon(n')$-close to $\mu_{\I'}$ in total variation distance, using a time cost $\widetilde{O}\left(\Delta^2L N(n) +\Delta n \right)$ in expectation.
This shows an ``algorithmic Lipschitz'' condition holds for sampling from Gibbs distributions: when the MRF changes insignificantly, a population of samples can be modified to reflect the new distribution, with cost proportional to the difference on MRF.
We show that such property is guaranteed by the Dobrushin-Shlosman condition.
This {dynamic sampling} algorithm is formally described in 
Theorem~\ref{theorem-sample-MRF}
and is of independent interest~\cite{feng2019dynamic}.

\bigskip
\paragraph{Applications on specific models}
On specific models, 
we have the following results, where $\delta>0$ is an arbitrary constant. 
\begin{table}[h]
\begin{tabular}{|C{1.9cm}|C{2.9cm}|C{3.1cm}|C{5cm}|}
\hline  {\small model}& regime & space cost & time cost for each update \\
\hline 
Ising  & $\mathrm{e}^{-2|\beta|} \geq 1 - \frac{2-\delta}{\Delta + 1}$ & $\widetilde{O}\left(nN(n)+K\right)$ & $\widetilde{O}\left(\Delta^2L N(n) +\Delta n \right)$ \\
\hline 
hardcore  & $\lambda\le\frac{2-\delta}{\Delta-2}$ & $\widetilde{O}\left(nN(n)+K\right)$ & $\widetilde{O}\left(\Delta^3LN(n)  +\Delta n \right)$ \\ 
\hline 
$q$-coloring & $q\ge(2+\delta)\Delta$ & $\widetilde{O}\left(nN(n)+K\right)$ &  $\widetilde{O}\left(\Delta^2LN(n)  +\Delta n \right)$\\
\hline
\end{tabular}
\caption{Dynamic inference for specific models.}\label{table:results-specific-models}
\end{table}

\noindent
The results for Ising model and $q$-coloring are corollaries of \Cref{theorem-infer-MRF}. The regime for hardcore model is better than the Dobrushin-Shlosman condition (which is $\lambda \leq \frac{1-\delta}{\Delta - 1}$), because we use the coupling introduced by Vigoda~\cite{vigoda1999fast} to analyze the algorithm.

\section{Preliminaries}\label{sec:preliminary}

%
\paragraph{Total variation distance and coupling}
Let $\mu$ and $\nu$ be two distributions over $\Omega$. The \emph{total variation distance} between $\mu$ and $\nu$ is defined as
\begin{align*}
\DTV{\mu}{\nu} \triangleq \frac{1}{2}\sum_{x \in \Omega}\abs{\mu(x)-\nu(x)}.	
\end{align*}
A \emph{coupling} of  $\mu$ and $\nu$ is a joint distribution $(X, Y) \in \Omega \times \Omega$ such that marginal distribution of $X$ is $\mu$ and the marginal distribution of $Y$ is $\nu$. The following coupling lemma is well-known.

\begin{proposition}[\textbf{coupling lemma}]
For any coupling $(X,Y)$ of $\mu$ and $\nu$, it holds that
\begin{align*}
\Pr[X \neq Y] \geq \DTV{\mu}{\nu}.	
\end{align*}
Furthermore, there is an \emph{optimal coupling} that achieves equality.
\end{proposition}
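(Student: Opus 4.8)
The plan is to prove the inequality and the existence of an optimal coupling separately, both by reasoning about the ``overlap mass'' $\sum_{x\in\Omega}\min(\mu(x),\nu(x))$.

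For the inequality, I would fix an arbitrary coupling $(X,Y)$ and observe that for every $x\in\Omega$ the event $\{X=Y=x\}$ is contained in both $\{X=x\}$ and $\{Y=x\}$, so $\Pr[X=Y=x]\le\min(\mu(x),\nu(x))$. Summing over $x$ gives $\Pr[X=Y]\le\sum_{x}\min(\mu(x),\nu(x))$. The next step is the elementary identity $\sum_{x}\min(\mu(x),\nu(x))=1-\DTV{\mu}{\nu}$, which follows by splitting $\Omega$ into the part where $\mu\ge\nu$ and the part where $\mu<\nu$ and comparing against $\DTV{\mu}{\nu}=\frac{1}{2}\sum_x\abs{\mu(x)-\nu(x)}=\sum_{x:\,\mu(x)>\nu(x)}(\mu(x)-\nu(x))$. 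Combining the two yields $\Pr[X=Y]\le 1-\DTV{\mu}{\nu}$, i.e. $\Pr[X\neq Y]\ge\DTV{\mu}{\nu}$, which is the first claim.

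For the ``furthermore'' part, I would exhibit a coupling attaining equality. Write $p\triangleq 1-\DTV{\mu}{\nu}=\sum_x\min(\mu(x),\nu(x))$. The construction flips a coin of bias $p$: with probability $p$ it draws a common value $Z$ from the normalized overlap $\min(\mu(x),\nu(x))/p$ and sets $X=Y=Z$; with probability $1-p$ it draws $X$ and $Y$ independently from the normalized positive parts $(\mu(x)-\nu(x))^+/\DTV{\mu}{\nu}$ and $(\nu(x)-\mu(x))^+/\DTV{\mu}{\nu}$ respectively (if $\DTV{\mu}{\nu}=0$ the second branch is vacuous and $\mu=\nu$). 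I would then verify that the marginal of $X$ equals $\mu$ using the pointwise identity $\min(\mu(x),\nu(x))+(\mu(x)-\nu(x))^+=\mu(x)$, and symmetrically that the marginal of $Y$ equals $\nu$.

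The one point requiring care — the ``main obstacle,'' such as it is for this classical statement — is confirming that the constructed pair disagrees exactly on the $(1-p)$-probability branch. This holds because the two residual distributions in that branch have disjoint supports: one is supported where $\mu>\nu$ and the other where $\nu>\mu$, so conditioned on that branch we always have $X\neq Y$, whereas on the complementary branch $X=Y$ by construction. Hence $\Pr[X\neq Y]=1-p=\DTV{\mu}{\nu}$, which matches the lower bound and completes the proof.
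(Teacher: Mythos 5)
Your proof is correct: the inequality via $\Pr[X=Y]\le\sum_{x}\min(\mu(x),\nu(x))=1-\DTV{\mu}{\nu}$, and the explicit maximal coupling built from the normalized overlap and the two residual distributions with disjoint supports, is exactly the standard argument, and you handled the degenerate case $\DTV{\mu}{\nu}=0$ and the marginal verification correctly. The paper itself states this coupling lemma as a well-known fact and gives no proof, so there is nothing to compare against; your write-up is a complete and valid proof of the statement.
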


\paragraph{Local neighborhood}
Let $G = (V,E)$ be a graph.
For any vertex $v \in V$, let $\Gamma_G(v)\triangleq\{u\in V\mid \{u,v\}\in E\}$ denote the neighborhood of $v$, and $\Gamma^+_G(v) \triangleq \Gamma_G(v) \cup \{v\}$ the inclusive neighborhood of $v$.
We simply write $\Gamma_v=\Gamma(v)=\Gamma_G(v)$ and $\Gamma_v^+=\Gamma^+(v)=\Gamma_G^+(v)$ for short when $G$ is clear in the context.
We use $\Delta=\Delta_G\triangleq\max_{v\in V}|\Gamma_v|$ to denote the maximum degree of graph $G$.

A notion of \textbf{local neighborhood for MRF} is frequently used.
Let $\I=(V,E,Q,\Phi)$ be an MRF instance.
For $v\in V$, we denote by $\I_v\triangleq\I[\Gamma_v^+]$ the restriction of $\I$ on the inclusive neighborhood $\Gamma_v^+$ of $v$, i.e.~$\I_v=(\Gamma^+_v,E_v,Q,\Phi_v)$, where $E_v=\{\{u,v\}\in E\}$ and $\Phi_v=(\phi_a)_{a\in \Gamma_v^+\cup E_v}$.

\vspace{1em}
\paragraph{Gibbs sampling}
The \emph{Gibbs sampling} (a.k.a.~\emph{heat-bath}, \emph{Glauber dynamics}), is a classic Markov chain for sampling from Gibbs distributions.
Let $\I = (V, E, Q, \Phi)$ be an MRF instance and  $\mu=\mu_{\I}$ its Gibbs distribution.
The chain of Gibbs sampling (Algorithm~\ref{alg-Gibbs}) is on the space $\Omega \triangleq Q^V$, and has the stationary distribution $\mu_\I$~\cite[Chapter~3]{levin2017markov}. 

\begin{algorithm}[ht]
\SetKwInOut{Initialization}{Initialization}
\Initialization{an initial state $\X_0 \in \Omega$ (not necessarily feasible);}
\For{$t=1,2,\ldots,T$}{
  pick $v_t \in V$ uniformly at random\;
  draw a random value $c\in Q$ from the marginal distribution $\mu_{v_t}(\cdot\mid X_{t-1}(\Gamma_{v_t}))$\;\label{step-resample}
  $X_t(v_t)\gets c$ and $X_t(u) \gets X_{t-1}(u)$ for all $u \in V \setminus \{v_t\}$\;
}
\caption{Gibbs sampling}\label{alg-Gibbs}
\end{algorithm}
%
\paragraph{\textbf{Marginal distributions}}
Here $\mu_{v}(\cdot \mid \sigma(\Gamma_v))=\mu_{v,\I}(\cdot \mid \sigma(\Gamma_v))$ denotes the marginal distribution at $v\in V$ conditioning on $\sigma(\Gamma_v)\in Q^{\Gamma_v}$, which is computed as:
\begin{align}
\label{eq-marginal}
\forall c \in Q:\quad \mu_{v}(c \mid \sigma(\Gamma_v)) = \frac{\phi_{v}(c) \prod_{u \in \Gamma_v}\phi_{uv}(\sigma_u, c)}{\sum_{c'\in Q}\phi_{v}(c') \prod_{u \in \Gamma_v}\phi_{uv}(\sigma_u, c') }.
\end{align}
Due to the assumption~\eqref{eq-assume-MRF}, this marginal distribution is always well defined, and its computation uses only the information of $\I_v$. 

%
%

\vspace{1em}
\paragraph{Coupling for mixing time}
Consider a chain $(\X_t)_{t=0}^\infty$ on space $\Omega$ with stationary distribution $\mu_\I$ for MRF instance $\I$.
The \emph{mixing rate} is defined as: for $\epsilon>0$,
\begin{align*}
\tau_{\mathsf{mix}}(\I, \epsilon) \triangleq \max_{\X_0}\min\left\{t \mid \DTV{\X_t}{\mu_{\I}} \leq \epsilon \right\},
\end{align*}
where $\DTV{\X_t}{\mu_\I}$ denotes the \emph{total variation distance} between $\mu_\I$ and the distribution of $\X_t$.

A coupling of a Markov chain is a joint process $(\X_t,\Y_t)_{t\geq 0}$ such that $(\X_t)_{t\geq 0}$ and $(\Y_t)_{t\geq 0}$ marginally follow the same transition rule as the original chain.
%
%
Consider the following type of couplings.
\begin{definition}[\textbf{one-step optimal coupling for Gibbs sampling}]
\label{def-one-step-coupling-static-Gibbs}
A coupling $(\X_t,\Y_t)_{t\geq 0}$ of Gibbs sampling on an MRF instance $\I= (V, E, Q, \Phi)$ is a \emph{one-step optimal coupling} if it is constructed as follows: 
For $t=1,2,\ldots$,
\begin{enumerate}
\item pick the same random $v_t \in V$, and  let $(X_t(u),Y_t(u)) \gets (X_{t-1}(u),Y_{t-1}(u))$ for all $u\neq v_t$;
\item sample $(X_t(v_t),Y_t(v_t))$ from an optimal coupling $D_{\opt,\I_{v_t}}^{\sigma,\tau}(\cdot,\cdot)$
of the marginal distributions $\mu_{v_t}(\cdot\mid \sigma)$ and $\mu_{v_t}(\cdot\mid \tau)$ where $\sigma=X_{t-1}(\Gamma_{v_t})$ and $\tau=Y_{t-1}(\Gamma_{v_t})$. 
\end{enumerate}
\end{definition}
\noindent
The coupling $D_{\opt,\I_{v_t}}^{\sigma,\tau}(\cdot,\cdot)$ is an \emph{optimal coupling} of $\mu_{v_t}(\cdot\mid \sigma)$ and $\mu_{v_t}(\cdot\mid \tau)$ that attains the maximum $\Pr[\boldsymbol{x}=\boldsymbol{y}]$ for all couplings $(\boldsymbol{x},\boldsymbol{y})$ of $\boldsymbol{x}\sim\mu_{v_t}(\cdot\mid \sigma)$ and $\boldsymbol{y}\sim\mu_{v_t}(\cdot\mid \tau)$. The coupling   $D_{\opt,\I_{v_t}}^{\sigma,\tau}(\cdot,\cdot)$  is determined by the local information $\I_v$ and $\sigma,\tau\in Q^{\mathrm{deg}(v)}$.

With such a coupling, we can establish the following relation between the Dobrushin-Shlosman condition and the rapid mixing of the Gibbs sampling~\cite{dobrushin1985completely,dobrushin1985constructive,dobrushin1987completely,bubley1997path, hayes2006simple,dyer2008dobrushin}.
\begin{proposition}[\cite{bubley1997path,hayes2006simple}]
\label{proposition-mixing-Gibbs}
Let $\I = (V, E, Q, \Phi)$ be an MRF instance with $n = |V|$, and $\Omega = Q^V$ the state space.
Let $H(\sigma,\tau) \triangleq \abs{\{v \in V \mid \sigma_v \neq \tau_v\}}$ denote the Hamming distance between $\sigma \in \Omega$ and $\tau \in \Omega$.
If $\I$ satisfies the Dobrushin-Shlosman condition (\Cref{condition-Dobrushin}) with constant $\delta > 0$, then the one-step optimal coupling $(\X_t,\Y_t)_{t\geq 0}$  for Gibbs sampling (\Cref{def-one-step-coupling-static-Gibbs}) satisfies
\begin{align*}
\forall\,\sigma, \tau \in \Omega:\quad \E{\,H(\X_t,\Y_t)\mid \X_{t-1} = \sigma \land \Y_{t-1}=\tau\,} \leq \left(1-\frac{\delta}{n}\right)\cdot H(\sigma, \tau),
\end{align*}
and hence the mixing rate of Gibbs sampling on $\I$ is bounded as
$\tau_{\mathsf{mix}}(\I, \epsilon) \leq \left\lceil\frac{n}{\delta}\log \frac{n}{\epsilon}\right\rceil$.
\end{proposition}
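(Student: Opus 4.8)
The plan is to establish the one-step contraction directly for the one-step optimal coupling of \Cref{def-one-step-coupling-static-Gibbs}, and then to iterate it together with the coupling lemma to read off the mixing rate. First I would fix $\sigma,\tau\in\Omega$, let $D\triangleq\{v\in V\mid\sigma_v\neq\tau_v\}$ be their disagreement set so that $H(\sigma,\tau)=|D|$, and condition on $\X_{t-1}=\sigma$ and $\Y_{t-1}=\tau$. Decomposing over the uniformly chosen vertex $v_t=v$, the crucial structural observation is that under this coupling only the resampled coordinate can change its agreement status: every $u\neq v$ retains $(X_t(u),Y_t(u))=(\sigma_u,\tau_u)$. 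Hence the signed change in Hamming distance caused by picking $v$ is exactly $\one{X_t(v)\neq Y_t(v)}-\one{v\in D}$, and since $(X_t(v),Y_t(v))$ is drawn from the optimal coupling $D^{\sigma,\tau}_{\opt,\I_{v}}$ of $\mu_{v}(\cdot\mid\sigma(\Gamma_v))$ and $\mu_{v}(\cdot\mid\tau(\Gamma_v))$, the coupling lemma gives $\Pr[X_t(v)\neq Y_t(v)]=\DTV{\mu^{\sigma}_v}{\mu^{\tau}_v}$. Averaging over $v$ yields the exact identity
\begin{align*}
\E{H(\X_t,\Y_t)\mid \X_{t-1}=\sigma\land\Y_{t-1}=\tau}=|D|+\frac{1}{n}\sum_{v\in V}\left(\DTV{\mu^{\sigma}_v}{\mu^{\tau}_v}-\one{v\in D}\right).
\end{align*}

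The key step is to bound each local total variation distance by the influences on $v$. Interpolating between $\sigma(\Gamma_v)$ and $\tau(\Gamma_v)$ through a sequence of boundary configurations that flip one disagreeing neighbor $u\in\Gamma_v\cap D$ at a time, the triangle inequality for total variation distance gives $\DTV{\mu^{\sigma}_v}{\mu^{\tau}_v}\le\sum_{u\in\Gamma_v\cap D}A_{\I}(u,v)$, since each single-flip step is bounded by the corresponding entry of the influence matrix. Substituting this bound, swapping the order of summation, and using $A_{\I}(u,v)=0$ for non-edges, the double sum collapses to $\sum_{u\in D}\sum_{v\in V}A_{\I}(u,v)$, which the Dobrushin--Shlosman condition caps at $(1-\delta)|D|$. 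The displayed identity then becomes $\E{H(\X_t,\Y_t)\mid\cdots}\le|D|+\frac{1}{n}\big((1-\delta)|D|-|D|\big)=\left(1-\frac{\delta}{n}\right)H(\sigma,\tau)$, which is the claimed contraction.

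Finally I would iterate. Taking expectations and applying the tower rule, the per-step contraction gives $\E{H(\X_t,\Y_t)}\le\left(1-\frac{\delta}{n}\right)^{t}\E{H(\X_0,\Y_0)}\le n\left(1-\frac{\delta}{n}\right)^{t}\le n\,\mathrm{e}^{-\delta t/n}$ for every initial pair, using $H\le n$. Since $H$ is a nonnegative integer, Markov's inequality gives $\Pr[\X_t\neq\Y_t]=\Pr[H(\X_t,\Y_t)\ge1]\le\E{H(\X_t,\Y_t)}$, and the coupling lemma converts this into a bound on total variation distance. Coupling an arbitrary start $\X_0$ with a stationary copy $\Y_0\sim\mu_{\I}$, so that $\Y_t\sim\mu_{\I}$ for all $t$, then yields $\DTV{\X_t}{\mu_{\I}}\le n\,\mathrm{e}^{-\delta t/n}$, which drops below $\epsilon$ once $t\ge\frac{n}{\delta}\log\frac{n}{\epsilon}$; taking the ceiling gives the stated mixing-rate bound uniformly over $\X_0$.

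I expect the main obstacle to be the reduction in the second paragraph, namely justifying that the optimal coupling lets me account for the Hamming distance one coordinate at a time and then controlling $\DTV{\mu^{\sigma}_v}{\mu^{\tau}_v}$ for configurations differing at several neighbors simultaneously. The subadditivity of total variation distance along a single-flip interpolation is precisely what makes the influence matrix $A_{\I}$ appear, and it is this passage from a per-coordinate estimate to the global row-sum controlled by Dobrushin--Shlosman that turns the one-step analysis into the $\left(1-\frac{\delta}{n}\right)$ contraction; the remaining iteration and the conversion to total variation distance are routine.
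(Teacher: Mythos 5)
Your proof is correct. The paper itself does not prove \Cref{proposition-mixing-Gibbs} — it imports it from the cited references — and your argument is exactly the standard one that those references give: the exact one-step identity $\E{H(\X_t,\Y_t)\mid \X_{t-1}=\sigma\land\Y_{t-1}=\tau}=|D|+\frac{1}{n}\sum_{v\in V}\bigl(\DTV{\mu^{\sigma}_v}{\mu^{\tau}_v}-\one{v\in D}\bigr)$ via the optimal coupling at the resampled coordinate, the single-flip interpolation bounding $\DTV{\mu^{\sigma}_v}{\mu^{\tau}_v}$ by $\sum_{u\in\Gamma_v\cap D}A_{\I}(u,v)$, the row-sum bound from \Cref{condition-Dobrushin} after swapping the order of summation, and finally iteration of the contraction plus Markov's inequality and the coupling lemma against a stationary copy (note that assumption~\eqref{eq-assume-MRF} guarantees the intermediate conditional marginals in your interpolation are all well defined, so that step is sound).
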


\section{Outlines of algorithm}
\label{sec:outline-algorithm}
Let $\*\theta: \mathfrak{M} \to \mathbb{R}^K$ be a probabilistic inference problem that maps each  MRF instance in $\mathfrak{M}$ to a $K$-dimensional probability vector,  and let $\+E_{\*\theta}$ be its estimating function.
%
Le $\I = (V, E, Q, \Phi) \in \mathfrak{M}$ be the current  instance, where $n=|V|$.
Our dynamic inference algorithm maintains a sequence of $N(n)$ independent samples $\*X^{(1)},\ldots,\*X^{(N(n))} \in Q^V$ which are $\epsilon(n)$-close to the Gibbs distribution $\mu_{\I}$ in total variation distance and 
an $(N,\epsilon)$-estimator $\hat{\*\theta}(\I)$ of $\*\theta(\I)$ such that 
\begin{align*}
\hat{\*\theta}(\I) = \+E_{\*\theta}(\X^{(1)}, \X^{(2)},\ldots,\X^{(N(n))}).	
\end{align*}
Upon an update request that modifies $\I$ to a new instance $\I'=(V',E',Q,\Phi')\in \mathfrak{M}$, where $n' = |V'|$, our algorithm does the followings:
\begin{itemize}
\item \emph{Update the sample sequence.} 
Update $\*X^{(1)},\ldots,\*X^{(N(n))}$ to a new sequence of $N(n')$ independent samples $\*Y^{(1)},\ldots,\*Y^{(N(n'))} \in Q^{V'}$ that are $\epsilon(n')$-close to $\mu_{\I'}$ in total variation distance, and output the difference between two sample sequences.
\item \emph{Update the estimator.} 
Given the difference between the two sample sequences, 
update $\hat{\*\theta}(\I)$ to $\hat{\*\theta}(\I') =\mathcal{E}_{\*\theta}(\Y^{(1)},\ldots,\Y^{(N(n'))})$ by accessing the oracle in \Cref{definition-estimator-dynamic}.
\end{itemize}
Obviously, the updated estimator $\hat{\*\theta}(\I')$ is an $(N,\epsilon)$-estimator for $\*\theta(\I')$.

Our main technical contribution is to give an algorithm that dynamically maintains a sequence of $N(n)$ independent samples for $\mu_\I$, while $\I$ itself is dynamically changing.
The dynamic sampling problem was recently introduced in~\cite{feng2019dynamic}. 
The dynamical sampling algorithm given there only handles update of a single vertex or edge and works only for graphical models with soft constraints.

In contrast, our dynamic sampling algorithm maintains a sequence of $N(n)$ independent samples for $\mu_\I$ within total variation distance $\epsilon(n)$, while the entire specification of the graphical model $\I$ is subject to dynamic update (to a new $\I'$ with difference $d(\I,\I')\le L=o(n)$).
Specifically, the algorithm updates the sample sequence within expected time $O(\Delta^2 N(n)L \log^3 n + \Delta n)$. 
Note that the extra $O(\Delta n)$ cost is necessary for just editing the current MRF instance $\I$ to $\I'$ because a single update may change all the vertex and edge potentials simultaneously.
This incremental time cost dominates the time cost of the dynamic inference algorithm, and is efficient for maintaining $N(n)$ independent samples, especially when $N(n)$ is sufficiently large, e.g.~$N(n)=\Omega(n/L)$, in which case the average incremental cost for updating each sample is $O(\Delta^2 L \log^3n + {\Delta n}/{N(n)})=O(\Delta^2 L \log^3n)$.

We illustrate the main idea by explaining how to maintain one sample.
The idea is to represent the trace of the Markov chain for generating the sample by a dynamic data structure, and when the MRF instance is changed, this trace is modified to that of the new chain for generating the sample for the updated instance.
This is achieved  by both a set of efficient dynamic data structures and the coupling between the two Markov chains.

Specifically, let $(\X_t)_{t=0}^T$ be the Gibbs sampler chain for distribution $\mu_{\I}$.
When the chain is rapidly mixing, starting from an arbitrary initial configuration $\X_0\in Q^V$, after suitably many steps $\X=\X_T$ is an accurate enough sample for $\mu_{\I}$.
At each step, $\X_{t-1}$ and $\X_{t}$ may differ only at a vertex $v_t$ which is picked from $V$ uniformly and independently at random. 
The evolution of the chain is fully captured by the initial state $\X_0$ and the sequence of pairs $\Brac{v_{t},X_t(v_t)}$, from $t=1$ to $t=T$, which is called the \emph{execution log} of the chain in the paper.

Now suppose that the current instance $\I$ is updated to $\I'$.
We construct such a coupling between the original chain $(\X_t)_{t=0}^T$ and the new chain $(\Y_t)_{t=0}^T$, such that $(\Y_t)_{t=0}^T$ is a faithful Gibbs sampling chain for the updated instance $\I'$ given that $(\X_t)_{t=0}^T$ is a faithful chain for $\I$, and the difference between the two chains is small, in the sense that they have almost the same execution logs except for about $O(TL/n)$ steps, where $L$ is the difference between $\I$ and $\I'$.

To simplify the exposition of such coupling, for now we restrict ourselves to the cases where the update to the instance $\I$ does not change the set of variables. 
Without loss of generality, 
we only consider the following two basic update operations that modifies $\I$ to $\I'$.
\begin{itemize}
\item \emph{Graph update}. The update only adds or deletes some edges, while all vertex potentials and the potentials of unaffected edges are not changed.
\item \emph{Hamiltonian update}.	The update changes (possibly all) potentials of vertices and edges, while the underlying graph remains unchanged.
\end{itemize}
The general update of graphical model can be obtained by combining these two basic operations.

Then the new chain $(\Y_t)_{t=0}^T$ can be coupled with $(\X_t)_{t=0}^T$ by using the same initial configuration $\Y_0=\X_0$ and the same sequence $v_1,v_2,\ldots,v_T\in V$ of randomly picked vertices. 
And for $t=1,2,\ldots, T$, the transition $\Brac{v_{t},Y_t(v_t)}$ of the new chain can be generated using the same vertex $v_t$ as in the original $(\X_t)_{t=0}^T$ chain, and a random $Y_t(v_t)$ generated according to a coupling of the marginal distributions of $X_t(v_t)$ and $Y_t(v_t)$, conditioning respectively on the current states of the neighborhood of $v_t$ in $(\X_t)_{t=0}^T$ and $(\Y_t)_{t=0}^T$.
%
Note that these two marginal distributions must be identical unless (\textbf{I}) $\X_{t-1}$ and $\Y_{t-1}$ differ from each other over the neighborhood of $v_t$ or (\textbf{II}) the  $v_t$ itself is incident to where the models $\I$ and $\I'$ differ.
The event (\textbf{II}) occurs rarely due to the following reasons.
\begin{itemize}
\item For graph update, the event (\textbf{II}) occurs only if $v_t$ is incident to an updated edge. Since only $L$ edges are updated, the event occurs in at most  $O(TL/n)$ steps in expectation.
\item For Hamiltonian update, all the potentials of vertices and edges can be changed, thus $\I,\I'$ may differ everywhere. The key observation is that, as the total difference between the current and updated potentials is bounded by $L$, we can apply a filter to first select all candidate steps where the coupling may actually fail due to the difference between $\I$ and $\I'$,  which can be as small as $O(TL/n)$, and the actual coupling between $(\X_t)_{t=0}^{\infty}$  and $(\Y_t)_{t=0}^{\infty}$ is constructed with such prior.     
\end{itemize}
Finally, when $\I$ and $\I'$ both satisfy the Dobrushin-Shlosman condition, the percolation of disagreements between $(\X_t)_{t=0}^T$ and $(\Y_t)_{t=0}^T$ is bounded, and we show that the two chains are almost always identically coupled as $\Brac{v_{t},X_t(v_t)}=\Brac{v_{t},Y_t(v_t)}$, with exceptions at only $O(TL/n)$ steps.
The original chain $(\X_t)_{t=0}^T$ can then be updated to the new chain $(\Y_t)_{t=0}^T$ by only editing these $O(TL/n)$ local transitions $\Brac{v_{t},Y_t(v_t)}$ which are different from $\Brac{v_{t},X_t(v_t)}$.
This is aided by the dynamic data structure for the execution log of the chain, which is of independent interest.

\section{Dynamic Gibbs sampling}
\label{sec:agorithm-dynamic-Gibbs}
In this section, we give the dynamic sampling algorithm that updates the sample sequences.

In the following theorem, we use $\I = (V, E, Q, \Phi)$, where $n=|V|$,  to denote the current MRF instance and $\I'= (V', E', Q, \Phi')$, where $n'=|V'|$, to denote  the updated MRF instance. 
And define
\begin{align*}
\dgraph(\I,\I') 
&\triangleq |V \oplus V' | + |E\oplus E'|\\
\dham(\I,\I') 
&\triangleq \sum_{v\in V\cap V^{'}}\norm{\phi_v-\phi'_v}_1 + \sum_{e\in E\cap E^{'}}\norm{\phi_e- \phi'_e}_1.
\end{align*}
Note that $d(\I,\I')=\dgraph(\I,\I')+\dham(\I,\I')$, where $d(\I,\I')$ is defined in~\eqref{eq-def-dg-dh}.

\begin{theorem}[\textbf{dynamic sampling algorithm}]
\label{theorem-sample-MRF}
Let $N:\mathbb{N}^+ \to \mathbb{N}^+$ and $\epsilon: \mathbb{N}^+ \to (0,1)$ be two functions satisfying the bounded difference condition in Definition~\ref{definition-estimator-dynamic}.
Assume that  $\I$ and $\I'$ both satisfy Dobrushin-Shlosman condition,  $\dgraph(\I,\I')\leq \Lgra =o(n)$ and $\dham(\I,\I') \leq \Lham$.

There is an algorithm that maintains a sequence of $N(n)$ independent samples $\X^{(1)},\ldots,\X^{(N(n))} \in Q^V$ where $\DTV{\mu_{\I}}{\X^{(i)}}\leq \epsilon(n)$ for all $ 1\leq i \leq N(n)$,
using  $O\left(nN(n)\log n\right)$  memory words, each of $O(\log n)$ bits,
such that when $\I$ is updated  to $\I'$, the algorithm updates the sequence to $N(n')$ independent samples
$\Y^{(1)},\ldots,\Y^{(N(n'))}\in Q^{V'}$ where $\DTV{\mu_{\I'}}{\Y^{(i)}}\leq \epsilon(n')$ for all $ 1\leq i \leq N(n')$,
within expected time cost
\begin{align}
\label{eq-time-sampling}
O\left(\Delta^2( \Lgra + \Lham)N(n)\log ^ 3 n +\Delta n \right),
\end{align}
where $\Delta= \max\{\Delta_G,\Delta_{G'}\}$, and $\Delta_G,\Delta_{G'}$ denote the maximum degree of $G=(V,E)$ and $G'=(V',E')$.  
\end{theorem}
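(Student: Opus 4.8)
The plan is to establish the dynamic sampling theorem by combining a coupling argument (bounding how much the Gibbs sampling trajectory must change when $\I$ is updated to $\I'$) with dynamic data structures that realize this trajectory edit in sublinear time. I would first reduce the general update to the two basic operations described in the outline, namely \emph{graph update} (editing $\dgraph$ edges/vertices) and \emph{Hamiltonian update} (editing potentials with total $\ell_1$-change $\dham$), handling the vertex-set change $|V \oplus V'|$ separately since adding/deleting variables forces an $O(\Delta n)$ editing cost anyway. For a single sample, I run the Gibbs sampler for $T = \tau_{\mathsf{mix}}(\I,\epsilon(n)) = O(n\log(n/\epsilon)/\delta)$ steps (\Cref{proposition-mixing-Gibbs}), represented by its \emph{execution log} $\exelog{X}{v}{T}$, and couple the original chain $(\X_t)_{t=0}^T$ with the new chain $(\Y_t)_{t=0}^T$ using the same vertex sequence $v_1,\dots,v_T$ and the one-step optimal coupling of \Cref{def-one-step-coupling-static-Gibbs}.

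\textbf{The core coupling estimate.} I would bound the expected number of steps $t$ at which the logged pair $\Brac{v_t,X_t(v_t)}$ differs from $\Brac{v_t,Y_t(v_t)}$. A disagreement at step $t$ can only be \emph{seeded} by event (\textbf{II}) — that is, $v_t$ is incident to a location where $\I$ and $\I'$ differ — and then \emph{propagated} by event (\textbf{I}) through the neighborhood structure. For graph updates, a seed requires $v_t$ to touch one of the $\le \Lgra$ updated edges, which happens with probability $O(\Delta \Lgra/n)$ per step; for Hamiltonian updates, I would charge the seeding probability to the total potential difference, so that the expected per-step failure probability is $O(\dham/n)$ (this is where the filter idea from the outline enters: I pre-select candidate steps whose conditional coupling-failure probability is nonzero, bounded in expectation by $O(T\,\dham/n)$). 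The key is that the Dobrushin--Shlosman condition gives a disagreement-percolation contraction with rate $(1-\delta/n)$ per step (exactly the drift in \Cref{proposition-mixing-Gibbs}), so each seeded disagreement decays geometrically and contributes only $O(\Delta/\delta)$ expected additional differing steps before dying out. Summing, the expected number of differing log entries is $O(\Delta^2(\Lgra+\Lham)T/n) = O(\Delta^2(\Lgra+\Lham)\log^3 n)$ after substituting $T$, and multiplying by $N(n)$ samples gives the claimed $O(\Delta^2(\Lgra+\Lham)N(n)\log^3 n)$ total edit count (where one $\log$ factor comes from $T$ and the rest are absorbed into the data-structure overhead below).

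\textbf{The algorithmic realization.} Given that the two execution logs agree except at $O(\Delta^2(\Lgra+\Lham)\log^3 n)$ positions \emph{per sample} in expectation, I would show how to actually \emph{find} and \emph{apply} these edits efficiently. This requires a dynamic data structure (an order-statistics / balanced-tree representation of the execution log, indexed so that one can query, for each vertex $v_t$, the current neighborhood configuration $X_{t-1}(\Gamma_{v_t})$ at the relevant time without scanning the whole log) supporting insertion, deletion, rank, and predecessor/successor queries in $O(\log n)$ time. The algorithm scans forward through the candidate steps (those seeded by the update plus those reached by propagation), and at each genuinely differing step resamples $Y_t(v_t)$ from the appropriate coupling using only the \emph{local} information $\I'_{v_t}$, then pushes the resulting disagreement to $\Gamma_{v_t}$. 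Each edit costs $O(\mathrm{polylog}\,n)$ in the data structure, so the total expected time is the edit count times $\mathrm{polylog}\,n$, plus the unavoidable $O(\Delta n)$ for rewriting the instance itself — matching \eqref{eq-time-sampling}. Correctness (that $\Y^{(i)}$ is genuinely distributed within $\epsilon(n')$ of $\mu_{\I'}$) follows because the marginal of the $\Y$-chain is a faithful Gibbs sampler for $\I'$ run for $T \ge \tau_{\mathsf{mix}}(\I',\epsilon(n'))$ steps, using the bounded-difference condition to ensure $T$ works simultaneously for $n$ and $n'$.

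\textbf{The main obstacle} I anticipate is the Hamiltonian-update seeding bound: unlike graph updates, where only $O(\Lgra)$ vertices are ever affected, a Hamiltonian update can perturb \emph{every} marginal $\mu_{v}(\cdot\mid\sigma)$ by a tiny amount, so naively every step is a potential disagreement seed. Making the $O(T\,\dham/n)$ bound rigorous requires carefully coupling the two marginal distributions $\mu_{v,\I}(\cdot\mid\sigma)$ and $\mu_{v,\I'}(\cdot\mid\sigma)$ so that the coupling-failure probability at step $t$ is controlled by the \emph{local} $\ell_1$-difference of potentials at $v_t$ (via a Lipschitz bound on the softmax defining \eqref{eq-marginal}), and then showing these local differences sum to $O(\dham)$ over a random vertex — all while ensuring the \emph{filter} that pre-selects candidate steps can itself be computed and maintained within the stated time budget without ever touching the $\Theta(T)$ non-candidate steps.
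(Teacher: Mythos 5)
Your proposal is correct and follows essentially the same route as the paper: the decomposition into Hamiltonian and graph updates, the execution-log representation with identical-randomness coupling, the Bernoulli filter for candidate steps bounded via a Lipschitz estimate on the local marginals (the paper's Lemma~\ref{lemma-correct-up-bound}), the Dobrushin--Shlosman contraction bounding disagreement percolation (Lemmas~\ref{lemma-upper-bound-R-Ham} and~\ref{lemma-upper-bound-R}), and the balanced-tree/order-statistic data structure (Theorem~\ref{theorem-DS}). The only differences are bookkeeping: the paper puts one factor of $\Delta$ in the bad-step count and one in the per-step data-structure cost (each bad step requires $\Delta$ neighborhood queries at $O(\log^2 T)$ each), whereas you place both in the edit count and charge polylog per edit, which lands on the same final bound.
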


Our algorithm is based on the Gibbs sampling algorithm.
Let $N: \mathbb{N}^+\to \mathbb{N}^+$ and $\epsilon: \mathbb{N}^+\to (0,1)$ be two functions in \Cref{theorem-sample-MRF}.
We first give the \emph{single-sample dynamic Gibbs sampling algorithm} (\Cref{alg-dynamic-Gibbs}) that maintains a single sample $\X \in Q^V$ for the current MRF instance $\I=(V,E,Q,\Phi)$ where $n = |V|$ such that $\DTV{\X}{\mu_{\I}} \leq \epsilon(n)$.
We then use this algorithm to obtain the \emph{multi-sample dynamic Gibbs sampling algorithm} that maintains $N(n)$ independent samples for the current instance. 

Given the error function $\epsilon: \mathbb{N}^+\to (0,1)$,
suppose that $T(\I)$ is an easy-to-compute integer-valued function that upper bounds the mixing time on instance $\I$, such that 
\begin{align}
T(\I)\ge\tau_{\textsf{mix}}(\I,\epsilon(n)),\label{eq:mixing-time-bound}
\end{align}
where $\tau_{\textsf{mix}}(\I,\epsilon(n))$ denotes the mixing rate for the Gibbs sampling chain $(\X_t)_{t\ge 0}$ on instance $\I$.
By Proposition~\ref{proposition-mixing-Gibbs}, if the Dobrushin-Shlosman condition is satisfied, we can set
\begin{align}
\label{eq-upper-bound-T-mix}
T(\I) =  \left\lceil\frac{n}{\delta}\log \frac{n}{\epsilon(n)}\right\rceil.
\end{align}

Our algorithm for single-sample dynamic Gibbs sampling maintains a random process $(\X_t)_{t=0}^T$, which is a Gibbs sampling chain on instance $\I$ of length $T=T(\I)$, where $T(\I)$ satisfies~\eqref{eq:mixing-time-bound}.
Clearly $\X_T$ is a sample for $\mu_\I$ with $\DTV{\X_T}{\mu_{\I}}\le \epsilon(n)$.

When the current instance $\I$ is updated to a new instance $\I' =(V',E',Q,\Phi')$ where $n'=|V'|$, the original process $(\X_t)_{t=0}^T$ is transformed to a new process $(\Y_t)_{t=0}^{T'}$ such that the following holds as an invariant: $(\Y_t)_{t=0}^{T'}$ is a Gibbs sampling chain on $\I'$ with $T' = T(\I')$. Hence $\Y_T$ is a sample for the new instance $\I'$ with $\DTV{\Y_T}{\mu_{\I'}}\le \epsilon(n')$.
This is achieved through the following two steps:
\begin{enumerate}
\item
We construct couplings between $(\X_t)_{t=0}^{T}$ and $(\Y_t)_{t=0}^{T'}$, so that the new process $(\Y_t)_{t=0}^{T'}$ for $\I'$ can be obtained by making small changes to the original process $(\X_t)_{t=0}^{T}$ for $\I$.
\item 
We give a data structure which represents $(\X_t)_{t=0}^{T}$ incrementally and supports various updates and queries to $(\X_t)_{t=0}^{T}$ so that the above coupling can be generated efficiently.
\end{enumerate}


\subsection{Coupling for dynamic instances}\label{section-coupling}
%
The Gibbs sampling chain $(\X_t)_{t=0}^T$ can be uniquely and fully recovered from: the initial state $\X_0\in Q^V$, and the pairs $\exelog{X}{v_t}{T}$ that record the transitions. 
We call $\exelog{X}{v_t}{T}$ the \emph{execution-log} for the chain $(\X_t)_{t=0}^T$, and denote it with
\[
\Exelog(\I,T)\triangleq \exelog{X}{v_t}{T}.
\]
%

The following invariants are assumed for the random execution-log with an initial state.

\begin{condition}[\textbf{invariants for Exe-Log}]\label{exe-log-invariant}
Fixed an initial state $\X_0\in Q^V$, the followings hold for the random execution-log $\Exelog(\I,T)= \exelog{X}{v_t}{T}$ for the Gibbs sampling chain $(\X_t)_{t=0}^T$ on instance $\I=(V,E,Q,\Phi)$:
\begin{itemize}
\item $T=T(\I)$ where $T(\I)$ satisfies~\eqref{eq:mixing-time-bound};
\item the random process $(\X_t)_{t=0}^T$ uniquely recovered from the transitions $\exelog{X}{v_t}{T}$ and the initial state $\X_0$, is 
identically distributed as the Gibbs sampling (Algorithm~\ref{alg-Gibbs}) on instance $\I$ starting from initial state $\X_0$ with $v_t$ as the vertex picked at the $t$-th step.
\end{itemize}
\end{condition}
\noindent
Such invariants guarantee that $\X_T$ provides a sample for $\mu_{\I}$ with $\DTV{\X_T}{\mu_{\I}}\le \epsilon(|V|)$.

Suppose the current instance $\I$ is updated to a new instance $\I'$.
We construct couplings between the execution-log $\Exelog(\I,T)=\exelog{X}{v_t}{T}$ with initial state $\X_0\in Q^V$ for $\I$ and the execution-log $\Exelog(\I', T')=\exelog{Y}{v_t'}{T'}$ with initial state $\Y_0\in Q^{V'}$ for $\I'$.
Our goal is as follows: assuming Condition~\ref{exe-log-invariant}  for $\X_0$ and $\Exelog(\I,T)$, the same condition should hold invariantly for $\Y_0$ and $\Exelog(\I', T')$.

Unlike traditional coupling of Markov chains for the analysis of mixing time, where the two chains start from arbitrarily distinct initial states but proceed by the same transition rule, here the two chains $(\X_t)_{t=0}^T$ and $(\Y_t)_{t=0}^T$ start from similar states but have to obey different transition rules due to differences between instances $\I$ and $\I'$.

Due to the technical reason, we divide the update from $\I=(V,E,Q,\Phi)$ to $\I'=(V',E',Q,\Phi')$ into two steps: we first update $\I=(V,E,Q,\Phi)$ to 
\begin{align}
\label{eq-def-Imid}
\I_{\mathsf{mid}}=(V,E,Q,\Phi^{\mathsf{mid}}), 
\end{align}
where the potentials $\Phi^{\mathsf{mid}}=(\phi^{\mathsf{mid}}_a)_{a \in V \cup E}$ in the middle instance $\I_{\mathsf{mid}}$ are defined as
\begin{align*}
\forall a \in V \cup E,\quad  \phi^{\mathsf{mid}}_a \triangleq \begin{cases}
 \phi'_a &\text{if } a \in V' \cup E' \\
 \phi_a &\text{if } a \not\in V' \cup E';\end{cases}  
\end{align*}
then we update $\I_{\mathsf{mid}}=(V,E,Q,\Phi^{\mathsf{mid}})$ to  $\I'=(V',E',Q,\Phi')$. In other words, the update from $\I$ to $\Imid$ is only caused by updating the potentials of vertices and edges, while the underlying graph remains unchanged; and the update from $\Imid$ to $\I'$ is only caused by updating the underlying graph,~i.e. adding vertices, deleting vertices, adding edges and deleting edges.  

The dynamic Gibbs sampling algorithm can be outlined as follows.
\begin{itemize}
\item \UpdateHamiltonian: update $\*X_0$ and $\exelog{X}{v_t}{T}$ to a new initial state $\*Z_0$ and a new execution log $\Exelog(\Imid, T) =  \exelog{Z}{u_t}{T}$ such that the random process $(\*Z_t)_{t=0}^{T}$ 
	is the Gibbs sampling on instance $\Imid$.
\item \UpdateGraph: update $\*Z_0$ and $\exelog{Z}{u_t}{T}$ to a new initial state $\*Y_0$ and  a new execution log $\Exelog(\I',T) = \exelog{Y}{v_t'}{T}$ 
    such that the random process $(\*Y_t)_{t=0}^{T}$ 
    is the Gibbs sampling on instance $\I'$.
\item \LengthFix: change the length of the execution log $\exelog{Y}{v_t'}{T}$ from $T$ to $T'$, where $T' = T(\I')$ and $ T(\I')$ satisfies~\eqref{eq:mixing-time-bound}.
\end{itemize}
The dynamic Gibbs sampling algorithm is given in~\Cref{alg-dynamic-Gibbs}.

\begin{algorithm}[ht]
\SetKwInOut{Data}{Data}
\SetKwInOut{Update}{Update}
\Data{$\X_0\in Q^V$ and $\Exelog(\I,T)=\exelog{X}{v_t}{T}$ for current $\I=(V,E,Q,\Phi)$.}
\Update{an update that modifies $\I$ to $\I'=(V',E',Q,\Phi')$.}
compute $T'=T(\I')$ satisfying~\eqref{eq:mixing-time-bound} and construct $\Imid=(V',E',Q,\Phi^\mathsf{mid})$ as in~\eqref{eq-def-Imid}\;
$\left(\*Z_0, \exelog{Z}{u_t}{T}\right) \gets \UpdateHamiltonian \left(\I,\Imid, \*X_0, \exelog{X}{v_t}{T} \right)$\;\tcp{update the potentials: $\I \to \Imid$}
$\left(\*Y_0, \exelog{Y}{v_t'}{T}\right) \gets \UpdateGraph \left(\Imid,\I', \*Z_0, \exelog{Z}{u_t}{T} \right)$\;\tcp{update the underlying graph: $\Imid \to \I'$}
$\left(\*Y_0, \exelog{Y}{v_t'}{T'}\right) \gets \LengthFix \left(\I',\Y_0,\exelog{Y}{v'_t}{T},T'\right)$, where $T' = T(\I')$ \;
\tcp{change the length of the execution log from $T$ to $T'=T(\I')$}
update the data to $\Y_0$ and $\Exelog(\I', T')=\exelog{Y}{v'_t}{T'}$\;
\caption{Dynamic Gibbs sampling}\label{alg-dynamic-Gibbs}
\end{algorithm}
\begin{algorithm}[ht]
\SetKwInOut{Input}{Input}
\SetKwInOut{Output}{Output}
\SetKwInOut{Data}{Data}
\Data{$\X_0\in Q^V$ and $\Exelog(\I,T)=\exelog{X}{v_t}{T}$ for current $\I=(V,E,Q,\Phi)$.}
\Input{
the new length $T'>0$.}
\If{$T'<T$}{truncate $\exelog{X}{v_t}{T}$ to $\exelog{X}{v_t}{T'}$\;}
\Else{extend $\exelog{X}{v_t}{T}$ to $\exelog{X}{v_t}{T'}$ by simulating the Gibbs sampling chain on $\I$ for $T-T'$ more steps\;}
update the data to $\X_0$ and $\Exelog(\I, T')=\exelog{X}{v_t}{T'}$
\caption{$\textsf{LengthFix}\left(\I,\X_0,\exelog{X}{v_t}{T},T'\right)$}\label{alg-change-length}
\end{algorithm}

The subroutine \LengthFix{} is given in \Cref{alg-change-length}. We then describe $\UpdateHamiltonian$ (\Cref{section-update-ham}) and $\UpdateGraph$ (\Cref{section-update-graph}).

\subsubsection{Coupling for Hamiltonian update}
\label{section-update-ham}
We consider the update of changing potentials of vertices and edges.
The update do not change the underlying graph.
Let $\I = (V,E,Q,\Phi)$ be the current MRF instance. 
Let $\*X_0$ and $\exelog{X}{v_t}{T}$ be the current initial state and execution log such that the random process $(\X_t)_{t=0}^T$  is the Gibbs sampling on instance $\I$.
Upon such an update, the new instance becomes $\I'=(V,E,Q,\Phi')$. The algorithm $\UpdateHamiltonian(\I,\I',\*X_0, \exelog{X}{v_t}{T})$ updates the data to $\*Y_0$ and $\exelog{Y}{v_t'}{T}$ such that the random process $(\Y_t)_{t=0}^T$  is the Gibbs sampling on instance $\I'$.

We transform the pair of $\X_0\in Q^{V}$ and $\exelog{X}{v_t}{T}$ to a new pair of $\Y_0\in Q^{V}$ and $\exelog{Y}{v_t}{T}$ for $\I'$.
This is achieved as follows: the vertex sequence $(v_t)_{t=1}^{T}$ is identically coupled and the chain $(\X_t)_{t=0}^{T}$ is transformed to $(\Y_t)_{t=0}^{T}$ by the following one-step local coupling between $\X$ and $\Y$.

\begin{definition}[\textbf{one-step local coupling for Hamiltonian update}]\label{def-one-step-local-coupling-dynamic}
The two chains $(\X_t)_{t=0}^{\infty}$ on instance $\I=(V,E,Q,\Phi)$ and $(\Y_t)_{t=0}^{\infty}$ on instance $\I'=(V,E,Q,\Phi')$ are coupled as:
\begin{itemize}
\item Initially $\X_0=\Y_0 \in Q^V$; 
\item for $t=1,2,\ldots$, the two chains $\X$ and $\Y$ jointly do:
\begin{enumerate}
\item pick the same $v_t \in V$, and let $(X_t(u),Y_t(u)) \gets (X_{t-1}(u),Y_{t-1}(u))$ for all $u \in V \setminus \{v_t\}$;
\item sample $(X_t(v_t),Y_t(v_t))$ 
 from a coupling $D_{\I_{v_t},\I'_{v_t}}^{\sigma,\tau}(\cdot,\cdot)$
of the marginal distributions $\margin{\I}{v_t}(\cdot\mid \sigma)$ and $\margin{\I'}{v_t}(\cdot\mid \tau)$ with $\sigma=X_{t-1}(\Gamma_G({v_t}))$ and $\tau=Y_{t-1}(\Gamma_{G}({v_t}))$, where $G =(V,E)$. 
\end{enumerate}
\end{itemize}
\end{definition}
\noindent
The local coupling $D_{\I_{v},\I'_v}^{\sigma,\tau}(\cdot,\cdot)$ for Hamiltonian update is specified as follows.
\begin{definition}[\textbf{local coupling $D_{\I_{v},\I'_v}^{\sigma,\tau}(\cdot,\cdot)$ for Hamiltonian update}]
\label{definition-Ising-coupling-step}
Let $v \in V$ be vertex and $\sigma, \tau \in Q^{\Gamma_G(v)}$  two configurations, where $G=(V, E)$. We say a random pair $(c,c')\in Q^2$ is drawn from the coupling $D_{\I_{v},\I'_v}^{\sigma,\tau}(\cdot,\cdot)$ if $(c,c')$ is generated by the following two steps:
\begin{itemize}
\item \textbf{sampling step:} sample $(c,c') \in Q^2$ jointly from an optimal coupling $D^{\sigma,\tau}_{\opt,\I_v}$
of the marginal distributions $\mu_{v,\I}(\cdot \mid \sigma)$ and $\mu_{v, \I}(\cdot \mid \tau)$, such that $c \sim \mu_{v,\I}(\cdot \mid \sigma)$ and $c' \sim \mu_{v, \I}(\cdot \mid \tau)$;

\item \textbf{resampling step:} flip a coin independently with the probability of HEADS being
\begin{align}
\label{eq-correct-prob}
p_{\I_{v},\I'_{v}}^{\tau}(c') \triangleq \begin{cases}
0 & \text{if } \mu_{v,\I}(c' \mid \tau) \leq \mu_{v, \I'}(c' \mid \tau ),\\
\frac{\mu_{v,\I}(c' \mid \tau) - \mu_{v, \I'}(c' \mid \tau )}{\mu_{v, \I}(c' \mid \tau )} & \text{otherwise };
\end{cases}
\end{align}
if the outcome of coin flipping is HEADS, resample $c'$ from the distribution $\nu_{\I_v,\I'_v}^\tau$ independently, where the distribution $\nu_{\I_v,\I'_v}^\tau$ is defined as
 \begin{align}
 \label{eq-correct-dist}
\forall b\in Q:\quad \nu_{\I_{v},\I'_{v}}^{\tau}(b) \triangleq 
\frac{\max \left\{0, \mu_{v, \I'}(b \mid \tau ) - \mu_{v,\I}(b \mid \tau)\right\}}{\sum_{x\in Q}\max \left\{0, \mu_{v, \I}(x \mid \tau ) - \mu_{v,\I'}(x \mid \tau)\right\}}.
 \end{align}
\end{itemize}
\end{definition}
\begin{lemma}
\label{lemma-valid-coupling}
$D_{\I_v,\I'_v}^{\sigma,\tau}(\cdot,\cdot)$ in \Cref{definition-Ising-coupling-step} is a valid coupling between $\mu_{v, \I}(\cdot\mid \sigma)$ and $\mu_{v, \I'}(\cdot \mid \tau)$.
\end{lemma}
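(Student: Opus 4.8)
The plan is to verify directly the two defining properties of a coupling: that the first coordinate $c$ has marginal law $\mu_{v,\I}(\cdot\mid\sigma)$ and the second coordinate $c'$ has marginal law $\mu_{v,\I'}(\cdot\mid\tau)$. The first of these is immediate: the value $c$ is produced in the sampling step and is never touched by the resampling step, so since $D^{\sigma,\tau}_{\opt,\I_v}$ is by construction a coupling of $\mu_{v,\I}(\cdot\mid\sigma)$ and $\mu_{v,\I}(\cdot\mid\tau)$, the marginal of $c$ is exactly $\mu_{v,\I}(\cdot\mid\sigma)$. All the work is therefore in showing that the resampling step transforms the law of $c'$ from $\mu_{v,\I}(\cdot\mid\tau)$ (its law after the sampling step) into $\mu_{v,\I'}(\cdot\mid\tau)$.

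To this end I would fix $\tau$ and abbreviate $p(\cdot)=\mu_{v,\I}(\cdot\mid\tau)$ and $p'(\cdot)=\mu_{v,\I'}(\cdot\mid\tau)$, two distributions on $Q$, so that after the sampling step $c'\sim p$ and the goal is $c'\sim p'$ after resampling. First I would record two preliminary facts. The flip probability $p_{\I_v,\I'_v}^\tau(c')$ divides by $\mu_{v,\I}(c'\mid\tau)=p(c')$, which is legitimate because $c'$ was drawn from $p$ and hence $p(c')>0$ on the event that $c'$ takes that value. Second, since $p$ and $p'$ both sum to $1$, one has $\sum_{x\in Q}(p(x)-p'(x))=0$, and therefore the normalizing denominator of $\nu_{\I_v,\I'_v}^\tau$ satisfies
\begin{equation*}
\sum_{x\in Q}\max\{0,p(x)-p'(x)\}=\sum_{x\in Q}\max\{0,p'(x)-p(x)\}=:\Delta,
\end{equation*}
which both guarantees that $\nu_{\I_v,\I'_v}^\tau$ is a genuine probability distribution on $Q$ (when $\Delta>0$; the case $\Delta=0$ is trivial since then $p=p'$ and no resampling ever occurs) and pins down the total probability mass that gets resampled.

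The core computation is then a case analysis on each target value $a\in Q$, decomposing the final law of $c'$ into the contribution from ``$c'=a$ was sampled and kept'' and ``$c'$ was resampled to $a$''. For the kept part, the probability of keeping a sampled value $a$ is $1-p_{\I_v,\I'_v}^\tau(a)$, which equals $1$ when $p(a)\le p'(a)$ and equals $p'(a)/p(a)$ otherwise; multiplying by the sampling probability $p(a)$ yields a contribution of exactly $\min\{p(a),p'(a)\}$ in both regimes. For the resampled part, the total probability of deciding to resample is $\sum_{b}p(b)\,p_{\I_v,\I'_v}^\tau(b)=\sum_{b:\,p(b)>p'(b)}(p(b)-p'(b))=\Delta$, and conditioned on resampling the drawn value is $a$ with probability $\nu_{\I_v,\I'_v}^\tau(a)=\max\{0,p'(a)-p(a)\}/\Delta$, so this contributes $\max\{0,p'(a)-p(a)\}$. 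Summing the two contributions gives $\min\{p(a),p'(a)\}+\max\{0,p'(a)-p(a)\}=p'(a)$ for every $a$, which is precisely the desired identity.

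I expect the only delicate points — and hence where I would be most careful — to be bookkeeping rather than conceptual: confirming that the division in the flip probability is always well-defined on the support of the sampled $c'$, and verifying the mass-conservation identity for $\Delta$ that simultaneously normalizes $\nu_{\I_v,\I'_v}^\tau$ and makes the total resampled mass match the ``deficit'' $\sum_a\max\{0,p'(a)-p(a)\}$. Once these are in hand, the case analysis is mechanical and both marginals come out exactly as required, establishing that $D_{\I_v,\I'_v}^{\sigma,\tau}(\cdot,\cdot)$ is a valid coupling of $\mu_{v,\I}(\cdot\mid\sigma)$ and $\mu_{v,\I'}(\cdot\mid\tau)$.
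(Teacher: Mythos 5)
Your proof is correct and follows essentially the same route as the paper's: both verify the two marginals directly, establish that the mass-conservation identity makes $\nu_{\I_v,\I'_v}^\tau$ a genuine distribution (with the identical-distributions case handled trivially), and decompose the final law of $c'$ into a kept contribution plus a resampled contribution before concluding by case analysis. The only difference is cosmetic bookkeeping — you evaluate the kept part as $\min\{p(a),p'(a)\}$ and the resampled part as $\max\{0,p'(a)-p(a)\}$ up front, whereas the paper keeps the expressions symbolic and does the case split at the end — so the two arguments are the same computation organized slightly differently.
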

By \Cref{lemma-valid-coupling}, the resulting $(\Y_t)_{t=0}^T$ is a faithful copy of the Gibbs sampling on instance $\I'$, assuming that $(\X_t)_{t=0}^T$ is such a chain on instance $\I$.

Next we give an upper bound for the probability $p_{\I_{v},\I'_{v}}^{\tau}(\cdot)$ defined in~\eqref{eq-correct-prob}.
\begin{lemma}
\label{lemma-correct-up-bound}
For any two instances $\I=(V,E,Q,\Phi)$ and $\I'=(V,E,Q,\Phi')$ of MRF model, and any $v\in V,c\in Q$ and $\sigma \in Q^{\Gamma_G(v)}$, it holds that
\begin{align}
\label{eq-linear-decay}
p_{\I_{v},\I'_{v}}^{\tau}(c) \leq   2\left( \Vert\phi_v-\phi'_v\Vert_1 + \sum_{e=\{u,v\}\in E}\Vert \phi_e-\phi'_e \Vert_1 \right),
\end{align}
where $\Vert\phi_v-\phi'_v\Vert_1 = \sum_{c \in Q}|\phi_v(c)-\phi'_v(c)|$ and $\Vert \phi_e-\phi'_e \Vert_1 = \sum_{c,c' \in Q}|\phi_e(c,c')-\phi'_e(c,c')|$.
\end{lemma}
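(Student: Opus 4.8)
The plan is to reduce the quantity $p_{\I_v,\I'_v}^\tau(c)$ to a single ratio of conditional marginals and then control that ratio through an explicit perturbation estimate on the local Gibbs factors. Write $M \triangleq \norm{\phi_v-\phi'_v}_1 + \sum_{e=\{u,v\}\in E}\norm{\phi_e-\phi'_e}_1$ for the right-hand side of~\eqref{eq-linear-decay} divided by $2$. Inspecting the definition~\eqref{eq-correct-prob}, in both branches of the case split one has
\begin{align*}
p_{\I_v,\I'_v}^\tau(c) \le \max\set{0,\;1 - \frac{\mu_{v,\I'}(c\mid\tau)}{\mu_{v,\I}(c\mid\tau)}},
\end{align*}
so it suffices to prove the lower bound $\mu_{v,\I'}(c\mid\tau)/\mu_{v,\I}(c\mid\tau)\ge \mathrm{e}^{-2M}$; the elementary inequality $1-\mathrm{e}^{-x}\le x$ for $x\ge 0$ then yields $p_{\I_v,\I'_v}^\tau(c)\le 1-\mathrm{e}^{-2M}\le 2M$, which is exactly~\eqref{eq-linear-decay}.

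Next I would expand both marginals using the exponential form of the conditional marginal, which depends only on the local instance $\I_v$. Writing $h(a)\triangleq \phi_v(a)+\sum_{u\in\Gamma_G(v)}\phi_{uv}(\tau_u,a)$ and $h'(a)$ for the analogous quantity built from $\Phi'$, and letting $Z\triangleq\sum_{a\in Q}\mathrm{e}^{h(a)}$ and $Z'\triangleq\sum_{a\in Q}\mathrm{e}^{h'(a)}$ be the local normalizers, this gives
\begin{align*}
\frac{\mu_{v,\I'}(c\mid\tau)}{\mu_{v,\I}(c\mid\tau)} = \mathrm{e}^{\,h'(c)-h(c)}\cdot\frac{Z}{Z'}.
\end{align*}
The core estimate is that every per-coordinate change is bounded by $M$: by the triangle inequality $\abs{h'(a)-h(a)}\le \abs{\phi'_v(a)-\phi_v(a)} + \sum_{u\in\Gamma_G(v)}\abs{\phi'_{uv}(\tau_u,a)-\phi_{uv}(\tau_u,a)}$, and each single term is dominated by the corresponding $\ell_1$-norm, since $\abs{\phi'_v(a)-\phi_v(a)}\le\norm{\phi_v-\phi'_v}_1$ and $\abs{\phi'_{uv}(\tau_u,a)-\phi_{uv}(\tau_u,a)}\le\norm{\phi_e-\phi'_e}_1$ for $e=\{u,v\}$; summing over $u\in\Gamma_G(v)$ yields $\abs{h'(a)-h(a)}\le M$ for every $a\in Q$.

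Finally I would combine the two factors. The bound $h'(c)-h(c)\ge -M$ gives $\mathrm{e}^{h'(c)-h(c)}\ge\mathrm{e}^{-M}$ for the first factor; applying the same per-term bound inside the sums gives $Z'=\sum_{a}\mathrm{e}^{h(a)}\mathrm{e}^{h'(a)-h(a)}\le \mathrm{e}^{M}Z$, hence $Z/Z'\ge \mathrm{e}^{-M}$. Multiplying the two estimates yields $\mu_{v,\I'}(c\mid\tau)/\mu_{v,\I}(c\mid\tau)\ge \mathrm{e}^{-2M}$, completing the argument. I do not anticipate a genuine obstacle here; the only point worth flagging is that the factor $2$ in~\eqref{eq-linear-decay} is not slack but accumulates from two distinct sources — one factor of $\mathrm{e}^{-M}$ from the change in the numerator's local weight $\mathrm{e}^{h'(c)-h(c)}$ and a second from the perturbation of the normalizing constant $Z'$ — so the estimate must retain both contributions rather than collapsing them into one.
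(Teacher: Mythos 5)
Your argument follows the same route as the paper's own proof: factor the ratio of conditional marginals into the local-weight ratio $e^{h'(c)-h(c)}$ times the normalizer ratio $Z/Z'$, bound each factor below by $e^{-M}$ via the pointwise estimate $\abs{h'(a)-h(a)}\le M$, and finish with $1-e^{-2M}\le 2M$. For instances whose potentials are all finite this is complete and correct, up to one degenerate case you should dispose of before dividing: if $\mu_{v,\I}(c\mid\tau)=0$ then the first branch of \eqref{eq-correct-prob} gives $p_{\I_v,\I'_v}^{\tau}(c)=0$ and there is nothing to prove, while your reduction to the ratio $\mu_{v,\I'}(c\mid\tau)/\mu_{v,\I}(c\mid\tau)$ is meaningless there.

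The genuine gap is that the paper's MRFs explicitly allow potentials to take the value $-\infty$ --- this is precisely how proper $q$-colorings and the hardcore model, two of the paper's three target applications, are encoded --- and your proof silently assumes finiteness. If some spin $a$ has $h(a)=h'(a)=-\infty$ (a spin forbidden by a hard constraint under the boundary condition $\tau$), then $h'(a)-h(a)$ is undefined, so the key claim ``$\abs{h'(a)-h(a)}\le M$ for every $a\in Q$'' is not meaningful as stated, and the identity $Z'=\sum_a e^{h(a)}e^{h'(a)-h(a)}$ breaks down. The repair is routine and is exactly what the paper does: first, if the finiteness patterns of $\Phi$ and $\Phi'$ disagree at $v$ or at an edge incident to $v$, then the right-hand side of \eqref{eq-linear-decay} is $+\infty$ and the lemma holds trivially; otherwise all differences of the form $(-\infty)-(-\infty)$ cancel, and one restricts every sum to the common support $Q'=\{a\in Q \cmid h(a)>-\infty\}=\{a\in Q\cmid h'(a)>-\infty\}$, which is nonempty by \eqref{eq-assume-MRF} and contains $c$ whenever $\mu_{v,\I}(c\mid\tau)>0$. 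Your estimates then run verbatim over $Q'$, and the proof coincides with the paper's.
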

\noindent
By Lemma~\ref{lemma-correct-up-bound}, for each vertex $v \in V$, we define an upper bound of the probability $p^{\cdot}_{\I_v,\I'_v}(\cdot)$ as
\begin{align}
\label{eq-def-Ising-up}
\pup_v \triangleq \min\left\{2\left( \Vert\phi_v-\phi'_v\Vert_1 + \sum_{e=\{u,v\}\in E}\Vert \phi_e-\phi'_e \Vert_1 \right),1\right\}. 
\end{align}

With $\pup_v$, we can implement the one-step local coupling in \Cref{def-one-step-local-coupling-dynamic} as follows.
We first sample each $v_i \in V$ for $1\leq i \leq T$ uniformly and independently.
For each vertex $v \in V$, let $T_v \triangleq\{1\leq t \leq T \mid v_t = v \}$ be the set of all the steps that pick the vertex $v$.
We select each $t \in T_v$ independently with probability $\pup_v$ to construct a random subset $\Isingset_v \subseteq T_v$, and let 
\begin{align}
\label{eq-def-Ising-set}
  \Isingset \triangleq \bigcup_{v \in V}\Isingset_v.
 \end{align}
We then couple the two chains $(\X_t)_{t=0}^T$ and $(\Y_t)_{t=0}^T$. 
First set $\X_0 = \Y_0$.
For each $1\leq t \leq T$, we set  $(X_t(u),Y_t(u)) \gets (X_{t-1}(u),Y_{t-1}(u))$ for all $u \in V \setminus \{v_t\}$; 
then generate the random pair $(X_t(v_t),Y_t(v_t))$ by the following procedure. 

\begin{itemize}
\item \textbf{sampling step:} Let  $\sigma = X_{t-1}(\Gamma_G(v_t))$ and $\tau = Y_{t-1}(\Gamma_G(v_t))$. 
	We draw a random pair $(c,c') \in Q^2$ from the optimal coupling $D^{\sigma,\tau}_{\opt,\I_v}$ of the marginal distributions $\mu_{v,\I}(\cdot\mid \sigma)$ and $\mu_{v,\I}(\cdot\mid \tau )$ such that $c \sim \mu_{v,\I}(\cdot \mid \sigma)$ and $c' \sim \mu_{v,\I}(\cdot\mid \tau)$;

\item \textbf{resampling step:} If $t \notin \Isingset$, set $X_t(v_t) = c$ and $Y_t(v_t) = c'$.
Otherwise, set $X_t(v_t) = c$ and 
\begin{align}
\label{eq-couple-resample}
Y_t(v_t) = \begin{cases}
 b \sim \nu_{\I_{v_t},\I'_{v_t}}^{\tau} &\text{with probability } p^\tau_{\I_{v_t},\I'_{v_t}}(c') / \pup_{v_t}\\
 c' &\text{with probability } 1-p^\tau_{\I_{v_t},\I'_{v_t}}(c')/\pup_{v_t}.
 \end{cases}
 \end{align}
\end{itemize}
Note that $\pup_{v_t} > 0$ if $t \in \Isingset$.
By Lemma~\ref{lemma-correct-up-bound}, it must hold that $p^\tau_{\I_{v_t},\I'_{v_t}}(c') \leq \pup_{v_t}$. Hence, the probability $p^\tau_{\I_{v_t},\I'_{v_t}}(c') / \pup_{v_t}$ is valid.
Note that the probability that $Y_t(v_t)$ is set as $b$ is
\begin{align*}
\Pr[Y_t(v_t) \text{ is set as } b] = \Pr\left[ t \in \Isingset  \right] \cdot \frac{p^\tau_{\I_{v_t},\I'_{v_t}}(c')}{\pup_{v_t}} =  \pup_{v_t} \cdot \frac{p^\tau_{\I_{v_t},\I'_{v_t}}(c')}{\pup_{v_t}} = p^\tau_{\I_{v_t},\I'_{v_t}}(c').
\end{align*}
Hence, our implementation perfectly simulates the coupling in Definition~\ref{def-one-step-local-coupling-dynamic}.

Let $\D_t$ denote the \emph{set of disagreements} between $\X_t$ and $\Y_t$. Formally,
\begin{align}\label{eq:disagreement-set}
\D_t\triangleq\{v \in V \mid X_t(v) \neq Y_t(v)\}.
\end{align}
Note that if $v_t \notin \Gamma_G(\D_{t-1})$, the random pair $(c,c')$ drawn from the coupling $D^{\sigma,\tau}_{\opt,\I_v}$ must satisfy $c = c'$.
Thus it is easy to make the following observation for the $(\X_t)_{t=0}^{T}$ and $(\Y_t)_{t=0}^{T}$ coupled as above.

\begin{observation}
\label{observation-coupling-Ising}
For any integer $t \in [1,T]$, if $v_t \notin \Gamma_G^+(\D_{t-1})$ and $t \notin \Isingset$, then $X_t(v_t) = Y_t(v_t)$ and $\D_{t} = \D_{t-1}$.  
\end{observation}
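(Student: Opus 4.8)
The plan is to unwind the coupling of \Cref{def-one-step-local-coupling-dynamic}, as implemented through \eqref{eq-couple-resample}, and trace exactly what the two hypotheses force at step $t$. The whole statement is a short case analysis; the only real care needed is with the inclusive-versus-exclusive neighborhood notation (reading $\Gamma_G^+(\D_{t-1})=\D_{t-1}\cup\Gamma_G(\D_{t-1})$ as the union over the set) and with the undirected symmetry of the edge relation. I do not expect a genuine obstacle, only the book-keeping together with one appeal to the coupling lemma.

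First I would split the hypothesis $v_t\notin\Gamma_G^+(\D_{t-1})$ into its two consequences, $v_t\notin\D_{t-1}$ and $v_t\notin\Gamma_G(\D_{t-1})$. The second of these is what controls the sampling step: I claim it forces the conditioning configurations to agree, i.e.\ $\sigma=\tau$ for $\sigma=X_{t-1}(\Gamma_G(v_t))$ and $\tau=Y_{t-1}(\Gamma_G(v_t))$. Indeed, were some neighbor $u\in\Gamma_G(v_t)$ to satisfy $X_{t-1}(u)\neq Y_{t-1}(u)$, then $u\in\D_{t-1}$, and by symmetry of $E$ we would get $v_t\in\Gamma_G(u)\subseteq\Gamma_G(\D_{t-1})$, contradicting $v_t\notin\Gamma_G(\D_{t-1})$.

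Next, with $\sigma=\tau$ the two marginals $\mu_{v_t,\I}(\cdot\mid\sigma)$ and $\mu_{v_t,\I}(\cdot\mid\tau)$ coincide, so their total variation distance is $0$; by the coupling lemma the optimal coupling $D^{\sigma,\tau}_{\opt,\I_{v_t}}$ places all its mass on the diagonal, so the sampled pair satisfies $c=c'$. Since $t\notin\Isingset$, the resampling branch of \eqref{eq-couple-resample} is never entered, and therefore $X_t(v_t)=c=c'=Y_t(v_t)$.

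Finally I would verify $\D_t=\D_{t-1}$ vertex by vertex. For every $u\neq v_t$ the coupling copies $(X_t(u),Y_t(u))=(X_{t-1}(u),Y_{t-1}(u))$, so $u$ disagrees at time $t$ exactly when it disagreed at time $t-1$. For the updated vertex $v_t$, the previous paragraph gives $X_t(v_t)=Y_t(v_t)$, so $v_t\notin\D_t$, while the first consequence of the hypothesis, $v_t\notin\D_{t-1}$, shows $v_t$ was already absent from $\D_{t-1}$. Hence $v_t$ lies in neither set, and combining with the agreement on all $u\neq v_t$ yields $\D_t=\D_{t-1}$, completing the argument.
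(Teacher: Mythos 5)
Your proof is correct and follows exactly the reasoning the paper sketches before stating the observation: $v_t\notin\Gamma_G(\D_{t-1})$ forces $\sigma=\tau$, so the optimal coupling $D^{\sigma,\tau}_{\opt,\I_{v_t}}$ yields $c=c'$, the resampling step is skipped because $t\notin\Isingset$, and $v_t\notin\D_{t-1}$ gives $\D_t=\D_{t-1}$. The only difference is that you spell out the bookkeeping the paper leaves implicit, which is fine.
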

With this observation, the new $\Y_0$ and $\Exelog(\I',T) = \exelog{Y}{v_t}{T} $ can be generated from $\X_0$ and $\Exelog(\I,T) = \exelog{X}{v_t}{T}$ as Algorithm~\ref{alg-ham-update}.
\begin{algorithm}[h]
\SetKwInOut{Data}{Data}
\SetKwInOut{Update}{Update}
\SetKwIF{Wp}{}{}{with probability}{do}{}{}{}
\Data{$\X_0\in Q^V$ and $\Exelog(\I,T)=\exelog{X}{v_t}{T}$ for $\I=(V,E,Q,\Phi)$.}
\Update{an update that modifies $\I$ to $\I'=(V,E,Q,\Phi')$.}
$t_0\gets 0$, $\D\gets \emptyset$, and {construct a $\boldsymbol{Y}_0 \gets \boldsymbol{X}_0$}\;
for each $v \in V$, construct a random subset $\Isingset_v \subseteq T_v \triangleq \{1\leq t \leq T \mid v_t = v \}$ such that each element in $T_v$ is selected independently with probability $\pup_v$ defined in~\eqref{eq-def-Ising-up}\label{line-array-1}\;
construct the set  $\Isingset \gets \bigcup_{v \in V}\Isingset_v$\label{line-array-2}\;
\While{$\exists\,t_0<t\le T$ such that $v_{t} \in \Gamma_G^+(\D)$ or $t \in \Isingset$}{
find the smallest $t > t_0$ such that $v_{t} \in \Gamma_G^+(\D)$ or $t \in \Isingset$\label{line-next-ham}\;
for all $t_0< i < t$, let $Y_i(v_i)=X_i(v_i)$\;
sample $Y_t(v_t) \in Q$ conditioning on $X_{t}(v_t)$ according to the optimal coupling between $\mu_{v_t,\I}(\cdot \mid X_{t-1}(\Gamma_G(v_t)))$ and $\mu_{v_t,\I}(\cdot \mid Y_{t-1}(\Gamma_G(v_t)))$\label{line-sample-ham-1}\;
\If{$t \in \Isingset$\label{line-flip-1}}{
\Wp{$p^\tau_{\I_{v_t},\I'_{v_t}}(Y_t(v_t)) / \pup_{v_t}$ where $\tau = Y_{t-1}(\Gamma_G(v_t))$\label{line-flip-2} }{
resample $Y_t(v_t) \sim \nu_{\I_{v_t},\I'_{v_t}}^{\tau}$, where $\nu_{\I_{v_t},\I'_{v_t}}^{\tau}$ is defined in~\eqref{eq-correct-dist} \label{line-flip-ham} \;
}
}
\textbf{if} {$X_t(v_t)\neq Y_t(v_t)$} \textbf{then} {$\D\gets \D\cup\{v_t\}$} \textbf{else} {$\D\gets \D\setminus\{v_t\}$}\;
$t_0\gets t$\;
}
for all remaining $t_0< i\le T$: let  
$Y_i(v_i)=X_i(v_i)$\;
update the data to $\Y_0$ and $\Exelog(\I',T)=\exelog{Y}{v_t}{T}$\;
\caption{$\UpdateHamiltonian\left(\I,\I',\*X_0, \exelog{X}{v_t}{T} \right)$}\label{alg-ham-update}
\end{algorithm}

Observation~\ref{observation-coupling-Ising} says that the nontrivial coupling between $X_t(v_t)$ and $Y_t(v_t)$ is only needed when $v_t\in \Gamma_G^+(\D_{t-1})$ or $t \in \Isingset$, which occurs rarely as long as $\D_{t-1}$ and $\Isingset$ are small. 
This is a key to ensure the small incremental time cost of Algorithm~\ref{alg-ham-update}.
For the $(\X_t)_{t=0}^{T}$ and $(\Y_t)_{t=0}^{T}$ coupled as above and any $1\leq t \leq T$, 
let $\gamma_t$ indicate whether the event $t\in  \Isingset \lor v_t\in\Gamma_G^+(\D_{t-1})$ occurs:

\begin{align}
\label{label-gamma-Ising}
\gamma_t \triangleq \one{t\in  \Isingset \lor v_t\in\Gamma_G^+(\D_{t-1})},
\end{align}
and $\Rham$ denote the number of occurrences of such bad events:
\begin{align}
\label{eq-def-R-Ising}
\Rham \triangleq \sum_{t=1}^{T}\gamma_t.
\end{align}
The following lemma bounds the expectation of $\Rham$.
\begin{lemma}[\textbf{cost of the coupling for \UpdateHamiltonian}]\label{lemma-upper-bound-R-Ham}
Let $\I=(V,E,Q,\Phi)$ be the current MRF instance and $\I'=(V,E,Q,\Phi')$ the updated instance.
Assume that $\I$ satisfies Dobrushin-Shlosman condition (\Cref{condition-Dobrushin}) with constant $\delta>0$, and $\dham(\I,\I') =  \sum_{v\in V}\norm{\phi_v-\phi'_v}_1 + \sum_{e\in E}\norm{\phi_e- \phi'_e}_1 \leq L$.  
It holds that $\E{\Rham} = O\left( \frac{\Delta T L}{n\delta}  \right)$,
where $n= |V|$, $\Delta$ is the maximum degree of graph $G=(V,E)$.
\end{lemma}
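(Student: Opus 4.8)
The plan is to bound $\E{\Rham}=\sum_{t=1}^T\Pr[\gamma_t=1]$ by a union bound over the two events defining $\gamma_t$ in~\eqref{label-gamma-Ising}, namely $t\in\Isingset$ and $v_t\in\Gamma_G^+(\D_{t-1})$. The first contribution is immediate: since $\Isingset=\bigcup_v\Isingset_v$ selects each of the $T_v$ steps picking $v$ independently with probability $\pup_v$, and each step picks a uniform vertex, $\E{\abs{\Isingset}}=\sum_{t=1}^T\frac1n\sum_{v\in V}\pup_v$. Using $\pup_v\le 2(\norm{\phi_v-\phi'_v}_1+\sum_{e=\{u,v\}\in E}\norm{\phi_e-\phi'_e}_1)$ from~\eqref{eq-def-Ising-up} and summing over $v$ (each edge term is double-counted), one gets $\sum_v\pup_v=O(\dham(\I,\I'))=O(L)$, hence $\E{\abs{\Isingset}}=O(TL/n)$, which is within the target bound. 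So the real work is the neighborhood term.

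For the neighborhood term, since $v_t$ is drawn uniformly and independently of the history (and thus of $\D_{t-1}$), I would condition on $\D_{t-1}$ and write
\begin{align*}
\E{\sum_{t=1}^T\one{v_t\in\Gamma_G^+(\D_{t-1})}}
=\sum_{t=1}^T\E{\frac{\abs{\Gamma_G^+(\D_{t-1})}}{n}}
\le\frac{\Delta+1}{n}\sum_{t=1}^T\E{\abs{\D_{t-1}}},
\end{align*}
reducing everything to a uniform bound on $\E{\abs{\D_t}}$. The plan is to establish a one-step drift recurrence for $d_t\triangleq\E{\abs{\D_t}}$. Conditioning on $\D_{t-1}$ and on $v_t=v$, the configuration at every vertex other than $v$ is unchanged, and by Lemma~\ref{lemma-valid-coupling} the final pair $(X_t(v),Y_t(v))$ is drawn from the composed coupling $D_{\I_v,\I'_v}^{\sigma,\tau}$ of Definition~\ref{definition-Ising-coupling-step}, with $\sigma=X_{t-1}(\Gamma_G(v))$, $\tau=Y_{t-1}(\Gamma_G(v))$. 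Decomposing that coupling into its sampling step and resampling step and using a union bound over where they disagree yields
\begin{align*}
\Pr[X_t(v)\neq Y_t(v)]\le\DTV{\margin{\I}{v}(\cdot\mid\sigma)}{\margin{\I}{v}(\cdot\mid\tau)}+\DTV{\margin{\I}{v}(\cdot\mid\tau)}{\margin{\I'}{v}(\cdot\mid\tau)},
\end{align*}
where the second term equals the resampling probability $\DTV{\margin{\I}{v}(\cdot\mid\tau)}{\margin{\I'}{v}(\cdot\mid\tau)}$, which is $O(\pup_v)$ by Lemma~\ref{lemma-correct-up-bound}.

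Now I would combine the two terms as in the static analysis behind Proposition~\ref{proposition-mixing-Gibbs}. Averaging over $v$, the ``$-\one{v\in\D_{t-1}}$'' from removing the picked vertex combines with the first ($\sigma,\tau$) term, which I control via a path-coupling bound $\DTV{\margin{\I}{v}(\cdot\mid\sigma)}{\margin{\I}{v}(\cdot\mid\tau)}\le\sum_{u\in\Gamma_G(v)\cap\D_{t-1}}A_{\I}(u,v)$ and then the Dobrushin--Shlosman aggregation
\begin{align*}
\sum_{v\in V}\sum_{u\in\Gamma_G(v)\cap\D_{t-1}}A_{\I}(u,v)
=\sum_{u\in\D_{t-1}}\sum_{v\in V}A_{\I}(u,v)\le(1-\delta)\abs{\D_{t-1}},
\end{align*}
while the second term sums to $\frac1n\sum_v O(\pup_v)=O(L/n)$. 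Together with $d_0=0$ (since $\Y_0=\X_0$) this gives the recurrence $d_t\le(1-\frac{\delta}{n})d_{t-1}+O(L/n)$, whose fixed point bounds $d_t\le O(L/\delta)$ for all $t$. Substituting back yields $\sum_{t=1}^T\E{\abs{\D_{t-1}}}=O(TL/\delta)$ and hence the neighborhood term is $O(\frac{\Delta TL}{n\delta})$, which dominates the $\Isingset$ term and gives the claim. The main obstacle is the one-step drift inequality: correctly decomposing the composed coupling so that the genuinely Gibbs-sampling ``mixing'' part is separated from the ``instance-difference injection'' part, and verifying that the injected disagreements sum to $O(L)$ across vertices while the mixing part still contracts at rate $1-\delta/n$; once this decomposition is in place, solving the recurrence is routine.
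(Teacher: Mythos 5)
Your proposal is correct and follows essentially the same route as the paper's proof: union-bound $\gamma_t$ into the $t\in\Isingset$ and $v_t\in\Gamma_G^+(\D_{t-1})$ events, bound $\sum_v \pup_v = O(L)$, reduce everything to a uniform bound on $\E{\abs{\D_t}}$, and establish the drift recurrence $\E{\abs{\D_t}} \le \left(1-\frac{\delta}{n}\right)\E{\abs{\D_{t-1}}} + O(L/n)$ by splitting the composed coupling of \Cref{definition-Ising-coupling-step} into its sampling and resampling parts, with the resampling contribution controlled by \Cref{lemma-correct-up-bound}. The only cosmetic difference is that the paper recognizes the sampling step as exactly one step of the optimal Gibbs coupling on $\I$ and invokes \Cref{proposition-mixing-Gibbs} as a black box for its contraction, whereas you re-derive that contraction inline via the path-coupling bound $\DTV{\margin{\I}{v}(\cdot\mid\sigma)}{\margin{\I}{v}(\cdot\mid\tau)}\le\sum_{u\in\Gamma_G(v)\cap\D_{t-1}}A_{\I}(u,v)$ and the Dobrushin--Shlosman aggregation; the mathematical content is identical.
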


\subsubsection{Coupling for graph update}
\label{section-update-graph}
Let $\I = (V,E,Q,\Phi)$ be an MRF instance, where $\Phi=(\phi_a)_{a \in V \cup E}$. 
Let $\*X_0$ and $\exelog{X}{v_t}{T}$ be the current initial state and execution log such that the random process $(\X_t)_{t=0}^T$  is the Gibbs sampling on instance $\I$.
Let $\I'=(V',E',Q,\Phi')$ be the new instance obtained by updating the underlying graph, where $\Phi'=(\phi_a)_{a \in V' \cup E'}$ satisfies 
\begin{align*}
\forall a \in 	(V \cap V') \cap (E \cap E'), \quad \phi_a = \phi'_a.
\end{align*}
Given the update from $\I$ to $\I'$,
the subroutine $\UpdateGraph\left(\I,\I',\*X_0, \exelog{X}{v_t}{T} \right)$ updates the data to a new initial state $\*Y_0$ and a new execution-log $\exelog{Y}{v'_t}{T}$ such that the random process $(\*Y_t)_{t=0}^{T}$ 
 is the Gibbs sampling on instance $\I'$.
 
The subroutine \UpdateGraph{} does as the following three steps. 
 \begin{itemize}
 \item \AddVertex: add isolated vertices in $V' \setminus V$ with potentials $(\phi_v)_{v \in V' \setminus V}$, and update the instance $\I = (V,E,Q,\Phi)$ to a new instance 
 \begin{align}
 \label{eq-def-I-1}
 \I_1 = \I_1(\I,\I') \triangleq \left(V \cup V', E , Q, \Phi \cup (\phi_v)_{v \in V' \setminus V}\right);
 \end{align}
then update $\X_0$ and $ \exelog{X}{v_t}{T}$ to $\*Z^{}_0$ and $\Exelog(\I_1,T) = \exelog{Z^{}}{u_t}{T}$ such that the random process $(\*Z^{}_t)_{t=0}^{T}$ 
	is the Gibbs sampling on instance $\I_1$.
 \item \UpdateEdge: add new edges in $E' \setminus E$ with potentials $(\phi_e)_{e \in E' \setminus E}$, delete edges in $E \setminus E'$ , and update the instance $\I_1$ to a new instance
  \begin{align}
  \label{eq-def-I-2}
 \I_2 = \I_2(\I,\I')	&\triangleq \left(V \cup V', E' , Q, \Phi \cup (\phi_v)_{v \in V' \setminus V}  \cup (\phi_e)_{e \in E' \setminus E} \setminus (\phi_e)_{e \in E \setminus E'} \right)\notag\\
 &= \left(V \cup V', E' , Q, \Phi' \cup (\phi_v)_{v \in V \setminus V'}\right);
 \end{align}
  then update $\*Z^{}_0$ and $\exelog{Z^{}}{u_t}{T}$ to $\*Z^{'}_0$ and $\Exelog(\I_2,T) =\exelog{Z^{'}}{w_t}{T}$ such that the random process $(\*Z^{'}_t)_{t=0}^{T}$ 
	is the Gibbs sampling on instance $\I_2$.
\item \DeleteVertex: delete isolated vertices in $V \setminus V'$, and update the instance $\I_2$ to $\I'=(V',E',Q,\Phi')$; then update $\*Z^{'}_0$ and $\exelog{Z^{'}}{w_t}{T}$ to $\*Y_0$ and $\Exelog(\I',T) = \exelog{Y}{v'_t}{T}$ such that the random process $(\*Y_t)_{t=0}^{T}$ is the Gibbs sampling on instance $\I'$.
 \end{itemize}
The algorithm \UpdateGraph{} is given in \Cref{alg-update-graph}.
 
 \begin{algorithm}[ht]
\SetKwInOut{Data}{Data}
\SetKwInOut{Update}{Update}
\Data{$\X_0\in Q^V$ and $\Exelog(\I,T)=\exelog{X}{v_t}{T}$ for current $\I=(V,E,Q,\Phi)$.}
\Update{an update of the underlying graph that modifies $\I$ to $\I'=(V',E',Q,\Phi')$.}
construct instances $\I_1$ and $\I_2$ as in~\eqref{eq-def-I-1} and~\eqref{eq-def-I-2}\;
$\left(\*Z_0, \exelog{Z}{u_t}{T}\right) \gets \AddVertex \left(\I, \I_1 ,\*X_0, \exelog{X}{v_t}{T} \right)$\label{alg-graph-add-vtx1}\;
\tcp{add isolated vertices to update $\I$ to $\I_1$}
$\left(\*Z'\*_0, \exelog{Z'}{w_t}{T}\right) \gets \UpdateEdge \left(\I_1, \I_2,\*Z_0, \exelog{Z}{u_t}{T} \right)$\;
\tcp{add and delete edges to update $\I_1$ to $\I_2$}
$\left(\*Y_0, \exelog{Y}{v_t'}{T}\right) \gets \DeleteVertex \left(\I_2,\I', \*Z'\*_0, \exelog{Z'}{w_t}{T}  \right)$\;\tcp{delete isolated vertices to update $\I_2$ to $\I'$}
update the data to $\Y_0$ and $\Exelog(\I')=\exelog{Y}{v'_t}{T}$\;
\caption{$\UpdateGraph\left(\I,\I',\*X_0, \exelog{X}{v_t}{T} \right)$}\label{alg-update-graph}
\end{algorithm}


The subroutines \AddVertex{} and \DeleteVertex{} are simple, because they only deal with isolated variables. We first describe the main subroutine \UpdateEdge{}, then describe \AddVertex{} and \DeleteVertex{}.

\vspace{1em}
\paragraph{The coupling for \UpdateEdge}
We first consider the update of adding and deleting edges.
The update does not change the set of variables.
Let $\I = (V,E,Q,\Phi)$ be the current MRF instance.
Let $\*X_0$ and $\exelog{X}{v_t}{T}$ be the current initial state and execution log such that the random process $(\X_t)_{t=0}^T$  is the Gibbs sampling on instance $\I$.
Upon such an update, the new instance becomes $\I'=(V,E',Q,\Phi')$, where $\phi'_a = \phi_a$ for all $a \in V \cup (E \cap E')$. The subroutine $\UpdateEdge(\I,\I',\*X_0, \exelog{X}{v_t}{T})$ updates the data to $\*Y_0$ and $\exelog{Y}{v_t'}{T}$ such that the random process $(\Y_t)_{t=0}^T$  is the Gibbs sampling on instance $\I'$.
We use $\mathcal{S}\subseteq V$ to denote the set of vertices affected by the update from $\I$ to $\I'$:
\begin{align}
\label{eq-def-S}
\mathcal{S} \triangleq \bigcup_{(u,v) \in E \oplus E'}\{u,v\},
\end{align}
where $E \oplus E'$ is the symmetric difference between $E$ and $E'$.

We transform this pair of $\X_0\in Q^{V}$ and $\exelog{X}{v_t}{T}$ to a new pair of $\Y_0\in Q^{V}$ and $\exelog{Y}{v_t}{T}$ for $\I'$.
This is achieved as follows: the vertex sequence $(v_t)_{t=1}^{T}$ is identically coupled and the chain $(\X_t)_{t=0}^{T}$ is transformed to $(\Y_t)_{t=0}^{T}$ by the following one-step local coupling between $\X$ and $\Y$.
\begin{definition}[\textbf{one-step local coupling for \UpdateEdge}]\label{def-one-step-local-coupling-graph}
The two chains $(\X_t)_{t=0}^{\infty}$ on instance $\I = (V,E,Q,\Phi)$ and $(\Y_t)_{t=0}^{\infty}$ on instance $\I' = (V,E',Q,\Phi')$ are coupled as:
\begin{itemize}
\item Initially $\X_0=\Y_0 \in Q^V$; 
\item for $t=1,2,\ldots$, the two chains $\X$ and $\Y$ jointly do:
\begin{enumerate}
\item pick the same $v_t \in V$, and let $(X_t(u),Y_t(u)) \gets (X_{t-1}(u),Y_{t-1}(u))$ for all $u \in V \setminus \{v_t\}$;
\item sample $(X_t(v_t),Y_t(v_t))$ 
 from a coupling $D_{\I_{v_t},\I'_{v_t}}^{\sigma,\tau}(\cdot,\cdot)$
of the marginal distributions $\margin{\I}{v_t}(\cdot\mid \sigma)$ and $\margin{\I'}{v_t}(\cdot\mid \tau)$ with $\sigma=X_{t-1}(\Gamma_G({v_t}))$ and $\tau=Y_{t-1}(\Gamma_{G'}({v_t}))$, where $G=(V,E)$  and $G'=(V,E')$. 
\end{enumerate}
\end{itemize}
\end{definition}
\noindent
The local coupling $D_{\I_{v},\I'_{v}}^{\sigma,\tau}(\cdot,\cdot)$ for \UpdateEdge~is specified as follows.
\begin{align}
\forall \sigma\in Q^{\Gamma_G(v)},\tau\in Q^{\Gamma_{G'}(v)}:\quad
D_{\I_{v},\I'_{v}}^{\sigma,\tau}(\cdot,\cdot)=\begin{cases}
D_{\opt,\I_v}^{\sigma,\tau}(\cdot,\cdot) & \text{if } v\not\in\mathcal{S},\\
\mu_{v,\I}(\cdot\mid \sigma)\times \mu_{v,\I'}(\cdot\mid \tau) & \text{if }v\in\mathcal{S},
\end{cases}
\label{eq:local-coupling-D-for-R}
\end{align}
where $D^{\sigma,\tau}_{\opt,\I_{v}}$ is an optimal coupling of marginal distributions $\mu_{v,\I}(\cdot\mid \sigma)$ and $\mu_{v,\I}(\cdot\mid \tau)$.
Recall $\I_v = (\Gamma^+_v,E_v,Q,\Phi_v)$ where $E_v=\{\{u,v\}\in E\}$ and $\Phi_v=(\phi_a)_{a\in \Gamma_v^+\cup E_v}$.
Obviously, $D_{\I_{v},\I'_{v}}^{\sigma,\tau}$ is  a valid coupling of $\mu_{v,\I}(\cdot\mid \sigma)$ and $\mu_{v,\I'}(\cdot\mid \tau)$. Because for any $v\not\in\mathcal{S}$, we have $\I_v=\I_{v'}$ and hence $\mu_{v,\I}(\cdot\mid \sigma)$ and $\mu_{v,\I'}(\cdot\mid \tau)$ are the same, both defined by ~\eqref{eq-marginal} on $\I_v$. Thus they can be coupled by $D_{\opt,\I_v}^{\sigma,\tau}$. 

%
%

Obviously  the resulting $(\Y_t)_{t=0}^{T}$ is a faithful copy of the Gibbs sampling on instance $\I'$, assuming that $(\X_t)_{t=0}^{T}$ is such a chain on instance $\I$.

Recall $\D_t\triangleq \{v \in V \mid X_t(v) \neq Y_t(v) \}$ is set of disagreements between $\X_t$ and $\Y_t$. 
The following observation is easy to make for the $(\X_t)_{t=0}^{T}$ and $(\Y_t)_{t=0}^{T}$ coupled as above. 
\begin{observation}\label{observation:skip}
For any $t\in[1, T]$, if $v_t\not\in \mathcal{S}\cup\Gamma_G^+(\D_{t-1})$ then $\X_t(v_t)=\Y_t(v_t)$ and $\D_t=\D_{t-1}$.
\end{observation}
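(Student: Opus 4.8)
The plan is to unpack the hypothesis $v_t\not\in\mathcal{S}\cup\Gamma_G^+(\D_{t-1})$ into its two constituent facts and feed them into the definition of the coupling in~\eqref{eq:local-coupling-D-for-R}. First I would record that $v_t\notin\mathcal{S}$. By the definition of $\mathcal{S}$ in~\eqref{eq-def-S} as the set of endpoints of edges in $E\oplus E'$, together with the standing assumption that $\phi'_a=\phi_a$ for all $a\in V\cup(E\cap E')$, this means that no edge incident to $v_t$ is added or deleted and that the vertex potential at $v_t$ and all incident edge potentials are unchanged. Consequently $\Gamma_G(v_t)=\Gamma_{G'}(v_t)$ and the local instances coincide, $\I_{v_t}=\I'_{v_t}$. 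In particular the coupling selected at step $t$ in~\eqref{eq:local-coupling-D-for-R} is the first branch, the optimal coupling $D_{\opt,\I_{v_t}}^{\sigma,\tau}$, and the two boundary conditions $\sigma=X_{t-1}(\Gamma_G(v_t))$ and $\tau=Y_{t-1}(\Gamma_{G'}(v_t))$ are read off the same neighborhood $\Gamma_G(v_t)$.

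Next I would use $v_t\notin\Gamma_G^+(\D_{t-1})$. This says $v_t\notin\D_{t-1}$ and that no neighbor of $v_t$ in $G$ lies in $\D_{t-1}$, so $X_{t-1}(u)=Y_{t-1}(u)$ for every $u\in\Gamma_G(v_t)$. Hence $\sigma=\tau$. The two marginals being coupled, namely $\mu_{v_t,\I}(\cdot\mid\sigma)$ and $\mu_{v_t,\I}(\cdot\mid\tau)$, which are identical as distributions by $\I_{v_t}=\I'_{v_t}$, therefore coincide, and the optimal coupling of a distribution with itself is supported on the diagonal, forcing $X_t(v_t)=Y_t(v_t)$.

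It remains to conclude $\D_t=\D_{t-1}$, which I would handle by inspecting coordinates. At step $t$ only the coordinate $v_t$ is re-assigned in each chain, while $(X_t(u),Y_t(u))=(X_{t-1}(u),Y_{t-1}(u))$ for all $u\neq v_t$, so agreement or disagreement is preserved off $v_t$. For $v_t$ itself we have $v_t\notin\D_{t-1}$ from the hypothesis and $X_t(v_t)=Y_t(v_t)$ from the previous step, so $v_t$ lies outside both $\D_{t-1}$ and $\D_t$. Combining the two cases gives $\D_t=\D_{t-1}$.

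I do not anticipate a genuine obstacle here: the statement is a direct verification with no quantitative estimate. The only points requiring care are the implication $v_t\notin\mathcal{S}\Rightarrow\I_{v_t}=\I'_{v_t}$, which must invoke both the definition of $\mathcal{S}$ and the equality of unaffected potentials, and the fact that the optimal coupling of a distribution with itself is the identity coupling.
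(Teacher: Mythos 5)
Your proof is correct and follows exactly the reasoning the paper intends: the paper states this observation without proof, but its justification is precisely your unpacking — $v_t\notin\mathcal{S}$ forces the first branch of~\eqref{eq:local-coupling-D-for-R} with $\I_{v_t}=\I'_{v_t}$, $v_t\notin\Gamma_G^+(\D_{t-1})$ forces $\sigma=\tau$, and the optimal coupling of identical marginals is concentrated on the diagonal. The only fact worth stating explicitly (which you do) is that an optimal coupling of a distribution with itself has $\Pr[X_t(v_t)\neq Y_t(v_t)]=\DTV{\mu_{v_t,\I}(\cdot\mid\sigma)}{\mu_{v_t,\I}(\cdot\mid\sigma)}=0$ by the coupling lemma.
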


With this observation, the new $\Y_0$ and $\Exelog(\I',T)=\exelog{Y}{v_t}{T}$ can be generated from $\X_0$ and $\Exelog(\I, T)=\exelog{X}{v_t}{T}$ as in Algorithm~\ref{alg-constraint-update}. 


\begin{algorithm}[ht]
\SetKwInOut{Data}{Data}
\SetKwInOut{Update}{Update}
\Data{$\X_0\in Q^V$ and $\Exelog(\I,T)=\exelog{X}{v_t}{T}$ for current $\I=(V,E,Q,\Phi)$.}
\Update{an update of adding and deleting edges that modifies $\I$ to $\I'=(V,E',Q,\Phi')$.}
$t_0\gets 0$, $\D\gets \emptyset$, $\Y_0 \gets \X_0$ and construct  $\mathcal{S} \gets \bigcup_{(u,v) \in E \oplus E'}\{u,v\}$ \label{step-init-2}\;
\While{$\exists\,t_0<t\le T$ such that $v_{t} \in  \mathcal{S} \cup \Gamma_G^+(\D)$\label{step-while-condition}}{
find the smallest $t > t_0$ such that $v_{t} \in  \mathcal{S} \cup  \Gamma_G^+(\D)$\label{step-find-smallest-t}\;
for all $t_0< i < t$, let $Y_i(v_i)=X_i(v_i)$\label{step-assign-mid-spin}\;
sample $Y_{t}(v_t)$ conditioning on $X_{t}(v_t)$ according to the coupling $D_{v_t}^{\sigma,\tau}(\cdot,\cdot)$ (constructed in \eqref{eq:local-coupling-D-for-R}), where $\sigma=X_{t-1}(\Gamma_G({v_{t}}))$ and $\tau=Y_{t-1}(\Gamma_{G'}({v_{t}}))$\label{step-optimal-coupling}\;
\textbf{if} {$X_t(v_t)\neq Y_t(v_t)$} \textbf{then} {$\D\gets \D\cup\{v_t\}$} \textbf{else} {$\D\gets \D\setminus\{v_t\}$}\label{step-update-D}\;
$t_0\gets t$\;
}
for all remaining $t_0< i\le T$: let  
$\Y_i(v_i)=X_i(v_i)$\label{step-assign-end-spin}\;
update the data to $\Y_0$ and $\Exelog(\I',T)=\exelog{Y}{v_t}{T}$\label{step-push-update}\;
\caption{$\UpdateEdge(\I,\I',\*X_0, \exelog{X}{v_t}{T})$}\label{alg-constraint-update}
\end{algorithm}

Observation~\ref{observation:skip} says that the nontrivial coupling between $X_t(v_t)$ and $Y_t(v_t)$ is only needed when $v_t\in \mathcal{S}\cup\Gamma_G^+(\D_{t-1})$, which occurs rarely as long as $\D_{t-1}$ remains small. 
This is a key to ensure the small incremental time cost of Algorithm~\ref{alg-constraint-update}.
Formally, for the $(\X_t)_{t=0}^{T}$ and $(\Y_t)_{t=0}^{T}$ coupled as above, for any $1\leq t \leq T$, let $\gamma_t$ indicate whether this bad event occurs:
\begin{align}
\label{label-gamma}
\gamma_t \triangleq \one{v_t \in \mathcal{S} \cup \Gamma_{G}^+(\D_{t-1})},
\end{align}
and let $\Rgra$ denote the number of occurrences of such bad events:
\begin{align}
\label{eq-def-R}
\Rgra \triangleq \sum_{t=1}^{T}\gamma_t.
\end{align}
We will see that $\Rgra$ dominates the cost of Algorithm~\ref{alg-constraint-update}, 
once a data structure is given to encode the execution-log and resolve the updates in Line~\ref{step-push-update} and various queries (in Lines~\ref{step-while-condition}, \ref{step-find-smallest-t} and~\ref{step-optimal-coupling}) to the data.
%
%
%



\begin{lemma}[\textbf{cost of the coupling for \UpdateEdge}]\label{lemma-upper-bound-R}
Let $\I=(V,E,Q,\Phi)$ be the current MRF instance and $\I'=(V,E',Q,\Phi')$ the updated instance.
Assume that $\I'$ satisfies Dobrushin-Shlosman condition (\Cref{condition-Dobrushin}) with constant $\delta > 0$, and $|E \oplus E'| \leq L$.  
It holds that $\E{\Rgra} = O\left( \frac{\Delta T L}{n\delta}  \right)$,
where $n= |V|$, $\Delta= \max\{\Delta_G,\Delta_{G'}\}$, and $\Delta_G,\Delta_{G'}$ denote the maximum degree of $G=(V,E)$ and $G'=(V,E')$.
\end{lemma}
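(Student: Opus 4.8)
The plan is to control $\E{\Rgra}$ in two stages: first reduce it to the expected sizes of the disagreement sets $\D_{t-1}$, and then bound $\E{|\D_{t-1}|}$ uniformly in $t$ by a path-coupling drift argument driven by the Dobrushin-Shlosman condition on $\I'$.

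For the first stage, I would observe that conditioned on the history $\mathcal{F}_{t-1}$ up to step $t-1$, the vertex $v_t$ is uniform over $V$, so
\begin{align*}
\E{\gamma_t \mid \mathcal{F}_{t-1}} = \frac{|\mathcal{S}\cup\Gamma_G^+(\D_{t-1})|}{n} \le \frac{|\mathcal{S}| + (\Delta+1)|\D_{t-1}|}{n} \le \frac{2L + (\Delta+1)|\D_{t-1}|}{n},
\end{align*}
using $|\mathcal{S}|\le 2|E\oplus E'|\le 2L$ and $|\Gamma_G^+(S)|\le(\Delta+1)|S|$. Summing over $t$ and taking expectations gives $\E{\Rgra}\le \frac{2LT}{n} + \frac{\Delta+1}{n}\sum_{t=1}^T\E{|\D_{t-1}|}$, so it remains to show $\E{|\D_t|}=O(L/\delta)$ for every $t$.

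The heart of the proof is the one-step drift
\begin{align*}
\E{|\D_t|\mid\mathcal{F}_{t-1}} \le \Big(1-\frac{\delta}{n}\Big)|\D_{t-1}| + \frac{4L}{n}.
\end{align*}
To prove this I would analyze the expected change $|\D_t|-|\D_{t-1}|$ by conditioning on the choice $v_t=v$; since only the spin at $v_t$ is updated, only the membership of $v_t$ in $\D$ can change. Splitting on whether $v\in\mathcal{S}$ and whether $v\in\D_{t-1}$: for $v\notin\mathcal{S}$ the local instances coincide ($\I_v=\I'_v$) and the pair $(X_t(v),Y_t(v))$ is drawn from the optimal coupling, so $v$ disagrees after the step with probability exactly $\DTV{\mu_{v}^\sigma}{\mu_{v}^\tau}$, which by the telescoping (path-coupling) estimate and the definition of the influence matrix is at most $\sum_{u\in\Gamma(v)\cap\D_{t-1}}A_{\I'}(u,v)$; for $v\in\mathcal{S}$ the product coupling is used and I would only use the crude bounds that such a vertex becomes a disagreement with probability at most $1$ (a source term) and leaves $\D$ with a nonpositive contribution. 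Collecting the ``$-1$'' terms from resolved disagreements, summing the influence bounds over all $v$ and swapping the order of summation, the Dobrushin-Shlosman condition $\sum_v A_{\I'}(u,v)\le 1-\delta$ yields $\sum_{v\notin\mathcal{S}}\DTV{\mu_v^\sigma}{\mu_v^\tau}\le(1-\delta)|\D_{t-1}|$, while $\mathcal{S}$ contributes at most $2|\mathcal{S}|\le 4L$ to the source, giving the stated drift. Since $\X_0=\Y_0$ forces $\D_0=\emptyset$, iterating the drift yields $\E{|\D_t|}\le \frac{4L}{n}\sum_{i\ge 0}(1-\delta/n)^i = \frac{4L}{\delta}$ for all $t$.

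Plugging $\E{|\D_{t-1}|}\le 4L/\delta$ into the first-stage bound gives $\E{\Rgra}\le \frac{2LT}{n}+\frac{\Delta+1}{n}\cdot T\cdot\frac{4L}{\delta} = O\big(\frac{\Delta TL}{n\delta}\big)$, as claimed. The main obstacle is the drift inequality: one must carefully separate the two distinct sources of disagreement — propagation through the graph, tamed by the Dobrushin-Shlosman contraction, and the at most $2L$ seed disagreements forced at the affected set $\mathcal{S}$ where the coupling is necessarily the product coupling — and verify that extending the influence sum from $v\notin\mathcal{S}$ to all $v$ only adds nonnegative terms, so that the Dobrushin bound on $\I'$ applies cleanly.
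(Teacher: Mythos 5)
Your proof is correct, and its first stage --- reducing $\E{\Rgra}$ to $\frac{2LT}{n}+\frac{\Delta+1}{n}\sum_{t}\E{\abs{\D_{t-1}}}$ via the uniformity of $v_t$ --- is exactly the paper's. Where you genuinely diverge is in how the drift inequality $\E{\abs{\D_t}\mid \D_{t-1}}\le\left(1-\frac{\delta}{n}\right)\abs{\D_{t-1}}+\frac{4L}{n}$ is established. You prove it from first principles: at a vertex $v\notin\mathcal{S}$ the optimal coupling disagrees with probability exactly $\DTV{\mu_v^\sigma}{\mu_v^\tau}$, which you telescope through the influence matrix, and then you swap the order of summation so the row-sum condition $\sum_v A_{\I'}(u,v)\le 1-\delta$ applies directly; in effect you inline the path-coupling argument that underlies \Cref{proposition-mixing-Gibbs}, paying $2\abs{\mathcal{S}}\le 4L$ as the source term from $\mathcal{S}$. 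The paper instead keeps that proposition as a black box and gets the same drift (same $\frac{4L}{n}$ source) by a comparison argument: it introduces a reference one-step optimal coupling $(\X'_t,\Y'_t)$ of Gibbs sampling on $\I'$, observes that the per-vertex terms of the two couplings coincide for $v\notin\mathcal{S}$ (since $\I_v=\I'_v$ there), and uses the triangle inequality for Hamming distance to pay an additive $2$ for each $v\in\mathcal{S}$. Your route is more elementary and self-contained; the paper's comparison template is more modular and, importantly, metric-agnostic --- it needs only a step-wise contraction statement for the reference coupling, not the specific Dobrushin row-sum structure, which is precisely what lets the paper rerun the identical argument for the hardcore model (\Cref{lemma-hardcore-2}), where contraction holds not for Hamming distance but for Vigoda's potential function $\rho_{\I}$; your direct influence-matrix computation would not transfer to that setting.
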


\vspace{1em}
\paragraph{Coupling for \AddVertex}
Let $\I = (V,E,Q,\Phi)$ be the current MRF instance.
Let $\*X_0$ and $\exelog{X}{v_t}{T}$ be the current initial state and execution log such that the random process $(\X_t)_{t=0}^T$  is the Gibbs sampling on instance $\I$.
The update adds a set of \emph{isolated} vertices $S$ with potentials $(\phi_a)_{a \in S}$.
Upon such an update, the new instance becomes
\begin{align*}
\I'=(V',E, Q, \Phi') = (V \cup S, E, Q, \Phi \cup (\phi_a)_{a \in S}).	
\end{align*}
The subroutine $\AddVertex(\I,\I',\*X_0, \exelog{X}{v_t}{T})$ updates the data to $\*Y_0$ and $\exelog{Y}{v_t'}{T}$ such that the random process $(\Y_t)_{t=0}^T$  is the Gibbs sampling on instance $\I'$.
%

Since the new instance $\I'$ is the same as $\I$ except the isolated vertices in $S$, 
we can construct $\Y_0(V)=\X_0$ and $\Y_0(S)\in  Q^S$ is arbitrary, and $\Exelog(\I',T)=\exelog{Y}{v_t'}{T}$ can be constructed by inserting random appearances of vertices in $S$ into $(v_t)_{t=1}^{T}$, 
while for any $v \in S$, the $Y_t(v)$ at the inserted steps $t$ are sampled i.i.d.~from the marginal distribution $\margin{\I'}{v}(\cdot)$, which is just a distribution over $Q$ proportional to $\exp(\phi_v(\cdot))$ in the case of Gibbs sampling, since $v$ is an isolated vertex.
Let $[T]\triangleq\{1,2,\ldots,T\}$.
Formally:
\begin{enumerate}
\item
Let $P\subseteq [T]$ be a random subset such that each $t\in[T]$ is selected into $P$ independently with probability $\frac{|S|}{|S \cup V|}$. 
Let $h = |P|$ and enumerate all elements in $P$ as $r_1 < r_2 <\ldots < r_h$.
Let $m =  T-h$ and enumerate all elements in $[T]\setminus P$ as $\ell_1<\ell_2<\cdots<\ell_m$.
\item
For each $1\leq i \leq h$, sample $u_i \in S$ uniformly and independently.
\item
Let $\exelog{X}{v_t}{m}\gets\textsf{LengthFix}\left(\I,\X_0,\exelog{X}{v_t}{T},m\right)$.
\item
Construct $\exelog{Y}{v_t'}{T'}$ as follows:
\begin{align*}
\forall\, t = r_k \in P&:
\quad v'_t = u_k
\quad \text{and }\quad 
Y_t(v_t')\sim\margin{\I'}{u_k}(\cdot), \text{ where } \margin{\I'}{u_k}(c) \propto \exp(\phi_{u_k}(c));\\
\forall\, t = \ell_k\in [T'] \setminus P&:
\quad v'_t = v_{k}
\quad \text{and }\quad 
Y_t(v_t') = X_k(v_{t}')=X_k(v_k).
\end{align*}
\end{enumerate}

It is easy to see that $(\Y_t)_{t=0}^{T'}$ is a faithful copy of the Gibbs sampling on instance $\I'$.

\vspace{1em}
\paragraph{Coupling for \DeleteVertex}
Let $\I = (V,E,Q,\Phi)$ be the current MRF instance. The update deletes 
a set of \emph{isolated} variables $S \subseteq V$.
Let $\*X_0$ and $\exelog{X}{v_t}{T}$ be the current initial state and execution log such that the random process $(\X_t)_{t=0}^T$  is the Gibbs sampling on instance $\I$.
Upon such update, the instance is updated to $\I'=(V',E,Q,\Phi')$, where $V'=V\setminus S$ and $\Phi'=\Phi\setminus(\phi_{v})_{v \in S}$. 
The subroutine $\DeleteVertex(\I,\I',\*X_0, \exelog{X}{v_t}{T})$ updates the data to $\*Y_0$ and $\exelog{Y}{v_t'}{T}$ such that the random process $(\Y_t)_{t=0}^T$  is the Gibbs sampling on instance $\I'$.
%

We can simply construct $\Y_0=X_0(V')$.
The new execution-log $\Exelog(\I',\epsilon)=\exelog{Y}{v_t'}{T}$ can be constructed from the original $\Exelog(\I,T)=\exelog{X}{v_t}{T}$ by simply deleting all appearances of vertices $v \in S$ in $(v_t)_{t=1}^T$ and the corresponding trivial transitions  $X_t(v)$, followed by calling $\textsf{LengthFix}$ on instance $\I'$ to properly append the chain to the length $T$.

It is easy to see that $(\Y_t)_{t=0}^{T}$ is a faithful copy of the Gibbs sampling on instance $\I'$.

\subsection{Data structure for Gibbs sampling}
\label{section-DS}
We now describe an efficient data structure for Gibbs sampling $(\X_t)_{t=0}^T$.
Let $\I = (V, E, Q, \Phi)$ be an MRF instance.
The data structure should provide the following functionalities.

\begin{itemize}
\item \textbf{Data:} an initial state $\X_0\in Q^V$ and an execution-log $\exelog{X}{v_t}{T}\in (V\times  Q)^T$ that records the $T$ transitions of the Gibbs sampling $(\X_t)_{t=0}^T$;
\item \textbf{updates:}   
\begin{itemize}
\item $\datainsert(t,v,c)$, which inserts a transition $\Brac{v,c}$ after the $(t-1)$-th transition $\Brac{v_{t-1},X_{t-1}(v_{t-1})}$;
\item $\dataremove(t)$, which deletes the $t$-th transition $\Brac{v_t,X_t(v_t)}$;
\item $\dataupdate(t,c)$, which changes the $t$-th transition $\Brac{v_t,X_t(v_t)}$ to $\Brac{v_t, c}$;
\end{itemize}
Note that the updates $\datainsert(t,v,c)$ and $\dataremove(t)$ change the length $T$ of the chain, as well as the order-numbers of all transitions after the inserted/deleted transition.
\item \textbf{queries:}  
\begin{itemize}
\item
$\dataeval(t,v)$, which returns the value of $X_t(v)$  for arbitrary $t$ and $v$ (not necessarily $=v_t$);
\item
$\datasuccessor(t,v)$, which returns $i$ for the smallest $i>t$ such that $v_i=v$ if such $i$ exists, or returns $\perp$ if otherwise.
\end{itemize}
\end{itemize}

It is not difficult to realize that the query $\dataeval(t,v)$ can actually be solved by a predecessor search defined symmetrically to $\datasuccessor(t,v)$.
This data structure problem for Gibbs sampling is quite natural and is of independent interest.

\begin{theorem}[\textbf{data structure for Gibbs sampling}]
\label{theorem-DS}
There exists a deterministic dynamic data structure which stores an arbitrary initial state $\X_0\in Q^V$ and an execution-log  $\exelog{X}{v_t}{T}\in (V\times  Q)^T$   for Gibbs sampling
using $O(T+|V|)$ memory words, each of $O(\log T+\log |V| + \log| Q|)$ bits,
such that each operation among \datainsert{}, \dataremove{}, \dataupdate{}, \dataeval{} and \datasuccessor{} can be resolved in time $O(\log^2 T+\log |V|)$.
\end{theorem}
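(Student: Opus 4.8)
The plan is to maintain a two-level collection of balanced binary search trees: a single \emph{order-statistics tree} $\TS$ (e.g.\ a red-black or weight-balanced tree augmented with subtree sizes and parent pointers) whose in-order traversal is exactly the transition sequence $\exelog{X}{v_t}{T}$, together with one balanced search tree $\TB_v$ for each vertex $v$ occurring in the log. A node of $\TS$ at position $t$ stores the pair $\Brac{v_t, X_t(v_t)}$, and the augmentation lets $\TS$ support, in $O(\log T)$ time each: $\osselect(t)$ returning a pointer to the node at position $t$, $\osrank(x)$ returning the current position of a node $x$ (by summing left-subtree sizes on the path to the root), and insertion/deletion of a node at a given position together with the induced shift of all later positions. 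Each per-vertex tree $\TB_v$ stores pointers to exactly those nodes of $\TS$ whose picked vertex is $v$; crucially, $\TB_v$ is \emph{ordered by the current position} of these nodes, but this ordering key is never stored --- it is recomputed on demand by calling $\osrank$ on $\TS$. We also keep a dictionary over $V$ mapping each $v$ to the root of $\TB_v$ and storing $\X_0(v)$, costing $O(\log\abs{V})$ per access.

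Given these structures, each required operation reduces to $O(\log T)$ work on $\TS$ plus one search, insertion, or deletion in a single $\TB_v$. The query $\datasuccessor(t,v)$ is a successor search in $\TB_v$ for the smallest stored node of rank $>t$; $\dataeval(t,v)$ is the symmetric predecessor search, returning the spin recorded at the largest stored node of rank $\le t$, or $\X_0(v)$ if no such node exists. The update $\dataupdate(t,c)$ locates the node at position $t$ by $\osselect(t)$ and overwrites its spin in $O(\log T)$, touching no $\TB_v$ since the picked vertex is unchanged. The update $\datainsert(t,v,c)$ performs $\osinsert$ at position $t$ on $\TS$ and then inserts the returned pointer into $\TB_v$; $\dataremove(t)$ first reads $v=v_t$ via $\osselect(t)$, removes the corresponding pointer from $\TB_v$, and then performs $\osdelete$ on $\TS$. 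The only nonconstant cost beyond $\TS$ is that every comparison made while navigating $\TB_v$ compares two current positions, each obtained by an $O(\log T)$ call to $\osrank$; since $\TB_v$ is balanced its depth is $O(\log T)$, so each $\TB_v$ operation costs $O(\log^2 T)$ and every data-structure operation runs in $O(\log^2 T + \log\abs{V})$ as claimed.

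The one place that needs care --- and the main obstacle --- is the correctness of searching $\TB_v$ against keys (positions) that are never stored and that shift under every insertion or deletion elsewhere in the log. This is justified by the invariant that the in-order of $\TB_v$ always coincides with the increasing order of the current positions of $v$'s occurrences. Indeed, inserting or deleting any transition \emph{other} than an occurrence of $v$ shifts the positions of $v$'s occurrences uniformly and hence preserves their relative order, leaving $\TB_v$ valid with no update at all; the order among $v$'s own occurrences can change only when an occurrence of $v$ is itself inserted or deleted, which are precisely the cases where we explicitly modify $\TB_v$, locating the insertion/deletion point by $\osrank$-comparisons that reflect the current positions. Balancing rotations in both trees merely rearrange structure and refresh subtree sizes, so they stay consistent with on-the-fly comparisons. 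For space, $\TS$ has $T$ nodes and the trees $\set{\TB_v}$ together hold $T$ pointers (one per transition), while the vertex dictionary and $\X_0$ use $O(\abs{V})$ words, giving $O(T+\abs{V})$ words in total; each stored field is a vertex label, a spin value, a position/size, or a pointer, i.e.\ $O(\log T + \log\abs{V} + \log\abs{Q})$ bits, matching the statement.
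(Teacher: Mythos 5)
Your proposal is correct and follows essentially the same design as the paper's proof: a two-level structure combining an order-statistic tree (providing rank/select as an ordering oracle under insertions and deletions) with per-vertex balanced search trees searched via on-the-fly rank comparisons, plus dictionaries for $\X_0$ and the per-vertex roots, yielding the same $O(\log^2 T + \log|V|)$ time and $O(T+|V|)$-word space bounds. The only difference is cosmetic --- the paper stores the actual transitions in the per-vertex trees $\T_v$ and keeps pointers in the OS-tree $\hatT$, whereas you store the transitions in the OS-tree and keep pointers in the per-vertex trees --- which does not affect the argument.
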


\begin{proof}
The initial state and execution-log  are stored by separate data structures.

The initial state $\X_0\in  Q^V$ is maintained by a deterministic dynamic dictionary, with $(v,X_0(v))$ for vertices $v\in V$ as the key-value pairs.
%
%
Such a deterministic data structure answers queries of $X_0(v)$ given any $v\in V$ while $V$ is dynamically changing.

The execution-log $\exelog{X}{v_t}{T} \in (V \times  Q)^T$ is stored by $|V|$ balanced search trees $(\T_v)_{v \in V}$ (e.g.,~red-black trees).
In each tree $\T_v$, each node in $\T_v$ stores a distinct transition $\Brac{v_t,X_t(v_t)}$ with $v_t=v$, such that the in-order tree walk of $\T_v$ prints all $\Brac{v_t,X_t(v_t)}$ with $v_t=v$ in the order they appear in the execution-log $\exelog{X}{v_t}{T}$.
Altogether these trees $(\T_v)_{v \in V}$  have $T$ nodes in total.
Besides, these trees $(\T_v)_{v \in V}$ are indexed by another deterministic dynamic dictionary, with $(v,p_v)$ for vertices $v\in V$ as key-value pairs, where each $p_v$ is the pointer to the root of tree $\T_v$.
This dictionary provides random accesses to the trees $\T_v$ for all $v\in V$, while $V$ is dynamically changing.

Given any $t$, we want to answer predecessor (or successor) search for the largest $i\le t$ (or smallest $i>t$) such that $v_i=v$.
This is achieved with assistance from another data structure, an \emph{order-statistic tree} (or \emph{OS-tree}) $\hatT$~\cite[Section~14]{cormen2009introduction}.
In $\hatT$, each node stores the ``identity'' of an individual transition $\exelog{X}{v_t}{T}$ (which is actually a pointer  to the node storing the transition $\Brac{v_t,X_t(v_t)}$ in the tree $\T_v$ with $v_t=v$).
In particular, the in-order tree walk of $\hatT$ prints all $\exelog{X}{v_t}{T}$ in that order.
Such a data structure supports two query functions: (1)~Select: given any $t$, returns the identity of the $t$-th transition $\Brac{v_t,X_t(v_t)}$; and (2)~Rank: given the identity of any transition $\Brac{v_t,X_t(v_t)}$, returns its rank $t$ in the sequence $\exelog{X}{v_t}{T}$.
Besides, the OS-tree $\hatT$ also supports standard insertion (of a new transition $\Brac{v,c}$ to a given rank $t$) and deletion (of the transition $\Brac{v_t,X_t(v_t)}$ at a given rank $t$).
As a balanced tree, all these queries and updates for the OS-tree $\hatT$ can be resolved in $O(\log T)$ time.

The successor and predecessor searches mentioned above for any $v\in T$ and $t$, can then be resolved by binary searches in the balanced search tree $\T_v$ while querying the OS-tree $\hatT$ as an oracle for ordering, which takes time at most $O(\log^2 T+\log |V|)$ in total, 
where the $\log |V|$ cost is used for accessing the root of $\T_v$ via the dynamic dictionary that indexes the trees $(T_v)_{v\in V}$.

This solves the successor query $\datasuccessor(t, v)$ as well as the evaluation query $\dataeval(t, v)$ for Gibbs sampling, both within time cost $O(\log^2 T+\log |V|)$, where the latter is actually solved by the predecessor search for the largest $i\le t$ such that $v_i=v$ and returning the value of $X_i(v_i)$ recorded in the $i$-th transition $\Brac{v_i,X_i(v_i)}$ or returning the value of $X_0(v)$ if no such $i$ exists.

It is also easy to verify that with the above dynamic data structures, all updates, including: $\datainsert(t,v,c)$, $\dataremove(t)$ and $\dataupdate(t, c)$, can be implemented with cost at most $O(\log^2 T+\log |V|)$, and the data structures together use $O(T+|V|)$ words in total, where each word consists of $O(\log T+\log|V|+\log| Q|)$ bits.
\end{proof}

\subsection{Single-sample dynamic Gibbs sampling algorithm}

With the data structure for Gibbs sampling stated in Theorem~\ref{theorem-DS}, 
the couplings constructed in Section~\ref{section-coupling} can be implemented as the algorithm for dynamic Gibbs sampling. Recall $\dgraph(\cdot,\cdot)$ and $\dham(\cdot,\cdot)$ are defined in~\eqref{eq-def-dg-dh}.


\begin{lemma}[\textbf{single-sample dynamic Gibbs sampling algorithm}]\label{lemma-dynamic-gibbs-sampling}
Let $\epsilon:\mathbb{N}^+ \to (0,1)$ be an error function. 
Let $\I = (V, E, Q, \Phi)$ be an MRF instance with $n=|V|$ and $\I'= (V', E', Q, \Phi')$ the updated instance with $n' = |V'|$.
Denote $T=T(\I)$, $T' = T(\I')$ and $T_{\max}=\max\{T,T'\}$.
Assume $\dgraph(\I,\I')\leq \Lgra = o(n)$, $\dham(\I,\I')\leq \Lham$, and $T,T'\in\Omega(n\log n)$.
%
The single-sample dynamic Gibbs sampling algorithm (\Cref{alg-dynamic-Gibbs}) does the followings:
\begin{itemize}
\item (\textbf{space cost})
The algorithm maintains an explicit copy of a sample $\X\in Q^V$ for the current instance $\I$, and also a data structure using  $O(T)$ memory words, each of $O(\log T)$ bits, for representing an initial state $\X_0\in Q^V$ and an execution-log $\Exelog(\I,T)=\exelog{X}{v_t}{T}$ for the Gibbs sampling $(\X_t)_{t=0}^T$ on $\I$ generating sample $\X=\X_T$.
\item (\textbf{correctness})
Assuming that Condition~\ref{exe-log-invariant} holds for $\X_0$ and $\Exelog(\I,T)$ for the Gibbs sampling on $\I$,
upon each update 
that modifies $\I$ to $\I'$, the algorithm updates $\X$ to an explicit copy of a sample $\Y\in Q^{V'}$ for the new instance $\I'$, and correspondingly updates the $\X_0$ and $\Exelog(\I,T)$ represented by the data structure to a $\Y_0\in Q^{V'}$ and $\Exelog(\I',T')=\exelog{Y}{v_t'}{T'}$ for the Gibbs sampling $(\Y_t)_{t=0}^{T'}$ on $\I'$ generating the new sample $\Y=\Y_{T'}$, 
where $\Y_0$ and $\Exelog(\I',T')$ satisfy Condition~\ref{exe-log-invariant} for the Gibbs sampling on $\I'$,
therefore, 
\[
\DTV{\Y}{\mu_{\I'}}\le\epsilon(n').
\]
\item (\textbf{time cost})
Assuming Condition~\ref{exe-log-invariant} for $\X_0$ and $\Exelog(\I,T)$ for the Gibbs sampling on $\I$,
the expected time complexity for resolving an update is
\begin{align*}
O\left(\Delta n  + \Delta\left(|T-T'|+ \frac{T_{\max}(\Lham +\Lgra) }{n} + \E{\Rham} + \E{\Rgra} \right)\log^2 T_{\max}\right),
\end{align*}
where $\Delta = \max\{\Delta_G,\Delta_{G'}\}$, $\Delta_G$,$\Delta_{G'}$ denote the maximum degrees of  $G=(V,E)$ and $G'=(V',E')$, $\Rham$ is defined in~\eqref{eq-def-R-Ising} 
for the subroutine $\UpdateHamiltonian$ in \Cref{alg-dynamic-Gibbs}, 
and $\Rgra$ is defined in~\eqref{eq-def-R} for the subroutine $\UpdateEdge$ in \Cref{alg-dynamic-Gibbs}.
\end{itemize}
\end{lemma}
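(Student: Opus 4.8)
The plan is to establish the three claims by assembling the coupling constructions of \Cref{section-coupling}, the data structure of \Cref{theorem-DS}, and the mixing bound of \Cref{proposition-mixing-Gibbs}. The space bound is immediate: by \Cref{theorem-DS} the execution-log together with the initial state is stored in $O(T+|V|)$ words of $O(\log T+\log|V|+\log|Q|)$ bits, and since the hypothesis $T\in\Omega(n\log n)$ forces $|V|=n\le T$, this is $O(T)$ words of $O(\log T)$ bits; the explicitly maintained sample $\X$ adds only $O(n)=O(T)$ words.

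For correctness I would argue by composing valid couplings along the decomposition used in \Cref{alg-dynamic-Gibbs}: $\I\to\Imid$ via \UpdateHamiltonian, then $\Imid\to\I'$ via \UpdateGraph{} (itself \AddVertex, \UpdateEdge, \DeleteVertex), then \LengthFix. Each stage preserves \Cref{exe-log-invariant}. Indeed, \Cref{lemma-valid-coupling} shows the one-step local coupling of \Cref{def-one-step-local-coupling-dynamic} turns a faithful Gibbs chain on $\I$ into one on $\Imid$; the coupling \eqref{eq:local-coupling-D-for-R} is valid for \UpdateEdge{} (off $\mathcal{S}$ the two local instances coincide, so the optimal coupling applies, and on $\mathcal{S}$ the product coupling is used); and \AddVertex/\DeleteVertex{} only splice in or delete independent transitions at isolated vertices, whose marginals are proportional to $\exp(\phi_v(\cdot))$. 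Hence the final $(\Y_t)_{t=0}^{T'}$ is a faithful Gibbs chain on $\I'$ with $T'=T(\I')\ge\tau_{\mathsf{mix}}(\I',\epsilon(n'))$, and the coupling-to-stationarity bound gives $\DTV{\Y_{T'}}{\mu_{\I'}}\le\epsilon(n')$.

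The main effort, and the principal obstacle, is the expected running time, which I would charge subroutine by subroutine using the $O(\log^2 T_{\max})$ per-operation cost of \Cref{theorem-DS}. The unit of work is processing one nontrivial transition: recovering the neighborhoods $\sigma,\tau$ takes $O(\Delta)$ \dataeval{} queries, evaluating the local marginals \eqref{eq-marginal} and drawing from the coupling is $O(\Delta)$ arithmetic, and rewriting the outcome is a single \dataupdate{} (or \datainsert/\dataremove), for a total of $O(\Delta\log^2 T_{\max})$. To realize the \textbf{while} loops (Line~\ref{line-next-ham}, Line~\ref{step-find-smallest-t}) without scanning, I would keep a frontier of \datasuccessor{} values over the currently active vertices $\mathcal{S}\cup\Gamma_G^+(\D)$ and a pointer into the precomputed $\Isingset$, extract the minimum next candidate, and refresh only the $O(\Delta)$ successor values affected when $\D$ gains or loses $v_t$. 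This keeps the overhead of locating each bad step within the same $O(\Delta\log^2 T_{\max})$ budget, so the disagreement-driven steps of \UpdateHamiltonian{} and \UpdateEdge{} cost $O\big(\Delta(\E{\Rham}+\E{\Rgra})\log^2 T_{\max}\big)$ in expectation.

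It remains to account for the work independent of disagreement percolation. Editing $\I$ into $\I'$—rewriting the graph and changed potentials, computing each $\pup_v$ of \eqref{eq-def-Ising-up} (a sum over incident edges), and sampling the sets $\Isingset_v$—costs $O(\Delta n)$, the additive term. The ``forced'' operations are the filter set $\Isingset$ in \UpdateHamiltonian, with $\E{|\Isingset|}=\frac{T}{n}\sum_v\pup_v=O(T\Lham/n)$ since each edge contributes to its two endpoints, and the isolated-vertex appearances spliced or removed in \AddVertex/\DeleteVertex, numbering $O(T\Lgra/n)$ in expectation; each is processed in $O(\Delta\log^2 T_{\max})$, yielding the $\frac{T_{\max}(\Lham+\Lgra)}{n}$ term (the $O(\Delta|\mathcal{S}|\log^2 T_{\max})$ frontier initialization over $\mathcal{S}$ is absorbed here, using $T_{\max}/n\ge\log n$). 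Finally \LengthFix{} performs $|T-T'|$ insertions or deletions, simulating fresh Gibbs steps when extending, at $O(\Delta\log^2 T_{\max})$ each. Summing by linearity of expectation gives the stated bound. The delicate point I would verify most carefully is the frontier maintenance: that the total number of \datasuccessor{} refreshes over one update stays $O(\Delta(\Rham+\Rgra+|\mathcal{S}|))$, rather than incurring a cost scaling with $|\D|$ at every loop iteration.
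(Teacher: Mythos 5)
Your proposal is correct and follows essentially the same route as the paper's proof: the space bound via \Cref{theorem-DS}, correctness via the validity of the couplings in \Cref{section-coupling} preserving \Cref{exe-log-invariant}, and a time analysis that charges $O(\Delta\log^2 T_{\max})$ per nontrivial transition, bounds the forced work by $\E{\abs{\Isingset}}=O(T\Lham/n)$ plus the $O(T\Lgra/n)$ isolated-vertex splices, and locates bad steps with successor-value heaps refreshed only when $\D$ changes (the paper's heaps $\+H_1,\+H_2$ in the proof of \Cref{claim-time-cost}). The ``delicate point'' you flag is exactly what the paper resolves by noting that at most $\Delta\Rham$ (resp.\ $\Delta\Rgra$) successor entries are ever inserted, so the refresh cost never scales with $\abs{\D}$ per iteration; your accounting differs only cosmetically in that the paper splits the work into a preparation stage (\Cref{claim-prep-single}) and an update stage (\Cref{claim-time-cost}) rather than per subroutine.
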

We remark that the $O(\Delta n)$ in time cost is necessary because the update from $\I$ to $\I'$ may change all the potentials of vertices and edges. One can reduce the $O(\Delta n)$ from the time cost if we further restrict that one update can only change constant number of vertices, edges, and potentials. 

The following result is a corollary from \Cref{lemma-dynamic-gibbs-sampling}.
\begin{corollary}
\label{corollary-dynamic-Gibbs}
Assume $\epsilon: \mathbb{N}^+ \to (0,1)$ in \Cref{lemma-dynamic-gibbs-sampling} satisfies the bounded difference condition in Definition~\ref{definition-estimator-dynamic}.
Assume $\I$ and $\I'$ in \Cref{lemma-dynamic-gibbs-sampling} both satisfy Dobrushin-Shlosman condition (\Cref{condition-Dobrushin}) with constant $\delta > 0$.
The single-sample dynamic Gibbs sampling algorithm (\Cref{alg-dynamic-Gibbs}) uses  $O(n\log n)$ memory words, each of $O(\log n)$ bits to maintain the sample for current instance $\I$, and resolves the update from $\I$ to $\I'$ with expected time cost $O\tp{\Delta n + \Delta^2(\Lgra + \Lham )  \log^3 n }$.
\end{corollary}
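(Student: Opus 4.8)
The plan is to invoke \Cref{lemma-dynamic-gibbs-sampling} with the explicit mixing-time bound supplied by the Dobrushin--Shlosman condition, and then to simplify every term in that lemma's time bound using the bounded difference hypothesis on $\epsilon$. Since both $\I$ and $\I'$ satisfy \Cref{condition-Dobrushin} with constant $\delta>0$, \Cref{proposition-mixing-Gibbs} lets me set $T=T(\I)=\lceil\frac{n}{\delta}\log\frac{n}{\epsilon(n)}\rceil$ and $T'=T(\I')$ analogously, as in~\eqref{eq-upper-bound-T-mix}. The bounded difference condition of \Cref{definition-estimator-dynamic} forces $1/\epsilon(n)$ to be polynomially bounded in $n$, so $\log(1/\epsilon(n))=O(\log n)$ and hence $T=\Theta(n\log n)$; in particular $T\in\Omega(n\log n)$, which verifies the precondition of \Cref{lemma-dynamic-gibbs-sampling}, and $\log T=O(\log n)$. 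Because $\dgraph(\I,\I')\le\Lgra=o(n)$ gives $|n-n'|\le\Lgra=o(n)$, I get $n'=\Theta(n)$, so likewise $T'=\Theta(n\log n)$, $T_{\max}=\Theta(n\log n)$, and $\log^2 T_{\max}=O(\log^2 n)$.

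Next I would bound each of the four quantities inside the parenthesis of the time bound of \Cref{lemma-dynamic-gibbs-sampling}. For the length-fix term, writing $f(m)=\frac{m}{\delta}\log\frac{m}{\epsilon(m)}$, a first-order estimate using $|n-n'|\le\Lgra$ together with $\log(n'/n)=O(\Lgra/n)$ and (from the bounded difference on $\epsilon$) $\log(\epsilon(n)/\epsilon(n'))=O(\Lgra/n)$ yields $|T-T'|=O(1+\Lgra\log n)$. For the filter term, $\frac{T_{\max}(\Lham+\Lgra)}{n}=O((\Lham+\Lgra)\log n)$. For the two disagreement counts, \Cref{lemma-upper-bound-R-Ham} and \Cref{lemma-upper-bound-R} (applied with $L=\Lham$ and $L\ge|E\oplus E'|$ where $|E\oplus E'|\le\Lgra$) give $\E{\Rham}=O\!\big(\frac{\Delta T_{\max}\Lham}{n\delta}\big)=O(\Delta\Lham\log n)$ and $\E{\Rgra}=O(\Delta\Lgra\log n)$, treating $\delta$ as a constant.

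Substituting these bounds and $\log^2 T_{\max}=O(\log^2 n)$ into $O\!\big(\Delta n+\Delta(|T-T'|+\frac{T_{\max}(\Lham+\Lgra)}{n}+\E{\Rham}+\E{\Rgra})\log^2 T_{\max}\big)$, every term collapses: the dominant contribution is $\Delta\cdot\Delta(\Lham+\Lgra)\log n\cdot\log^2 n=\Delta^2(\Lham+\Lgra)\log^3 n$, while the stray $\Delta|T-T'|\log^2 n$ contributes $\Delta\Lgra\log^3 n+\Delta\log^2 n$, the first absorbed into the $\Delta^2$ term and the second into $\Delta n$. This gives the claimed expected time $O(\Delta n+\Delta^2(\Lgra+\Lham)\log^3 n)$. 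The space claim is immediate: \Cref{lemma-dynamic-gibbs-sampling} uses $O(T)$ words of $O(\log T)$ bits, which is $O(n\log n)$ words of $O(\log n)$ bits once $T=\Theta(n\log n)$ and $\log T=O(\log n)$ are plugged in.

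The genuinely nontrivial input---percolation of disagreements under the Dobrushin--Shlosman condition---has already been absorbed into \Cref{lemma-upper-bound-R-Ham} and \Cref{lemma-upper-bound-R}, so this corollary is essentially a bookkeeping exercise. The only step needing care is the estimate $|T-T'|=O(1+\Lgra\log n)$, where I must use the bounded difference condition on $\epsilon$ (rather than any monotonicity) to control $\log(\epsilon(n)/\epsilon(n'))$, and confirm that $\Lgra=o(n)$ keeps the second-order term $O(\Lgra^2/n)$ negligible.
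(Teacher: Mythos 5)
Your proposal is correct and follows essentially the same route as the paper's proof: fix $T,T'$ by the Dobrushin--Shlosman mixing bound~\eqref{eq-upper-bound-T-mix}, use the bounded-difference hypothesis (via \Cref{lemma-smooth-epsilon}) to get $T,T'=\Theta(n\log n)$, $T_{\max}=O(n\log n)$ and $|T-T'|=O(\Lgra\log n)$, bound $\E{\Rham}$ and $\E{\Rgra}$ by \Cref{lemma-upper-bound-R-Ham} and \Cref{lemma-upper-bound-R}, and substitute everything into the time bound of \Cref{lemma-dynamic-gibbs-sampling}. The one step you gloss over is that in \Cref{alg-dynamic-Gibbs} the edge update is performed between the intermediate instances $\I_1$ and $\I_2$ of~\eqref{eq-def-I-1} and~\eqref{eq-def-I-2} rather than between $\I$ and $\I'$, so \Cref{lemma-upper-bound-R} requires the Dobrushin--Shlosman condition for $\I_2$, not $\I'$; this is a one-line check the paper makes explicitly ($\I_2$ differs from $\I'$ only by isolated vertices, whose rows and columns in the influence matrix are zero, so the condition is inherited).
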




%

\begin{proof}[Proof of \Cref{lemma-dynamic-gibbs-sampling}]
The dynamic Gibbs sampling algorithm is implemented as follows.
The algorithm uses the dynamic data structure in Theorem~\ref{theorem-DS} to maintain the initial state $\X_0$ and execution-log $\Exelog(\I,T)=\exelog{X}{v_t}{T}$. Besides, the algorithm maintains the explicit copy of the sample $\X \in Q^V$ by a deterministic dynamic dictionary, with $(v, X(v))$ for vertices $v \in V$ as the key-value pairs. The lemma is proved as follows. 

\vspace{4pt}
\noindent \textbf{Space cost:}
Note that $T = \Omega(n \log n), |V| = n$ and $|Q| = O(1)$. We have $O(n) = O(T)$ and 
$O(\log T + \log |V| + \log |Q|) = O(\log T)$.
The dynamic dictionary for sample $\X$ uses $O(n)$ memory words, each of $O(\log |V| + \log |Q|)$ bits.
Combining with Theorem~\ref{theorem-DS}, we have the algorithm uses $O(T)$ memory words to maintain the initial state, execution-log and the random sample, each word is of $O(\log T + \log |V| + \log |Q|) = O(\log T)$ bits.

\vspace{4pt}
\noindent \textbf{Correctness:}
The invariants for execution-log (Condition~\ref{exe-log-invariant}) are preserved by the coupling simulated by the algorithm. The correctness holds as a consequence.

\vspace{4pt}
\noindent \textbf{Time cost:}
Consider the update that modifies $\I$ to $\I'$. 
We divide the algorithm into two stages.
\begin{itemize}
\item \textbf{Preparation stage}: construct the updated  instances $\I'$ and other middle instances $\Imid,\I_1,\I_2$ in~\eqref{eq-def-Imid},~\eqref{eq-def-I-1},~\eqref{eq-def-I-2};  compute $\pup_v$ in~\eqref{eq-def-Ising-up} for all $v \in V$ and construct the random set $\Isingset \subseteq [T]=\{1,2,\ldots,T\}$ in~\eqref{eq-def-Ising-set}.
\item  \textbf{Update stage}: given $\Isingset$ and $\pup_v$ for all $v \in V$, update the initial state $\X_0$ to $\Y_0$, the execution-log $\Exelog(\I,T)=\exelog{X}{v_t}{T}$ to $\Exelog(\I',T')=\exelog{Y}{v'_t}{T'}$, and the sample $\X$ to $\Y$.
\end{itemize}

We make the following two claims.
\begin{claim}
\label{claim-prep-single}
The expected
running time of the preparation stage is 
\begin{align*}
\E{\Tpre} = O\left( \Delta n + \E{\abs{\Isingset}} \log^2 T_{\max}\right),
\end{align*}
and the expected size of $\Isingset$ is at most $\frac{4T_{\max} \Lham}{n}$.
\end{claim}

\begin{claim}
\label{claim-time-cost}
The expected running time of the update stage is 
\begin{align}
\label{eq-time}
\E{\Tupd}  = O\left(\Delta\left(|T-T'|+ \frac{T_{\max}\Lgra}{n}  + \E{\Rham} + \E{\Rgra} \right)\log^2 T_{\max} \right), 
\end{align}
 $\Rham$ is defined in~\eqref{eq-def-R-Ising} for the subroutine $\UpdateHamiltonian$ in~\Cref{alg-dynamic-Gibbs}, 
and $\Rgra$ is defined in~\eqref{eq-def-R} for the subroutine $\UpdateEdge$ in~\Cref{alg-dynamic-Gibbs}.
\end{claim}
By the linearity of expectation,
the expected time cost of the algorithm is 	
$\E{\Tpre}+\E{\Tupd}$.
This proves the time cost.

\end{proof}

We introduce the following technique lemma to prove \Cref{corollary-dynamic-Gibbs}.

\begin{lemma}
\label{lemma-smooth-epsilon}
Let $\epsilon: \mathbb{N}^+ \to (0,1)$ be a function  such that there exists a constant $C > 0$ such that
\begin{align*}
\forall	n \in \mathbb{N}^+: \quad \abs{\epsilon(n+1) - \epsilon(n)} \leq \frac{C}{n}\epsilon(n).
\end{align*}
Then the function $N$ has the following properties
\begin{itemize}
\item for any $n \in \mathbb{N}^+$, it holds that $\epsilon(n) \geq \frac{1}{\mathrm{poly}(n)}$;
\item let $\alpha \geq 1$ be a constant, given any $n,n' \in \mathbb{N}^+$ such that $\frac{1}{\alpha}\leq  \frac{n'}{n}\leq \alpha$,
\begin{align*}
\abs{n\log \frac{n}{\epsilon(n)} - n'\log \frac{n'}{\epsilon(n')}} = C'\abs{n' - n}\log n.
\end{align*}

where $C'$ is a constant that depends only on $\alpha, C$ and $\epsilon(3\lceil C\rceil)$.
\end{itemize}
\end{lemma}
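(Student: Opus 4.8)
The plan is to establish the two claims separately, both anchored at the threshold $n_0 \triangleq 3\lceil C\rceil$, chosen so that $C/n \le 1/3$ whenever $n \ge n_0$ (this is the reason $\epsilon(3\lceil C\rceil)$ appears in the constant). Throughout I treat the statement as concerning $\epsilon$ (not $N$), and read the second display as an inequality $\le$.

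For the polynomial lower bound, I would first rewrite the bounded-difference hypothesis as the one-sided estimate $\epsilon(n+1) \ge \epsilon(n)\left(1 - C/n\right)$, which is useful precisely for $n \ge n_0$, where $1 - C/n \ge 2/3 > 0$. Iterating from $n_0$ gives $\epsilon(n) \ge \epsilon(n_0)\prod_{k=n_0}^{n-1}\left(1 - C/k\right)$. To lower-bound the product I would pass to logarithms and use the elementary inequality $\log(1-x) \ge -2x$ valid on $[0,1/3]$, so that
\[
\sum_{k=n_0}^{n-1}\log\left(1 - \frac{C}{k}\right) \ge -2C\sum_{k=n_0}^{n-1}\frac{1}{k} \ge -2C\log n - O(1),
\]
whence $\prod_{k=n_0}^{n-1}(1-C/k) = \Omega(n^{-2C})$ and therefore $\epsilon(n) = \Omega(n^{-2C})$ for all $n \ge n_0$. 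For the finitely many $n < n_0$, each $\epsilon(n)$ is a fixed positive number, so $\epsilon(n) \ge 1/\mathrm{poly}(n)$ holds in all cases, with the implied constant controlled by $\epsilon(n_0)$.

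For the smoothness claim I would set $f(n) \triangleq n\log(n/\epsilon(n)) = n\log n - n\log\epsilon(n)$ and bound the single-step increment for $n \ge n_0$. Expanding,
\[
f(n+1) - f(n) = \left[(n+1)\log(n+1) - n\log n\right] - \log\epsilon(n+1) - n\log\frac{\epsilon(n+1)}{\epsilon(n)}.
\]
The first bracket equals $\log(n+1) + n\log(1 + 1/n) = O(\log n)$; the term $-\log\epsilon(n+1) = \log(1/\epsilon(n+1))$ is nonnegative and $O(\log n)$ by the first claim; and since $\abs{\epsilon(n+1)/\epsilon(n) - 1} \le C/n \le 1/3$, the last term is at most $n\cdot\abs{\log(1 + O(C/n))} = O(C) = O(1)$. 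Hence $\abs{f(n+1) - f(n)} = O(\log n)$. Assuming WLOG $n \le n'$ and telescoping, $\abs{f(n') - f(n)} \le \sum_{k=n}^{n'-1}\abs{f(k+1) - f(k)} = O\!\left((n'-n)\log n'\right)$; the hypothesis $n'/n \le \alpha$ then gives $\log n' \le \log n + \log\alpha = O(\log n)$, yielding $\abs{f(n') - f(n)} \le C'\abs{n'-n}\log n$ with $C'$ depending only on $\alpha$, $C$, and $\epsilon(n_0)$.

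The one genuinely delicate point is the boundary near $n_0$: the per-step bound above may fail when $n$ or $n'$ drops below $n_0$, but the constraint $1/\alpha \le n'/n \le \alpha$ confines $(n,n')$ to a range of constant width around any small value, so only finitely many such pairs occur and each contributes a bounded amount absorbed into $C'$. I expect the main (still routine) obstacle to be making the product estimate in the first claim fully uniform — i.e.\ controlling the harmonic sum and the residual $O(1)$ cleanly — after which the telescoping argument for the second claim is mechanical.
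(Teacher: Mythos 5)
Your proposal is correct under the reading you adopt (the lemma concerns $\epsilon$, and the displayed equality is really a ``$\le$''), and it uses the same essential ingredients as the paper but organizes the second claim differently. For the polynomial lower bound, both arguments iterate the multiplicative consequence of the hypothesis across $[n_0,n)$: the paper runs it as $\epsilon(t)\le\bigl(1+\tfrac{C}{t-C}\bigr)\epsilon(t+1)$ and obtains $\epsilon(n)\ge \epsilon(n_0)\,n^{-C}$, while you run $\epsilon(n+1)\ge(1-C/n)\epsilon(n)$ with $\log(1-x)\ge -2x$ and obtain the slightly weaker $\Omega(n^{-2C})$ --- immaterial for a poly bound, so this part is essentially identical. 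For the smoothness claim the routes genuinely differ in decomposition: the paper works globally with the pair $(n,n')$, first proving $\bigl|\log\tfrac{n}{\epsilon(n)}-\log\tfrac{n'}{\epsilon(n')}\bigr|\le \tfrac{(2C+1)|n-n'|}{\min\{n,n'\}}$ by accumulating the ratio bound over the whole interval, and then splitting $\bigl|n\log\tfrac{n}{\epsilon(n)}-n'\log\tfrac{n'}{\epsilon(n')}\bigr|$ into $|n'-n|\cdot\log\tfrac{n}{\epsilon(n)}$ plus $n'$ times that log-difference; you instead bound the single-step increment of $f(n)=n\log(n/\epsilon(n))$ by $O(\log n)$ and telescope. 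Your version is more mechanical and localizes all the analysis in one per-step inequality (at the price of invoking claim one inside it, to control $-\log\epsilon(n+1)$ --- the paper does the analogous thing when it bounds $\log\tfrac{n}{\epsilon(n)}$ by $O(\log n)$ using its first claim), and it scales painlessly to any similar ``Lipschitz-in-$n$'' functional; the paper's version gives a cleaner explicit constant $C'=2+|\log\epsilon(l)|+3C$. One shared caveat: both you and the paper dispose of the small-$n$ regime by absorbing finitely many pairs into $C'$, and both treatments fail at the single point $n=1$, $n'\ne 1$, where the right-hand side vanishes since $\log 1=0$; this is a defect of the lemma's formulation rather than of either proof, but worth flagging if you write this up.
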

\begin{proof}
By the condition, we have $\epsilon(t) \leq \left(1 + \frac{C}{t-C}\right)\epsilon(t+1)$ for all $t > \lceil C+1\rceil$. 
Thus for all $n > l = 3\lceil C\rceil$,
\begin{align}
\label{eq-epsilon-upper}
\epsilon(l) \leq \prod_{i=l}^{n-1}\left( 1 + \frac{C}{i-C} \right)  \epsilon(n)\leq \epsilon(n) \exp\left(C \sum_{i=2}^{n-1}\frac{1}{i} \right) \leq \epsilon(n)\exp(C\ln n) = \epsilon(n)n^C.
\end{align}
Thus, we have $\epsilon(n) \geq \frac{1}{\mathrm{poly}(n)}$.

We then prove the second property. It is lossless to assume that $\min\{n,n'\}\geq l$, 
since otherwise we can choose $C'$ sufficiently large so that the second property holds.
Firstly, we prove for the case $n>n'$. We have
$\abs{\log \frac{n}{n'}} \leq \frac{n-n'}{n'}$.
By $\epsilon(t) \leq \left(1 + \frac{C}{t-C}\right)\epsilon(t+1)$ for all $t > \lceil C+1\rceil$,
we also have
\begin{align*}
\epsilon(n') \leq \prod_{i=n'}^{n-1}\left( 1 + \frac{C}{i-C} \right)  \epsilon(n)\leq \epsilon(n) \exp\left( \frac{C(n-n')}{n'-C} \right).
\end{align*}
Thus, 
\begin{align}
\label{eq-proof-epsilon}
&\abs{\log \frac{n}{\epsilon(n)} - \log \frac{n'}{\epsilon(n')}}
\leq \abs{\log \frac{n}{n'}} + \abs{\log \frac{\epsilon(n)}{\epsilon(n')}}
 \leq \frac{n-n'}{n'}+ \frac{C(n-n')}{n'-C} 
\leq \frac{(2C+1)(n-n')}{n'}.
\end{align}
The last equality is due to $2(n'-C) \geq n'+l -2C \geq n'$.
Let $C' = 2 + \abs{\log \epsilon(l)} + 3C$.
We have 
\begin{align*}
\abs{n\log \frac{n}{\epsilon(n)} - n'\log \frac{n'}{\epsilon(n')}}
\leq \abs{(n' - n)\log \frac{n}{\epsilon(n)}} + \abs{n'\left(\log \frac{n}{\epsilon(n)} - \log \frac{n'}{\epsilon(n')}\right)}\leq C'\abs{n' - n}\log n.
\end{align*}
The last inequality is due to (\ref{eq-epsilon-upper}) and (\ref{eq-proof-epsilon}).
Similarly, we can also prove the lemma if $n<n'$.
\end{proof}

\begin{proof}[Proof of \Cref{corollary-dynamic-Gibbs}]
By $\Lgra = o(n)$, we have $n' =\Theta(n)$.
Since  $\I$ and $\I'$  both satisfy Dobrushin-Shlosman condition (\Cref{condition-Dobrushin}) with constant $\delta > 0$, we can set  $T,T'$ as in \eqref{eq-upper-bound-T-mix} such that 
\begin{align*}
T &= \left\lceil \frac{n}{\delta} \log \frac{n}{\eps(n)} \right\rceil = \Theta(n \log n)\\
T'	&= \left\lceil \frac{n'}{\delta} \log \frac{n'}{\eps(n')} \right\rceil = \Theta(n \log n).
\end{align*}
The equations  hold because $n' =\Theta(n)$ and 
the error function $\epsilon$ satisfies $\epsilon(\ell) \geq \frac{1}{\mathrm{poly}(\ell)}$ by~\Cref{lemma-smooth-epsilon}.
Thus, we have
\begin{align}
\label{eq-dobrushin-T-max}
T_{\max} = \max\{T, T'\} = O(n \log n).	
\end{align}
By~\Cref{lemma-smooth-epsilon} and $|n'-n|\leq \Lgra = o(n)$, we have
\begin{align}
\label{eq-dobrushin-T}
\abs{T - T'} = 	  O(\Lgra \log n ).
\end{align}

Let $\Imid = (V,E,Q,\Phi^{\mathsf{mid}})$ be the middle instance constructed as in~\eqref{eq-def-Imid}. 
In \Cref{alg-dynamic-Gibbs}, we call the subroutine  $\UpdateHamiltonian$ for instances $\I$ and $\Imid$. Since $\I$ satisfies the Dobrushin-Shlosman condition, by \Cref{lemma-upper-bound-R-Ham} and $\d(\I,\Imid)\leq d(\I,\I') \leq \Lham$,
we have
\begin{align}
\label{eq-dobrushin-R1}
\E{\Rham} = O \left( \frac{\Delta T \Lham}{\delta n} \right)	 = O(\Delta \Lham \log n),
\end{align}
where $\Rham$ is defined in~\eqref{eq-def-R-Ising} for the subroutine $\UpdateHamiltonian$.

We also call the subroutine  \UpdateGraph{} for instances $\Imid$ and $\I'$ in \Cref{alg-dynamic-Gibbs}.
The subroutine is shown in~\Cref{alg-update-graph}. We first add isolated vertices to update $\Imid$ to $\I_1$, then update edges to update $\I_1$ to $\I_2$, finally delete isolated vertices to update $\I_2$ to $\I'$.
Since $\I'$ satisfies Dobrushin-Shlosman condition and 
the only difference between $\I_2$ and $\I'$ is that
$\I_2$ contains extra isolated vertices, 
it is easy to verify that $\I_2$ also satisfies  Dobrushin-Shlosman condition.
In~\Cref{alg-update-graph}, the subroutine $\UpdateEdge$ is called for $\I_1$ and $\I_2$.
By \Cref{lemma-upper-bound-R}, we have 
\begin{align}
\label{eq-dobrushin-R2}
\E{\Rgra} = O\left( \frac{\Delta T \Lgra}{\Delta n} \right)	= O(\Delta \Lgra \log n).
\end{align}
where $\Rgra$ is defined in~\eqref{eq-def-R} for the subroutine $\UpdateEdge$.

Combining~\eqref{eq-dobrushin-T-max},~\eqref{eq-dobrushin-T},~\eqref{eq-dobrushin-R1},~\eqref{eq-dobrushin-R2} with~\Cref{lemma-dynamic-gibbs-sampling}, we have the expected time cost is
\begin{align*}
 \E{T_{\mathsf{cost}}} &= O\left(\Delta n+ \Delta\left(|T-T'|+ \frac{T_{\max}(\Lham+ \Lgra) }{n} + \E{\Rham} + \E{\Rgra} \right)\log^2 T_{\max}\right)\\
& = O\left(\Delta n + \Delta^2 (\Lgra + \Lham )  \log^3 n\right).\qedhere
\end{align*}
\end{proof}

\subsection{Multi-sample dynamic Gibbs sampling algorithm}
\label{section-multi-sample-alg}
In this section, we give an \emph{Multi-sample dynamic Gibbs sampling algorithm} that maintains multiple independent random samples for the current MRF instance.
\Cref{theorem-sample-MRF} follows immediately from the following lemma.
%
%
%
%

\begin{lemma}[\textbf{multi-sample dynamic Gibbs sampling algorithm}]\label{lemma-dynamic-gibbs-sampling-multi}
Let $N:\mathbb{N}^+ \to \mathbb{N}^+$ and $\epsilon: \mathbb{N}^+ \to (0,1)$ be two functions satisfying the bounded difference condition in Definition~\ref{definition-estimator-dynamic}.
Let $\I = (V, E, Q, \Phi)$ be an MRF instance with $n=|V|$ and $\I'= (V', E', Q, \Phi')$ the updated instance with $n' = |V'|$.
Assume that  $\I$ and $\I'$ both satisfy Dobrushin-Shlosman condition with constant $\delta > 0$,  $\dgraph(\I,\I')\leq \Lgra =o(n)$ and $\dham(\I,\I') \leq \Lham$.
Denote $T=\lceil \frac{n}{\delta}\log \frac{n}{\epsilon(n)}\rceil$, $T' = \lceil \frac{n'}{\delta}\log \frac{n'}{\epsilon(n')}\rceil$.

There is  an algorithm which does the followings:
\begin{itemize}
\item (\textbf{space cost})
The algorithm maintains $N(n)$ explicit copies of independent samples
$\X^{(1)},\ldots,\X^{(N(n))}$, where $\X^{(i)}\in Q^V$ for all $1\leq i \leq N(n)$,
for the current instance $\I$, and also a data structure using  $O(nN(n)\log n)$ memory words, each of $O(\log n)$ bits, for representing the initial state $\X^{(i)}_0\in Q^V$ and the execution-log $\Exelog^{(i)}(\I,T)=\exelog{X^{(i)}}{v^{(i)}_t}{T}$ for  $1\leq i \leq N(n)$ such that each Gibbs sampling $(\X^{(i)}_t)_{t=0}^T$ on $\I$ generating an independent sample $\X^{(i)}=\X^{(i)}_T$.
\item (\textbf{correctness})
Assuming that Condition~\ref{exe-log-invariant} holds for each $\X^{(i)}_0$ and $\Exelog^{(i)}(\I,T)$ for the Gibbs sampling on $\I$,
upon each update 
that modifies $\I$ to $\I'$, the algorithm updates $\X^{(1)},\X^{(2)},\ldots,\X^{(N(n))}$ to $N(n')$ explicit copies of independent samples 
$\Y^{(1)},\Y^{(2)},\ldots,\Y^{(N(n'))}\in Q^{V'}$
for the new instance $\I'$, and correspondingly updates the data represented by the data structure to $\Y^{(i)}_0\in Q^{V'}$ and $\Exelog^{(i)}(\I',T')=\exelog{Y^{(i)}}{u^{(i)}_t}{T'}$ for $1\leq i \leq N(n')$ such that each Gibbs sampling chain $(\Y^{(i)}_t)_{t=0}^{T'}$ on $\I'$ generating a new sample $\Y^{(i)}=\Y^{(i)}_{T'}$, 
where each $\Y^{(i)}_0$ and $\Exelog^{(i)}(\I',T')$ satisfy Condition~\ref{exe-log-invariant} for the Gibbs sampling on $\I'$,
therefore, 
\[
\DTV{\Y^{(i)}}{\mu_{\I'}}\le\epsilon(n').
\]
\item (\textbf{time cost})
Assuming Condition~\ref{exe-log-invariant} for each $\X^{(i)}_0$ and $\Exelog^{(i)}(\I,T)$ for the Gibbs sampling on $\I$,
the time complexity for resolving an update is 
$$O\left(\Delta^2( \Lham + \Lgra )N(n)\cdot\log ^ 3 n +\Delta n \right),$$
where $\Delta= \max\{\Delta_G,\Delta_{G'}\}$, and $\Delta_G,\Delta_{G'}$ denote the maximum degree of $G=(V,E)$ and $G'=(V',E')$.  
\end{itemize}
\end{lemma}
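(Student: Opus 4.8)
The plan is to run $N(n)$ mutually independent copies of the single-sample dynamic Gibbs sampling algorithm of \Cref{lemma-dynamic-gibbs-sampling}, one per maintained sample, while sharing all the work that depends only on the instances $\I$ and $\I'$ rather than on the individual sample. For each $1\le i\le N(n)$ we keep an explicit copy of $\X^{(i)}$ together with the data structure of \Cref{theorem-DS} encoding $\X^{(i)}_0$ and $\Exelog^{(i)}(\I,T)$. Since one copy occupies $O(n\log n)$ words by \Cref{corollary-dynamic-Gibbs}, the total space is $O(nN(n)\log n)$ words of $O(\log n)$ bits, which gives the space bound. Correctness and independence are inherited directly: each copy is updated by the single-sample procedure using fresh, mutually independent randomness, so by \Cref{lemma-dynamic-gibbs-sampling} every updated $\Y^{(i)}$ satisfies $\DTV{\Y^{(i)}}{\mu_{\I'}}\le\epsilon(n')$ and Condition~\ref{exe-log-invariant} is preserved for each $i$, while independence across $i$ is maintained throughout.

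The delicate point in the time analysis is that the per-update cost of the single-sample algorithm carries an additive $O(\Delta n)$ term, which must \emph{not} be multiplied by $N(n)$. I would therefore split each update into a shared preparation phase and a per-sample update phase, exactly as in the proof of \Cref{lemma-dynamic-gibbs-sampling}. The construction of $\I'$, of the intermediate instances $\Imid,\I_1,\I_2$, and the computation of $\pup_v$ for all $v\in V$ depend only on the pair $(\I,\I')$ and are identical for every copy; hence this editing of the MRF is carried out \emph{once}, at cost $O(\Delta n)$. Everything that remains is genuinely per-sample: drawing the random filter set $\Isingset^{(i)}$ and running $\UpdateHamiltonian$ and $\UpdateGraph$ on the $i$-th execution-log. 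Using the bounds computed in the proof of \Cref{corollary-dynamic-Gibbs} — namely $T_{\max}=O(n\log n)$, $\abs{T-T'}=O(\Lgra\log n)$, $\E{\Rham}=O(\Delta\Lham\log n)$, $\E{\Rgra}=O(\Delta\Lgra\log n)$, together with $\E{\abs{\Isingset^{(i)}}}=O(\Lham\log n)$ so that $\Isingset^{(i)}$ is built in time proportional to its (small) size rather than to $T$ — the per-sample phase costs $O(\Delta^2(\Lham+\Lgra)\log^3 n)$ in expectation. Summing over the $N(n)$ copies and adding the one-time $O(\Delta n)$ yields the claimed $O(\Delta^2(\Lham+\Lgra)N(n)\log^3 n+\Delta n)$.

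It remains to reconcile the number of samples, since the invariant asks for $N(n')$ samples for $\I'$ whereas we maintain $N(n)$. I would bound $\abs{N(n')-N(n)}$ using the bounded difference condition of \Cref{definition-estimator-dynamic}: telescoping $\abs{N(k+1)-N(k)}\le C_1 N(k)/k$ over the range between $n$ and $n'$ and using $\abs{n'-n}\le\Lgra=o(n)$ gives $N(n')=\Theta(N(n))$ and $\abs{N(n')-N(n)}=O(\Lgra N(n)/n)$ (an argument parallel to \Cref{lemma-smooth-epsilon}). If $N(n')<N(n)$, the surplus copies are simply discarded. If $N(n')>N(n)$, the missing $O(\Lgra N(n)/n)$ samples are generated from scratch by simulating Gibbs sampling on $\I'$ for $T'=O(n\log n)$ steps each; one fresh sample costs $O\big((\Delta+\log^2 n)\,T'\big)=O(\Delta n\log n+n\log^3 n)$, so producing all of them costs $O(\Lgra N(n)/n)\cdot O(\Delta n\log n+n\log^3 n)=O(\Delta\Lgra N(n)\log^3 n)$, which is absorbed into the stated bound. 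These fresh samples use independent randomness and trivially satisfy Condition~\ref{exe-log-invariant} and $\DTV{\Y^{(i)}}{\mu_{\I'}}\le\epsilon(n')$, so the final population of $N(n')$ independent samples has all the required properties.

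The main obstacle, as indicated above, is the accounting that keeps the $O(\Delta n)$ term additive rather than multiplicative: this hinges on cleanly separating the instance-dependent editing (performed once) from the sample-dependent coupling work (performed $N(n)$ times), and on constructing each filter set $\Isingset^{(i)}$ in time proportional to its expected size $O(\Lham\log n)$ rather than to the chain length $T$.
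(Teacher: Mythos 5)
Your proposal is correct and takes essentially the same route as the paper's own proof: the paper likewise decomposes each update into a shared preparation stage (constructing $\I'$, $\Imid,\I_1,\I_2$ and all $\pup_v$ once, at cost $O(\Delta n)$), a per-sample update stage invoking the single-sample algorithm with exactly the bounds $T_{\max}=O(n\log n)$, $\abs{T-T'}=O(\Lgra\log n)$, $\E{\Rham}=O(\Delta \Lham\log n)$, $\E{\Rgra}=O(\Delta \Lgra\log n)$ from \Cref{corollary-dynamic-Gibbs}, and a completion stage that adds or removes $O(\Lgra N(n)/n)$ chains using the smoothness of $N$ (\Cref{lemma-smooth-function}). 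The only point you assert rather than construct—building each filter set $\Isingset_i$ in time proportional to its size $O(\Lham\log n)$ rather than to $T$—is realized in the paper by an auxiliary per-vertex search-tree structure $(\mathcal{H}_v)_{v\in V}$ supporting skip-sampling (\Cref{claim-prep-multi}), which matches the mechanism you rely on.
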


The following technique lemma will be used to prove \Cref{lemma-dynamic-gibbs-sampling-multi}.
\begin{lemma}
\label{lemma-smooth-function}
Let $N: \mathbb{N}^+\to \mathbb{N}^+$ be a function  such that there exists a constant $C > 0$ such that
\begin{align*}
\forall	n \in \mathbb{N}^+: \quad \abs{N(n+1) - N(n)} \leq \frac{C}{n}N(n).
\end{align*}
Then the function $N$ has the following properties
\begin{itemize}
\item for any $n \in \mathbb{N}^+$, it holds that $N(n) \leq \mathrm{poly}(n)$;
\item let $\alpha \geq 1$ be a constant, given any $n,n' \in \mathbb{N}^+$ such that $\frac{1}{\alpha}\leq  \frac{n'}{n}\leq \alpha$,
\begin{align*}
\abs{N(n) - N(n')} = C'(\alpha,C) \cdot \frac{\abs{n-n'}}{n}N(n),
\end{align*}
where $C'(\alpha,C)$ is a constant that depends only on $\alpha$ and $C$.
\end{itemize}
\end{lemma}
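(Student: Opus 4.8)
The plan is to treat this as a discrete Gronwall-type estimate, mirroring the proof of \Cref{lemma-smooth-epsilon} that immediately precedes it. The hypothesis $\abs{N(n+1)-N(n)} \le \frac{C}{n}N(n)$ rearranges into the two-sided multiplicative bound
\begin{align*}
\left(1 - \frac{C}{n}\right)N(n) \le N(n+1) \le \left(1 + \frac{C}{n}\right)N(n),
\end{align*}
and this inequality is the engine for both claims. Since $N$ is positive-integer valued, I never need to worry about signs on the left-hand side of the absolute value, only about the small-index regime where the factor $1-\frac{C}{n}$ may fail to be positive.

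For the polynomial bound I would telescope the upper inequality from $1$ to $n-1$, obtaining $N(n) \le N(1)\prod_{i=1}^{n-1}(1 + C/i)$, and then apply $1+x \le \exp(x)$ together with the harmonic-sum estimate $\sum_{i=1}^{n-1} 1/i \le 1 + \ln n$ to conclude $N(n) \le N(1)\,\mathrm{e}^{C}\,n^{C} = \mathrm{poly}(n)$.

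For the second property, assume without loss of generality $n < n'$; the case $n' < n$ is symmetric, handled by rewriting the lower inequality as $N(t) \le (1 + \frac{C}{t-C})N(t+1)$ for $t > C$, exactly as in \Cref{lemma-smooth-epsilon}. First telescope the absolute differences,
\begin{align*}
\abs{N(n')-N(n)} \le \sum_{t=n}^{n'-1}\abs{N(t+1)-N(t)} \le C\sum_{t=n}^{n'-1}\frac{N(t)}{t}.
\end{align*}
The key step is to bound each $N(t)$ for $t \in [n,n']$ by a constant multiple of $N(n)$: applying the product bound over the short range $[n,t]$ and using $t \le n' \le \alpha n$ together with $n \ge l$ (for a threshold $l$ fixed below) gives $\sum_{i=n}^{t-1} 1/i \le \ln\frac{n'-1}{n-1} \le \ln(2\alpha)$, whence $N(t) \le (2\alpha)^{C}\,N(n)$, a constant depending only on $\alpha$ and $C$. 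Substituting this and bounding each $1/t \le 1/n$ across the $n'-n$ summands yields $\abs{N(n')-N(n)} \le C'(\alpha,C)\cdot\frac{\abs{n'-n}}{n}N(n)$ with, say, $C'(\alpha,C) = C(2\alpha)^{C}$.

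The main obstacle is the behavior at small indices, where $1-C/n$ (or $1-\frac{C}{t-C}$ in the downward direction) can be non-positive and the product manipulations break down. I would resolve this exactly as \Cref{lemma-smooth-epsilon} does: fix $l = 3\lceil C\rceil$, carry out the telescoping estimates only under the assumption $\min\{n,n'\} \ge l$, and absorb the remaining cases $\min\{n,n'\} < l$ into the constant $C'(\alpha,C)$. This absorption is legitimate because, under $\frac{1}{\alpha}\le\frac{n'}{n}\le\alpha$, the condition $\min\{n,n'\}<l$ forces both $n,n'\le\alpha l$, leaving only finitely many pairs; for each either $n=n'$ (both sides vanish) or $\frac{\abs{n'-n}}{n}N(n)\ge\frac{1}{\alpha l}>0$ while $\abs{N(n')-N(n)}$ is a fixed finite number, so enlarging $C'$ suffices. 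Keeping the harmonic-sum bookkeeping clean across the threshold $l$ is the only real care required; the rest is routine telescoping.
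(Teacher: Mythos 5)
Your treatment of the first property and the overall skeleton (main range plus absorption of small indices) mirror the paper's, but your main-range argument for the second property is genuinely different. The paper first normalizes to the ratio form $\abs{N(n')/N(n)-1}\le C'\abs{n-n'}/n$ and sandwiches $N(n')/N(n)$ between $(1-C/n)^{\abs{n-n'}}$ and $(1+C/n)^{\abs{n-n'}}$ (with $C$ replaced by $\alpha C$ when $n>n'$), finishing with Bernoulli-type estimates on these powers. You instead telescope the absolute differences, $\abs{N(n')-N(n)}\le C\sum_t N(t)/t$, prove the comparability estimate $N(t)\le (2\alpha)^C N(n)$ for all $t$ between $n$ and $n'$, and then sum $1/t\le \alpha/n$ over at most $\abs{n-n'}$ terms; the downward case via $N(t)\le\bigl(1+\frac{C}{t-C}\bigr)N(t+1)$ goes through as you sketch. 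Both arguments rest on the same multiplicative stability of $N$ over a constant-factor range of indices (and both read the lemma's displayed equality as an upper bound), and your main-range computations are correct; the telescoping-plus-comparability route is arguably more transparent than the paper's power inequalities.

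The absorption step is where your write-up is imprecise --- but the same is true of the paper, and the defect is in fact unfixable. You enlarge $C'$ on the finitely many pairs with $\min\{n,n'\}<l$ by noting that $\abs{N(n')-N(n)}$ is ``a fixed finite number''; that number scales with $N$, so the constant you obtain depends on the small-index values of $N$ and not only on $(\alpha,C)$. No argument can restore the claimed uniformity: for $C=2$ the function with $N(1)=M$ and $N(t)=1$ for all $t\ge 2$ satisfies the hypothesis for every $M$, and taking $n=2$, $n'=1$, $\alpha=2$ forces $C'\ge 2(M-1)$. The paper's corresponding sentence (``we can choose $C'(\alpha,C)$ sufficiently large'' when $\min\{n,n'\}\le 2C\alpha$) hides exactly the same $N$-dependence, so your proof is no weaker than the paper's here, and the issue is harmless in the application, where $N$ is a single fixed function and a constant depending on finitely many of its values is still a constant. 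For the record, the sharpest true statement is asymmetric: the upward case $n<n'$ is uniform even at small indices, since the telescoping product has at most $\alpha l$ factors each bounded by $1+C$, giving $\abs{N(n')-N(n)}\le C(1+C)^{\alpha l}\,\frac{n'-n}{n}N(n)$; uniformity fails only for $n>n'$ with $n'\le\lceil C\rceil$, which is precisely where the counterexample lives.
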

\begin{proof}
By the condition, we have $N(n+1) \leq \left(1 + \frac{C}{n}\right)N(n)$. Thus for all $n \in \mathbb{N}^+$,
\begin{align*}
N(n) \leq N(1)\prod_{i=1}^{n-1}\left( 1 + \frac{C}{i} \right)\leq N(1) \exp\left(C \sum_{i=1}^{n-1}\frac{1}{i} \right) = N(1)\exp(\Theta(\ln n)) =\mathrm{poly}(n).
\end{align*}
We then prove the second property. 
Note that $\frac{\abs{n-n'}}{n} \leq \alpha$, it suffices to prove
\begin{align}
\label{eq-proof-N}
\abs{\frac{N(n')}{N(n)}	-1} \leq C'(\alpha,C) \cdot \frac{\abs{n-n'}}{n}.
\end{align}
Assume that $\min\{n,n'\}\leq 2 C \alpha$. 
Then, we have $\max\{n,n'\}\leq 2 C \alpha^2$.
We can choose $C'(\alpha,C)$ sufficiently large so that~\eqref{eq-proof-N} holds.
Assume $n' > n > 2\alpha C$. Note that $\frac{\abs{n-n'}}{n} \leq \alpha$. We have
\begin{align*}
1 - \frac{C\abs{n-n'}}{n}  \leq \left( 1 - \frac{C}{n} \right)^{\abs{n-n'}} \leq \frac{N(n')}{N(n)} \leq \left( 1 + \frac{C}{n} \right)^{\abs{n-n'}} \leq 1+\frac{C\exp(\alpha C)\abs{n-n'}}{n},
\end{align*}
which implies~\eqref{eq-proof-N} holds if $C'(\alpha,C) \geq C\exp(\alpha C)$.
Assume $n > n' > 2\alpha C$. Note that $\frac{\abs{n-n'}}{n}\leq \alpha$ and $n' \geq \frac{n}{\alpha}$. We  have
\begin{align*}
 1 - \frac{\alpha C \abs{n-n'} }{n}  \leq \left( 1 - \frac{\alpha C}{n} \right)^{\abs{n-n'}} \leq \frac{N(n')}{N(n)} \leq \left( 1 + \frac{\alpha C}{n} \right)^{\abs{n-n'}}  \leq 1+\frac{C\alpha\exp(\alpha^2 C)\abs{n-n'}}{n}.
\end{align*}
which implies~\eqref{eq-proof-N} holds if $C'(\alpha,C) \geq C\alpha\exp(\alpha^2 C)$.
\end{proof}

\begin{proof}
The main idea of the multi-sample dynamic Gibbs sampling algorithm is to use single-sample dynamic Gibbs sampling algorithm (\Cref{alg-dynamic-Gibbs}) to maintain each sample $\X^{(i)} \in Q^V$ for $1\leq i \leq N(n)$. We need a careful implementation of the algorithm to guarantee the time cost in \Cref{lemma-dynamic-gibbs-sampling-multi}. 

\noindent \textbf{Space cost:}
Note that $T = \left\lceil \frac{n}{\delta} \log \frac{n}{\eps(n)} \right\rceil = \Theta(n \log n)$ due to \Cref{lemma-smooth-epsilon} and $N(n) \leq \mathrm{poly}(n)$ due to \Cref{lemma-smooth-function}. The dynamic dictionary for each sample $\X^{(i)}$ uses $O(n)$ memory words, each of $O(\log n)$ bits.
Hence, the algorithm uses $O(T \cdot N(n)) = O(n N(n) \log n)$ memory words to maintain all the initial states, execution-logs and the random samples due to Theorem~\ref{theorem-DS}.

\vspace{4pt}
\noindent \textbf{Correctness:}
The invariants for execution-log (Condition~\ref{exe-log-invariant}) are preserved by the coupling simulated by the algorithm. The correctness holds as a consequence.

\vspace{4pt}
\noindent \textbf{Time cost:}
Define $N_{\min} \triangleq \min \{N(n),N(n')\}$.
Fix $1\leq k \leq N_{\min}$. We  use the \Cref{alg-dynamic-Gibbs} to update the sample $\X^{(k)}$ to $\Y^{(k)}$.
Let $\Isingset_k \subseteq [T]$ denote the set defined in~\eqref{eq-def-Ising-set} 
for the subroutine $\UpdateHamiltonian$ in~\Cref{alg-dynamic-Gibbs}.
The  multi-sample dynamic Gibbs sampling has the following three stages.
\begin{itemize}
\item \textbf{Preparation stage}: construct the updated instances $\I'$ and other middle instances $\Imid,\I_1,\I_2$ in~\eqref{eq-def-Imid},~\eqref{eq-def-I-1},~\eqref{eq-def-I-2}; compute $\pup_v$ in~\eqref{eq-def-Ising-up} for all $v \in V$; and construct the random sets $\Isingset_1, \Isingset_2,\ldots,\Isingset_{N_{\min}}$.
\item \textbf{Update stage}: given the $(\pup_v)_{v \in V}$ and $(\Isingset_i)_{1\leq i\leq N_{\min}}$, for each $1\leq i \leq N_{\min}$, use \Cref{alg-dynamic-Gibbs} to update the initial state $\X_0^{(i)}$ to $\Y_0^{(i)}$, the execution-log $\Exelog^{(i)}(\I,T)=\exelog{X^{(i)}}{v_t^{(i)}}{T}$ to $\Exelog^{(i)}(\I',T')=\exelog{Y^{(i)}}{u^{(i)}}{T'}$, and the sample $\X^{(i)}$ to $\Y^{(i)}$.
\item \textbf{Completion stage}: If $N(n') < N(n)$, for each $N(n')< i\leq N(n)$, remove the sample $\X^{(i)}$, the initial state $\X^{(i)}_0$ and the  execution-log $\Exelog^{(i)}(\I,T)=\exelog{X^{(i)}}{v_t^{(i)}}{T}$ from the data; if $N(n') > N(n)$, for each  $N(n) < i\leq N(n')$, construct an independent Gibbs sampling chain $(\Y^{(i)}_t)_{t=0}^{T'}$ on instance $\I'$, write the sample $\Y^{(i)}=\Y^{(i)}_{T'}$, the initial state $\Y^{(i)}_0$ and the  execution-log $\Exelog^{(i)}(\I',T')=\exelog{Y^{(i)}}{u_t^{(i)}}{T'}$ into the data. 
\end{itemize}
Let $\Tprem,\Tupdm$ and $\Trefine$ denote the running time of the corresponding stages. Note that the update stage of the multi-sample dynamic sampling algorithm repeats the update stage of the single-sample algorithm for $N_{\min}$ times. 
Also note that both $\I$ and $\I'$ satisfies Dobrushin-Shlosman condition.
Combining~\eqref{eq-time},~\eqref{eq-dobrushin-T-max},~\eqref{eq-dobrushin-T},~\eqref{eq-dobrushin-R1}, and~\eqref{eq-dobrushin-R2}, we have
\begin{align}
\label{eq-multi-T2}
\E{\Tupdm} = \sum_{i=1}^{N_{\min}}\E{T_{\mathsf{update}}^{\mathsf{single},(i)}} &=   O(N_{\min} \Delta^2 (\Lgra + \Lham )  \log^3 n )\notag\\
(\text{by } N_{\min} \leq N(n))\quad&= O(N(n) \Delta^2 (\Lgra + \Lham )  \log^3 n )
\end{align}
where $T_{\mathsf{update}}^{\mathsf{single},(i)}$ is the running time of the update stage of the \Cref{alg-dynamic-Gibbs} that updates the $i$-th sample.

In completion stage, we either remove the chains from the data structure, or generate the new chains and write them into data structure. 
It is easy to see the running time of the completion stage satisfies
\begin{align*}
\E{\Trefine} &= O(\abs{N(n)-N(n')}T_{\max} \log T_{\max}) =O(n\abs{N(n)-N(n')}\log^2 n)\\
(\text{by \Cref{lemma-smooth-function}})\quad&= O(\abs{n-n'}N(n) \log^2 n) =O(\Lgra N(n)\log^2 n),
\end{align*}
where $T_{\max}=\max\{T,T'\} = O(n\log n)$ since $n' = \Theta(n)$ and 
$\epsilon(n') \geq \frac{1}{\mathrm{poly}(n')}$(by $\Lgra = o(n)$ and \Cref{lemma-smooth-epsilon}). 

We make the following claim about the preparation stage. 
\begin{claim}
\label{claim-prep-multi}
The expected running time of the preparation stage is 
\begin{align*}
\E{\Tprem} = O\left( \Delta n + \log^2 n  \sum_{i=1}^{N_{\min}}\E{|\Isingset_i|} \right),
\end{align*}
and the expected size of $\Isingset_i$ is at most $\frac{4 T_{\max} \Lham}{n}$ for each $1\leq i \leq N_{\min}$.
\end{claim}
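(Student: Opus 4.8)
The plan is to split $\Tprem$ into the one-time cost of building the instances $\I',\Imid,\I_1,\I_2$ together with the values $(\pup_v)_{v\in V}$, and the cost of sampling the sets $\Isingset_1,\dots,\Isingset_{N_{\min}}$; the size bound on each $\Isingset_i$ is established separately and then fed into the running-time estimate.

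I would first dispose of the size bound. Fix a chain $i$ and condition on its vertex sequence. As in \Cref{alg-ham-update}, $\Isingset_i=\bigcup_v\Isingset_{i,v}$ with $\Isingset_{i,v}\subseteq T^{(i)}_v\triangleq\{1\le t\le T:v^{(i)}_t=v\}$ keeping each occurrence independently with probability $\pup_v$; hence $\E{|\Isingset_{i,v}|}=\pup_v\,\E{|T^{(i)}_v|}$, and since each step picks a given vertex with probability $1/n$ we get $\E{|T^{(i)}_v|}=T/n$ and $\E{|\Isingset_i|}=\frac{T}{n}\sum_v\pup_v$. Using the definition \eqref{eq-def-Ising-up} of $\pup_v$ (with $\Imid$ in the role of $\I'$) together with the edge double-counting $\sum_v\sum_{e\ni v}\norm{\phi_e-\phi^{\mathsf{mid}}_e}_1=2\sum_e\norm{\phi_e-\phi^{\mathsf{mid}}_e}_1$, I obtain $\sum_v\pup_v\le 4\,\dham(\I,\Imid)\le 4\Lham$, whence $\E{|\Isingset_i|}\le 4T\Lham/n\le 4T_{\max}\Lham/n$, as claimed.

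For the running time, constructing $\I',\Imid,\I_1,\I_2$ and evaluating $\pup_v$ for every $v$ amounts to editing all vertex and edge potentials and, for each $v$, summing $\norm{\cdot}_1$ over its incident edges; this is done once and costs $O(\Delta n)$, and it also yields the support $U\triangleq\{v:\pup_v>0\}$ with $|U|\le n$. The delicate part is to assemble all the $\Isingset_i$ in time proportional to the total output $\sum_i|\Isingset_i|$ (up to $\mathrm{polylog}$ factors) instead of per $(\text{vertex},\text{chain})$ pair. I would iterate over $v\in U$ \emph{once}, and for each such $v$ generate its selected occurrences across all $N_{\min}$ chains by geometric jumps of parameter $\pup_v$: repeatedly draw a geometric waiting time, advance by that many occurrences of $v$ in the order they appear across the chains, and, while occurrences remain, locate the corresponding pair $(i,t)$ by a rank/select query on the data structure of \Cref{theorem-DS} and insert $t$ into $\Isingset_i$. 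Every reported element costs $O(\log^2 n)$ and is charged to $\sum_i|\Isingset_i|$, while the only per-vertex overhead is the single terminating geometric draw, contributing $O(|U|)=O(n)=O(\Delta n)$ in total. Taking expectations over the vertex sequences and the selection coins and substituting the size bound gives $\E{\Tprem}=O\!\left(\Delta n+\log^2 n\sum_{i=1}^{N_{\min}}\E{|\Isingset_i|}\right)$.

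I expect the main obstacle to be precisely this output-sensitivity. The naive route — running the single-sample construction of \Cref{claim-prep-single} once per chain — spends $\Omega(1)$ time for every $v\in U$ in every chain, i.e.\ $\Theta(|U|\,N_{\min})=\Theta(n\,N_{\min})$ in the worst case, which overshoots the target whenever $\Lham$ is small. Avoiding it forces the occurrences of a fixed vertex to be traversed across chains through a shared ordered structure, so that chains with no selected occurrence of $v$ are never visited; the careful points are to verify that this traversal realizes exactly the independent $\pup_v$-thinning prescribed by \eqref{eq-def-Ising-set} in each chain, and that each geometric jump with its rank/select query runs in $O(\log^2 n)$. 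The size bound and the $O(\Delta n)$ bookkeeping are routine by comparison.
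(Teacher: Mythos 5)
Your size bound and your overall plan coincide with the paper's own proof: the paper also computes $\E{\abs{\Isingset_i}} \le \sum_{t=1}^{T}\sum_{v\in V}\frac{\pup_v}{n} \le \frac{4T\Lham}{n}$ from the definition \eqref{eq-def-Ising-up} together with edge double-counting, and it also constructs the sets $\Isingset_i$ output-sensitively by running a standard Bernoulli process (your geometric jumps) over the occurrences of each fixed vertex $v$ across all $N_{\min}$ chains, at cost $O(\log^2 n)$ per selected element plus $O(\Delta n)$ bookkeeping. So the decomposition and the probabilistic content of your argument are exactly the paper's.

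The gap is in the one step your output-sensitivity rests on: ``locate the corresponding pair $(i,t)$ by a rank/select query on the data structure of \Cref{theorem-DS}.'' That data structure is \emph{per chain}: for chain $i$ it stores the trees $(\T_v)_{v\in V}$ and one OS-tree over chain $i$'s execution log only, and it supports no query ranging over a fixed vertex's occurrences \emph{across} chains. A geometric jump that crosses chain boundaries therefore cannot be resolved by it; resolving such a jump with only per-chain structures forces you to visit every chain whose occurrences of $v$ you skip (at least to read its count), and since the traversal of $v$'s occurrences is monotone this costs $\Theta(N_{\min})$ per vertex of $U$ in the worst case, i.e.\ $\Theta(n N_{\min})$ overall when $\abs{U}=\Theta(n)$ (which happens, e.g., when every vertex potential moves by $\Lham/n$) --- exactly the naive cost you set out to avoid. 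You do acknowledge in your last paragraph that a ``shared ordered structure'' across chains is required, but you never supply it, and it cannot be built on the fly: constructing it inside the preparation stage would take $\Omega(T N_{\min}) = \Omega(n N(n)\log n)$ time, far beyond the claimed bound. The paper closes precisely this hole by introducing an \emph{additional, persistently maintained} data structure $\+H=(\+H_v)_{v\in V}$, where $\+H_v$ is a balanced search tree over $H_v = \{(i,t)\in[N(n)]\times[T] \mid v^{(i)}_t = v\}$; it is updated incrementally whenever the execution logs change (with update cost dominated by the update stage), uses $O(nN(n)\log n)$ memory words (dominated by the stated space bound), and supports the $O(\log^2 n)$ insert/delete/search operations your geometric jumps need. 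With $\+H$ in hand your argument goes through verbatim; without it, the running-time claim does not follow.
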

By \Cref{claim-prep-multi}, we have
\begin{align*}
\E{\Tprem} = O\left( \Delta n + N(n)\Lham \log^3 n \right).
\end{align*}
By the linearity of expectation,
the expected time cost of the algorithm is 	
$\E{\Tprem}+\E{\Tupdm} + \E{\Trefine}$.
This proves the time cost.

\end{proof}

\section{Proofs for dynamic Gibbs sampling}
\label{sec-proof-dynamic}
\subsection{Analysis of the couplings}
\label{section-proof-couple}
We analysis the couplings in dynamic Gibbs sampling algorithm. 
In~\Cref{section-proof-ham}, we analysis the coupling for Hamiltonian update.
In~\Cref{section-proof-gra},  we analysis the coupling for graph update.
 
\subsubsection{Proofs for the coupling for Hamiltonian update}
\label{section-proof-ham}
In this section, we prove \Cref{lemma-valid-coupling}, \Cref{lemma-correct-up-bound}, and \Cref{lemma-upper-bound-R-Ham}.

\paragraph{The validity of the coupling (proof of \Cref{lemma-valid-coupling})}
We first prove that the distribution $\nu^{\tau}_{\I_v,\I'_v}(\cdot)$ in~\eqref{eq-correct-dist} is valid. 
We draw samples from  $\nu^{\tau}_{\I_v,\I'_v}(\cdot)$ only if the result of coin flipping is HEADS, which implies $\mu_{v,\I}(x \mid \tau) > \mu_{v,\I'}(x \mid \tau)$ for some $x \in Q$. 
Thus, the two distributions $\mu_{v,\I}(\cdot \mid \tau)$ and $\mu_{v,\I'}(\cdot\mid \tau)$ are not identical,
and
\begin{align*}
\sum_{x\in Q}\max \left\{0, \mu_{v, \I}(x \mid \tau ) - \mu_{v,\I'}(x \mid \tau)\right\} > 0.	
\end{align*}
Hence, the denominator of $\nu^{\tau}_{\I_v,\I'_v}(\cdot)$ is positive.
Besides, since both $\mu_{v,\I}(\cdot \mid \tau)$ and $\mu_{v,\I'}(\cdot\mid \tau)$ are distributions over $Q$, we have
\begin{align*}
\sum_{x\in Q}\max \left\{0, \mu_{v, \I'}(x \mid \tau ) - \mu_{v,\I}(x \mid \tau)\right\} = \sum_{x\in Q}\max \left\{0, \mu_{v, \I}(x \mid \tau ) - \mu_{v,\I'}(x \mid \tau)\right\}.
\end{align*}
Thus we have 
$\sum_{x\in Q}\nu_{\I_{v},\I'_{v}}^{\tau}(x)  = 1.$
Hence, $\nu^{\tau}_{\I_v,\I'_v}(\cdot)$ a valid distribution.

We next prove the coupling $D_{\I_v,\I'_v}^{\sigma,\tau}(\cdot,\cdot)$ in \Cref{definition-Ising-coupling-step} is a valid coupling between  $\mu_{v,\I}(\cdot \mid \tau)$ and $\mu_{v,\I'}(\cdot\mid \tau)$.
If $\mu_{v,\I}(\cdot \mid \tau)$ and $\mu_{v,\I'}(\cdot\mid \tau)$ are identical, the result holds trivially. We may assume $\mu_{v,\I}(\cdot \mid \tau)$ and $\mu_{v,\I'}(\cdot\mid \tau)$ are not identical, thus the distribution $\nu^{\tau}_{\I_v,\I'_v}(\cdot)$ is well-defined.

The coupling $D_{\I_v,\I'_v}^{\sigma,\tau}(\cdot,\cdot)$ in \Cref{definition-Ising-coupling-step}
returns a pair $(c,c') \in Q^2$. It is easy to see $c$ follows the law $\mu_{v,\I}(\cdot \mid \sigma)$.
We prove that $c'$ follows the law $\mu_{v,\I'}(\cdot \mid \sigma)$. By the definition of $D_{\I_v,\I'_v}^{\sigma,\tau}(\cdot,\cdot)$, $c' \in Q$ is generated by the following procedure:
\begin{itemize}
\item sample $a \in Q$ from the distribution $\mu_{v,\I}(\cdot \mid \tau)$;
\item sample $b \in Q$ from the distribution $\nu^{\tau}_{\I_v,\I_v'}$ defined in~\eqref{eq-correct-dist}, set 
\begin{align*}
c' = \begin{cases}
 b  &\text{with probability } p^\tau_{\I_{v_t},\I'_{v_t}}(a)\\
 a &\text{with probability } 1-p^\tau_{\I_{v_t},\I'_{v_t}}(a).
 \end{cases}
 \end{align*}
\end{itemize}
Note that $a$ follows the law $\mu_{v,\I}(\cdot\mid\tau)$. 
We have for each $x \in Q$, 
\begin{align*}
\Pr[c' = x] &= \Pr[a = x]\cdot (1- p^{\tau}_{\I_v,\I'_v}(x)) + \sum_{y \in Q}\Pr[a = y]\cdot p^{\tau}_{\I_v,\I'_v}(y) \cdot \nu_{\I_{v},\I'_{v}}^{\tau}(x)\\
&= \mu_{v, \I}(x\mid \tau)\cdot (1- p^{\tau}_{\I_v,\I'_v}(x)) + \nu_{\I_{v},\I'_{v}}^{\tau}(x)\sum_{y \in Q}\mu_{v, \I}(y\mid \tau)\cdot p^{\tau}_{\I_v,\I'_v}(y).
\end{align*}
By the definition of $p^{\tau}_{\I_v,\I'_v}(y)$ in~\eqref{eq-correct-prob}, we have
\begin{align*}
\forall y \in Q,\quad \mu_{v, \I}(y\mid \tau)\cdot p^{\tau}_{\I_v,\I'_v}(y) = \begin{cases}
 0 &\text{if } \mu_{v, \I}(y \mid \tau) \leq \mu_{v,\I'}(y \mid \tau)\\
 \mu_{v,\I}(y \mid \tau) - \mu_{v,\I'}(y\mid \tau) &\text{otherwise}.	
 \end{cases}
\end{align*}
This implies $\mu_{v, \I}(y\mid \tau)\cdot p^{\tau}_{\I_v,\I'_v}(y) =\max \left\{0, \mu_{v, \I}(y \mid \tau ) - \mu_{v,\I'}(y \mid \tau)\right\}$. We have
\begin{align*}
&\nu_{\I_{v},\I'_{v}}^{\tau}(x)\sum_{y \in Q}\mu_{v, \I}(y\mid \tau)\cdot p^{\tau}_{\I_v,\I'_v}(y)\\
=&\, \frac{\max \left\{0, \mu_{v, \I'}(x \mid \tau ) - \mu_{v,\I}(x \mid \tau)\right\}}{\sum_{y\in Q}\max \left\{0, \mu_{v, \I}(y \mid \tau ) - \mu_{v,\I'}(y \mid \tau)\right\}}\sum_{y \in Q}\max \left\{0, \mu_{v, \I}(y \mid \tau ) - \mu_{v,\I'}(y \mid \tau)\right\}\\
=&\, \max \left\{0, \mu_{v, \I'}(x \mid \tau ) - \mu_{v,\I}(x \mid \tau)\right\}.
\end{align*}
Hence, we have
\begin{align*}
\Pr[c' = x] = 	\mu_{v, \I}(x\mid \tau)\cdot (1- p^{\tau}_{\I_v,\I'_v}(x))  + \max \left\{0, \mu_{v, \I'}(x \mid \tau ) - \mu_{v,\I}(x \mid \tau)\right\}.
\end{align*}
Suppose $\mu_{v,\I}(x \mid \tau) \leq \mu_{v, \I'}(x \mid \tau)$, then we have $p^{\tau}_{\I_v,\I'_v}(x) = 0$. In this case, we have
\begin{align*}
\Pr[c' = x] = \mu_{v, \I}(x\mid \tau) + \mu_{v, \I'}(x \mid \tau ) - \mu_{v,\I}(x \mid \tau) = \mu_{v, \I'}(x \mid \tau ).
\end{align*}
Suppose $\mu_{v,\I}(x \mid \tau) > \mu_{v, \I'}(x \mid \tau)$, then we have
\begin{align*}
\Pr[c' = x] &=\mu_{v, \I}(x\mid \tau)\cdot (1- p^{\tau}_{\I_v,\I'_v}(x))= \mu_{v, \I'}(x \mid \tau ).
\end{align*}
Combining these two cases proves that $c'$ follows the law $\mu_{v,\I'}(\cdot \mid \tau)$.
\qed

\vspace{1em}
\paragraph{The upper bound of the probability $p_{\I_{v},\I'_{v}}^{\cdot}(\cdot)$ (proof of \Cref{lemma-correct-up-bound}) } It suffices to prove that for any two instances $\I=(V,E,Q,\Phi)$ and $\I'=(V,E,Q,\Phi')$ of MRF model, and any $v\in V,c\in Q$ and $\sigma \in Q^{\Gamma_G(v)}$,
\begin{align}
\label{eq-proof-decay}
\mu_{v,\I}(c\mid\sigma)-\mu_{v,\I'}(c \mid \sigma) \leq 	2\mu_{v,\I}(c\mid\sigma)\left( \Vert\phi_v-\phi'_v\Vert_1 + \sum_{e=\{u,v\}\in E}\Vert \phi_e-\phi'_e \Vert_1 \right).
\end{align}
Note that if $\mu_{v,\I}(c\mid\sigma) =0$, then $p_{\I_{v},\I'_{v}}^{\tau}(c) = 0$; otherwise $p_{\I_{v},\I'_{v}}^{\tau}(c) = \max\left\{0,\frac{\mu_{v,\I}(c\mid\sigma)-\mu_{v,\I'}(c \mid \sigma) }{\mu_{v,\I}(c\mid\sigma)}\right\}$. Hence, inequality~\eqref{eq-proof-decay} proves the lemma.

We now prove~\eqref{eq-proof-decay}. 
Suppose $\mu_{v,\I}(c\mid \sigma) = 0$. Then the LHS of~\eqref{eq-proof-decay} $\leq 0$. Since the RHS $\geq 0$, the inequality holds.

We next assume $\mu_{v,\I}(c\mid \sigma) > 0$. Then it suffices to prove
\begin{align*}
\frac{\mu_{v,\I}(c\mid\sigma)-\mu_{v,\I'}(c \mid \sigma)}{\mu_{v,\I}(c\mid\sigma)} = 1 - \frac{\mu_{v,\I'}(c \mid \sigma)}{\mu_{v,\I}(c\mid\sigma)}\leq 	2\left( \Vert\phi_v-\phi'_v\Vert_1 + \sum_{e=\{u,v\}\in E}\Vert \phi_e-\phi'_e \Vert_1 \right).
\end{align*}
By the definitions of $\phi_v,\phi'_v,\phi_e,\phi'_e$, we can write the ratio as
\begin{align*}
\frac{\mu_{v,\I'}(c \mid \sigma)}{\mu_{v,\I}(c\mid\sigma)} = 	\frac{\exp\left(\phi'_v(c)+ \sum_{u \in \Gamma_v}\phi'_{uv}(\sigma_u,c)\right)}{\exp\left(\phi_v(c)+ \sum_{u \in \Gamma_v}\phi_{uv}(\sigma_u,c)\right)}\frac{\sum_{a \in Q}\exp\left(\phi_v(a)+ \sum_{u \in \Gamma_v}\phi_{uv}(\sigma_u,a)\right)}{\sum_{a \in Q}\exp\left(\phi'_v(a)+ \sum_{u \in \Gamma_v}\phi'_{uv}(\sigma_u,a)\right)},
\end{align*}
where $\Gamma_v$ denotes the neighborhood of $v$ in $G$.
Next, we assume that 
\begin{equation}
\label{eq-assume-both}
\begin{split}
\forall c \in Q:\quad &\phi_v(c) = -\infty \quad \Longleftrightarrow \quad 	\phi'_v(c) = -\infty\\
\forall u \in \Gamma_v, c,c'\in Q:\quad &\phi_{uv}(c,c') = -\infty \quad \Longleftrightarrow \quad \phi'_{uv}(c,c') = -\infty.
\end{split}
\end{equation}
Otherwise, it must hold that the RHS of~\eqref{eq-proof-decay} is $\infty$, then ~\eqref{eq-proof-decay} holds trivially. Thus we can define the set
\begin{align*}
Q' \triangleq \left\{a \in Q \mid \phi_v(a) + \sum_{u \in \Gamma_v}\phi_{uv}(\sigma_u, a) \neq -\infty \right\}	 =  \left\{a \in Q \mid \phi'_v(a) + \sum_{u \in \Gamma_v}\phi'_{uv}(\sigma_u, a) \neq -\infty \right\}.
\end{align*}
Since $\exp(-\infty) = 0$, we have
\begin{align*}
\frac{\mu_{v,\I'}(c \mid \sigma)}{\mu_{v,\I}(c\mid\sigma)} = 	\frac{\exp\left(\phi'_v(c)+ \sum_{u \in \Gamma_v}\phi'_{uv}(\sigma_u,c)\right)}{\exp\left(\phi_v(c)+ \sum_{u \in \Gamma_v}\phi_{uv}(\sigma_u,c)\right)}\frac{\sum_{a \in Q'}\exp\left(\phi_v(a)+ \sum_{u \in \Gamma_v}\phi_{uv}(\sigma_u,a)\right)}{\sum_{a \in Q'}\exp\left(\phi'_v(a)+ \sum_{u \in \Gamma_v}\phi'_{uv}(\sigma_u,a)\right)}.
\end{align*}
We then show that
\begin{equation}
\label{eq-any-a}
\begin{split}
\forall a \in Q':\quad &\frac{\exp\left(\phi_v(a)+ \sum_{u \in \Gamma_v}\phi_{uv}(\sigma_u,a)\right)}{\exp\left(\phi'_v(a)+ \sum_{u \in \Gamma_v}\phi'_{uv}(\sigma_u,a)\right)} \geq \exp\left(- \Vert\phi_v-\phi'_v\Vert_1 - \sum_{e=\{u,v\}\in E}\Vert \phi_e-\phi'_e \Vert_1 \right)\\
\forall a \in Q':\quad &\frac{\exp\left(\phi'_v(a)+ \sum_{u \in \Gamma_v}\phi'_{uv}(\sigma_u,a)\right)}{\exp\left(\phi_v(a)+ \sum_{u \in \Gamma_v}\phi_{uv}(\sigma_u,a)\right)} \geq \exp\left(- \Vert\phi_v-\phi'_v\Vert_1 - \sum_{e=\{u,v\}\in E}\Vert \phi_e-\phi'_e \Vert_1 \right)
\end{split}
\end{equation}
We first use~\eqref{eq-any-a} to prove the~\eqref{eq-proof-decay}. 
Since $\mu_{v,\I}(c\mid \sigma) > 0$, then we have $c \in Q'$.
By~\eqref{eq-any-a}, we have
\begin{align*}
1 - \frac{\mu_{v,\I'}(c \mid \sigma)}{\mu_{v,\I}(c\mid\sigma)} &\leq 1 - \exp\left(- 2\Vert\phi_v-\phi'_v\Vert_1 - 2\sum_{e=\{u,v\}\in E}\Vert \phi_e-\phi'_e \Vert_1 \right)\\
&\leq 2\left( \Vert\phi_v-\phi'_v\Vert_1 + \sum_{e=\{u,v\}\in E}\Vert \phi_e-\phi'_e \Vert_1 \right).
\end{align*}
This proves the lemma.

We now prove~\eqref{eq-any-a}. For any $a \in Q'$, it holds that
\begin{align*}
\frac{\exp\left(\phi_v(a)+ \sum_{u \in \Gamma_v}\phi_{uv}(\sigma_u,a)\right)}{\exp\left(\phi'_v(a)+ \sum_{u \in \Gamma_v}\phi'_{uv}(\sigma_u,a)\right)} = \exp\left(\phi_v(a) - \phi'_v(a) + \sum_{u \in \Gamma_v}\phi_{uv}(\sigma_u,a)- \sum_{u \in \Gamma_v}\phi'_{uv}(\sigma_u,a)\right).
\end{align*}
Then~\eqref{eq-any-a} holds because 
\begin{align*}
\phi_v(a) - \phi'_v(a) &\geq -\sum_{c \in Q}|\phi_v(c)-\phi'_v(c)| =- \Vert\phi_v-\phi'_v\Vert_1;\\
\sum_{u \in \Gamma_v}\phi_{uv}(\sigma_u,a)- \sum_{u \in \Gamma_v}\phi'_{uv}(\sigma_u,a) &\geq - \sum_{e=\{u,v\}\in E}\sum_{c,c' \in Q}|\phi_e(c,c')-\phi'_e(c,c')|=- \sum_{e=\{u,v\}\in E}\Vert \phi_e-\phi'_e \Vert_1.
\end{align*}
The lower bound of $\frac{\exp\left(\phi'_v(a)+ \sum_{u \in \Gamma_v}\phi'_{uv}(\sigma_u,a)\right)}{\exp\left(\phi_v(a)+ \sum_{u \in \Gamma_v}\phi_{uv}(\sigma_u,a)\right)}$ can be proved in a similar way.
\qed

\vspace{1em}
\paragraph{The cost of the coupling for \UpdateHamiltonian{} (proof of \Cref{lemma-upper-bound-R-Ham})}
By the definition of the indicator random variable $\gamma_t$ in~\eqref{eq-def-R-Ising}, we have
\begin{align*}
\Pr[\gamma_t = 1 \mid \mathcal{D}_{t-1}] &\leq \Pr\left[t \in \Isingset \mid \mathcal{D}_{t-1}\right] + \Pr\left[ v_t \in \Gamma_G^+(\mathcal{D}_{t-1}) \mid \mathcal{D}_{t-1}\right] \\
&\leq\frac{(\Delta+1)|\mathcal{D}_{t-1}|}{n} + \sum_{v \in V}\frac{\pup_v}{n}.
\end{align*}
By the definition of $\pup_v$ in ~\eqref{eq-def-Ising-up} and $\dham(\I,\I') =  \sum_{v\in V}\norm{\phi_v-\phi'_v}_1 + \sum_{e\in E}\norm{\phi_e- \phi'_e}_1 \leq L$, we have
\begin{align*}
\Pr[\gamma_t = 1 \mid \mathcal{D}_{t-1}] \leq \frac{(\Delta+1)|\mathcal{D}_{t-1}|}{n} + \frac{4L}{n}. 	
\end{align*}
By the definition of $\Rham \triangleq \sum_{t=1}^T \gamma_t$, we have
\begin{align}
\label{eq-bound-R-Ising}
\E{\Rham} = \sum_{t=1}^T\E{\gamma_t} =  \sum_{t=1}^{T}{\E{\E{\gamma_t \mid \mathcal{D}_{t-1}}}}\leq \sum_{t=1}^{T}\left(\frac{(\Delta+1)\E{|\mathcal{D}_{t-1}|}}{n} + \frac{4L}{n}\right).
\end{align}

Next, we bound the expectation $\E{|\D_t|}$.
Recall that 
the one-step local coupling for Hamiltonian update (\Cref{def-one-step-local-coupling-dynamic}) 
is implemented as follows. 
We first construct the random set $\Isingset \subseteq V$ in~\eqref{eq-def-Ising-set}.
In the $t$-th step, where $1\leq t \leq T$, given any $\X_{t-1}$ and $\Y_{t-1}$, the $\X_t$ and $\Y_t$ is generated as follows.
\begin{itemize}
\item Let $X'(u) = X_{t-1}(u)$ and $Y'(u) = Y_{t-1}(u)$ for all $u \in V \setminus \{v_t\}$, sample $(X'(v_t),Y'(v_t)) \in Q^2$ jointly from the optimal coupling $D^{\sigma,\tau}_{\opt,\I_{v_t}}$ of the marginal distributions $\mu_{v_t,\I}(\cdot\mid \sigma)$ and $\mu_{v_t,\I}(\cdot\mid \tau)$, where $\sigma = X_{t-1}(\Gamma_G(v_t))$  and $\tau = Y_{t-1}(\Gamma_G(v_t))$.
\item Let $\X_t = \X'$ and $\Y_t = \Y'$. If $t \in \Isingset$, update the value of $Y_t(v_t)$ using~\eqref{eq-couple-resample}.
\end{itemize}
Hence, for any vertex $v \in V$, $X_t(v)\neq Y_t(v)$ only if one of the following two events occurs (1) $X'(v) \neq Y'(v)$; (2) $v_t = v$ and $t \in \Isingset$. Then for any $v \in V$, we have
\begin{align}
\label{eq-bound-single}
\Pr[X_t(v) \neq Y_t(v) \mid \X_{t-1}, \Y_{t-1}] &\leq \Pr[X'(v) \neq Y'(v) \mid \X_{t-1},\Y_{t-1}] + \Pr[v=v_t \land t \in \Isingset \mid \X_{t-1},\Y_{t-1}]\notag\\
&=\Pr[X'(v) \neq Y'(v) \mid \X_{t-1},\Y_{t-1}] + \Pr[v=v_t \land t \in \Isingset],
\end{align}
where the equation holds because $v=v_t \land t \in \Isingset$ is independent of $\X_{t-1},\Y_{t-1}$.
Given $\X_{t-1},\Y_{t-1}$, the random pair $\X',\Y'$ are obtained by the one-step optimal coupling for Gibbs sampling on instance $\I$~(\Cref{def-one-step-coupling-static-Gibbs}). 
Since $\I$ satisfies the  Dobrushin-Shlosman condition with constant $0<\delta < 1$,
by \Cref{proposition-mixing-Gibbs}, we have
\begin{align}
\label{eq-use-mix}
\E{H(\X',\Y') \mid \X_{t-1},\Y_{t-1}} \leq 	\left(1 - \frac{\delta}{n}\right)H(\X_{t-1},\Y_{t-1}) = \left(1 - \frac{\delta}{n}\right)|\D_{t-1}|.
\end{align}
where $H(\X, \Y) = |\{v \in V \mid X(v) \neq Y(v)\}|$ denote the Hamming distance.
Combining~\eqref{eq-bound-single} and~\eqref{eq-use-mix}, 
\begin{align*}
\E{|\mathcal{D}_t| \mid \mathcal{D}_{t-1}} &\leq \sum_{v \in V} \Pr[X'(v) \neq Y'(v) \mid \D_{t-1} ] +\sum_{v \in V} \Pr[t \in \Isingset \land v = v_t \mid \mathcal{D}_{t-1}]\\
&\leq \left(1 - \frac{\delta}{n}\right)|\mathcal{D}_{t-1}| + \sum_{v \in V}\frac{\pup_v}{n}\\
(\text{by}~\eqref{eq-def-Ising-up} )\quad&\leq \left(1 - \frac{\delta}{n}\right)|\mathcal{D}_{t-1}| + \frac{2}{n}\sum_{v \in V} \left( \Vert\phi_v-\phi'_v\Vert_1 + \sum_{e=\{u,v\}\in E}\Vert \phi_e-\phi'_e \Vert_1 \right)\\
(\text{by}~\dham(\I,\I')\leq L)\quad&\leq \left(1 - \frac{\delta}{n}\right)|\mathcal{D}_{t-1}| + \frac{4L}{n}.
\end{align*}
Thus, we have
\begin{align*}
\E{|\mathcal{D}_{t}|} \leq  \left(1 - \frac{\delta}{n}\right)\E{|\mathcal{D}_{t-1}|} + \frac{4L}{n}.
\end{align*}
Note that $|\D_0| = 0$.
This implies
\begin{align}
\label{eq-bound-D-Ising}
\E{|\mathcal{D}_t|} \leq \frac{8L}{\delta}.  
\end{align}
Thus, by~\eqref{eq-bound-R-Ising}, we have
\begin{align*}
\E{\Rham} \leq \frac{20\Delta T L}{\delta n} =  O\left( \frac{\Delta T L}{\delta n} \right). 
\end{align*}
\qed

\subsubsection{Proofs for the coupling for graph update}
\label{section-proof-gra}
In this section, we prove  \Cref{lemma-upper-bound-R}.
\vspace{1em}
\paragraph{Cost of the coupling for \UpdateEdge{} (Proof of \Cref{lemma-upper-bound-R})}
By the definition of $\Rgra$ in~\eqref{eq-def-R} and the linearity of the expectation, we have
\begin{align*}
\E{\Rgra} &= \sum_{t= 1}^{T}\E{\gamma_t}=\sum_{t=1}^{T} \E{\E{\gamma_t \mid \D_{t-1}}}.
\end{align*}
Recall $\gamma_t = \one{v_t \in \mathcal{S} \cup \Gamma^+_G(\D_{t-1})}$ and $v_t \in V$ is uniformly at random given $\D_{t-1}$. Note that $|\Gamma^+_G(\D_{t-1})| \leq (\Delta+1)|\D_{t-1}|$ and $|\mathcal{S}| \leq 2\vert E \oplus E' \vert \leq 2L$. We have
\begin{align}
\label{eq-ER-upper}
\E{\Rgra} \leq \sum_{t=1}^{T}\E{\frac{(\Delta+1)|\D_{t-1}| + 2L}{n}} = \frac{(\Delta+1)}{n}\sum_{t=1}^{T}\E{|\D_{t-1}|} + \frac{2LT}{n}.
\end{align}
Suppose $\I'$ satisfies Dobrushin-Shlosman condition (\Cref{condition-Dobrushin}) with the constant $\delta > 0$, we claim
\begin{align}
\label{eq-D-upper}
\forall\, 0\leq t \leq T:\quad \E{|\D_{t}|} \leq \frac{8L}{\delta}. 	
\end{align}
Combining~\eqref{eq-ER-upper} and~\eqref{eq-D-upper}, we have
\begin{align*}
\E{\Rgra} \leq \frac{18\Delta L T}{ \delta n}  = 	O\left( \frac{\Delta LT}{n}\right).
\end{align*}
This proves the lemma. 

We now prove~\eqref{eq-D-upper}.
Let $(\X_t, \Y_t)_{t \geq 0}$ be the one-step local coupling for updating edges (Definition~\ref{def-one-step-local-coupling-graph}). 
 We claim the following result
\begin{align}
\label{eq-claim-decay}
\forall\,\sigma,\tau \in \Omega:\qquad \E{\,H(\X_t,\Y_t)\mid \X_{t-1} =\sigma \land \Y_{t-1}=\tau\,}
 \leq\left(1-\frac{\delta}{n}\right)\cdot H(\sigma, \tau) + \frac{4L}{n},
\end{align}
where $H(\sigma,\tau) = |\{v \in V \mid \sigma(v) \neq \tau(v) \}|$ denotes the Hamming distance.
Assume~\eqref{eq-claim-decay} holds. Taking expectation over $\X_{t-1}$ and $\Y_{t-1}$, we have
\begin{align}
\label{eq-recur}
\E{H(\X_t , \Y_t)}	\leq \left(1-\frac{\delta}{n}\right)\E{H(\X_{t-1}, \Y_{t-1})} + \frac{4L}{n}.
\end{align}
Note that $\X_0 = \Y_0$, we have
\begin{align}
\label{eq-init}
H(\X_0, \Y_0) = 0.	
\end{align}
Combining~\eqref{eq-recur} with~\eqref{eq-init} implies
\begin{align}
\label{eq-Dt-edge}
\forall\,  0\leq t\leq T:\quad	\E{|\D_t|} = \E{H(\X_t , \Y_t)}\leq \frac{8L}{\delta}.
\end{align}
This proves the claim in~\eqref{eq-D-upper}.

We finish the proof by proving the claim in~\eqref{eq-claim-decay}.
The main idea is to compare 
the one-step local coupling for updating edges (Definition~\ref{def-one-step-local-coupling-graph}) 
with the one-step optimal coupling for Gibbs sampling on instance $\I'$ (Definition~\ref{def-one-step-coupling-static-Gibbs}). 
Let $(\X'_t, \Y'_t)_{t \geq 0}$ be the coupling for Gibbs sampling on $\I'$. 
Since $\I'$ satisfies Dobrushin-Shlosman condition, by Proposition~\ref{proposition-mixing-Gibbs}, we have
\begin{align}
\label{eq-contra-Gibbs}
\forall\,\sigma, \tau \in \Omega = Q^V:\quad \E{\,H(\X'_t,\Y'_t)\mid \X'_{t-1} = \sigma \land \Y'_{t-1}=\tau\,} \leq \left(1-\frac{\delta}{n}\right)\cdot H(\sigma, \tau).
\end{align}
According to the coupling, we can rewrite the expectation in~\eqref{eq-contra-Gibbs} as follows:
\begin{align}
\label{eq-expect-1}
\E{H(\X'_t , \Y'_t) \mid \X'_{t-1} =\sigma \land \Y'_{t-1} =\tau} = \frac{1}{n}\sum_{v \in V}\E{H\left(\sigma^{v \gets C_v^{X'} }, \tau^{v \gets C_v^{Y'}}\right)}	,
\end{align}
where $(C^{X'}_v, C^{Y'}_v) \sim D^{\sigma,\tau}_{\opt,\I'_{v}}$, $D^{\sigma,\tau}_{\opt,\I'_{v}}$ is the optimal coupling between $\mu_{v,\I'}(\cdot\mid \sigma)$ and $\mu_{v,\I'}(\cdot\mid \tau)$, and the configuration $\sigma^{v \gets C_v^{X'} } \in Q^V$ is defined as
\begin{align*}
\sigma^{v \gets C_v^{X'} }(u) \triangleq\begin{cases}
C^{X'}_v &\text{if } u = v\\
\sigma(u) &\text{if } u \neq v	
\end{cases}
 \end{align*}
and the configuration $\tau^{v \gets C_v^{Y'}} \in Q^V$ is defined in a similar way.

Similarly, we can rewrite the expectation in~\eqref{eq-claim-decay} as follows:
\begin{align}
\label{eq-expect-2}
\E{H(\X_t , \Y_t) \mid \X_{t-1} =\sigma \land \Y_{t-1} =\tau} = \frac{1}{n}\sum_{v \in V}\E{H\left(\sigma^{v \gets C_v^{X} }, \tau^{v \gets C_v^{Y}}\right)}	,
\end{align}
where $(C^{X}_v, C^{Y}_v) \sim D_{\I_v,\I_v'}^{\sigma,\tau}$, where $D_{\I_v,\I_v'}^{\sigma,\tau}$ is the local coupling defined in~\eqref{eq:local-coupling-D-for-R}.

The following two properties hold for~\eqref{eq-expect-1} and~\eqref{eq-expect-2}.
\begin{itemize}
\item If $v \not \in \mathcal{S}$, by the definition of $D_{\I_v,\I'_v}^{\sigma,\tau}(\cdot,\cdot)$ in~\eqref{eq:local-coupling-D-for-R}, it holds that $D_{\I_v,\I'_v}^{\sigma,\tau} = D_{\opt,\I_v}^{\sigma,\tau}$. Hence
\begin{align*}
\forall v \not\in \mathcal{S}:\quad 	\E{H\left(\sigma^{v \gets C_v^{X'} }, \tau^{v \gets C_v^{Y'}}\right)} = \E{H\left(\sigma^{v \gets C_v^{X} }, \tau^{v \gets C_v^{Y}}\right)}.
\end{align*}
\item If $v \in \mathcal{S}$, then it holds that  $H(\sigma^{v \gets C_v^{X}},\sigma^{v \gets C_v^{X'}} ) \leq 1$ and $H(\tau^{v \gets C_v^{Y'}},\tau^{v \gets C_v^{Y}}) \leq 1$. By the triangle inequality of the Hamming distance, we have
\begin{align*}
H\left(\sigma^{v \gets C_v^{X} }, \tau^{v \gets C_v^{Y}}\right) &\leq 	H\left(\sigma^{v \gets C_v^{X}},\sigma^{v \gets C_v^{X'}} \right) +H\left(\sigma^{v \gets C_v^{X'} }, \tau^{v \gets C_v^{Y'}}\right)+ H\left(\tau^{v \gets C_v^{Y'}},\tau^{v \gets C_v^{Y}}\right) \\
&\leq H\left(\sigma^{v \gets C_v^{X'} }, \tau^{v \gets C_v^{Y'}}\right) + 2.
\end{align*}
This implies
\begin{align*}
\forall v \in \mathcal{S}:\quad
\E{H\left(\sigma^{v \gets C_v^{X} }, \tau^{v \gets C_v^{Y}}\right)} \leq \E{H\left(\sigma^{v \gets C_v^{X'} }, \tau^{v \gets C_v^{Y'}}\right)} + 2.
\end{align*}
\end{itemize}
Combining above two properties with~\eqref{eq-expect-1} and~\eqref{eq-expect-2}, we have for any $\sigma \in , \tau \in \Omega$,
\begin{align*}
&\E{H(\X_t , \Y_t) \mid \X_{t-1} =\sigma \land \Y_{t-1} =\tau}\\
=&\, \frac{1}{n}\sum_{v \in V}\E{H\left(\sigma^{v \gets C_v^{X} }, \tau^{v \gets C_v^{Y}}\right)}\\
\leq&\, \frac{1}{n}\sum_{v \not\in \mathcal{S}}\E{H\left(\sigma^{v \gets C_v^{X'} }, \tau^{v \gets C_v^{Y'}}\right)} + \frac{1}{n}\sum_{v \in \mathcal{S}}\left(\E{H\left(\sigma^{v \gets C_v^{X'} }, \tau^{v \gets C_v^{Y'}}\right)} + 2 \right)\\
(\ast)\quad\leq&\, \E{H(\X'_t , \Y'_t) \mid \X'_{t-1} =\sigma \land \Y'_{t-1} =\tau} + \frac{4L}{n}\\
 \leq&\, \left(1-\frac{\delta}{n}\right)\cdot H(\sigma, \tau) + \frac{4L}{n},
\end{align*}
where $(\ast)$ holds due to $|\mathcal{S}| \leq 2L$. This proves the claim in~\eqref{eq-claim-decay}.
\qed

\subsection{Implementation of the algorithms}
\label{section-proof-implement}
In this section, we prove the \Cref{claim-prep-single}, \Cref{claim-time-cost} and \Cref{claim-prep-multi} by giving the implementation of the algorithms.


\subsubsection{Proofs of \Cref{claim-prep-single} and \Cref{claim-prep-multi}}
We prove \Cref{claim-prep-multi}, then \Cref{claim-prep-single} can be proved in a similar way.

It is easy to verify the updated sample $\I'$, all the probabilities $(\pup_v)_{v \in V}$ in~(\ref{eq-def-Ising-up}), all middle instances  $\Imid,\I_1,\I_2$ in~\eqref{eq-def-Imid},~\eqref{eq-def-I-1},~\eqref{eq-def-I-2} can be computed with time cost $O(\Delta n)$. We focus on constructing $\Isingset_i$ for $1\leq i \leq N_{\min}$.

The multi-sample dynamic Gibbs sampling algorithm use the data structure in \Cref{theorem-DS} to maintain $N(n)$ independent Gibbs sampling chain on instance $\I$ represented by $\X^{(i)}_0$ and $\Exelog\left(\I, T \right)= \exelog{X^{(i)}}{v^{(i)}_t}{T} $ . To construct the random sets $\Isingset_i$ for $1\leq i \leq N_{\min}$, we need an additional data structure to maintain the following data. Define the set $H_v$ as 
\begin{align*}
H_v \triangleq \{(i,t) \in [N(n)]\times [T] \mid v^{(i)}_t = v\}.	
\end{align*}
$H_v$ contains all the transition steps in $N(n)$ independent chains that picks the vertex $v$. The algorithm uses an extra data structure $\+H$ to maintain all $(H_v)_{v \in V}$. The data structure $\+H$ contains $n$ balanced binary search trees $(\+H_v)_{v \in V}$, where each $\+H_v$ maintains the set $H_v$ in a similar way as in the main data structure in \Cref{theorem-DS}. Since $T = O(n \log n), N(n) \leq \mathrm{poly(n)}$, the space cost of $\+H$ is $O(nN(n)\log n)$ memory words, each of $O(\log n)$ bits, which is dominated by the space cost in \Cref{lemma-dynamic-gibbs-sampling-multi}. And the time cost of adding, deleting, and searching a transition step in $\+H$ is $O(\log^2 n)$. We need to update $\+H$ when $\I$ is updated to $\I'$. One can verify that such time cost is dominated by the time cost in \Cref{lemma-dynamic-gibbs-sampling-multi}.

Then for each $v \in V$, we pick each element in $H_v$ with probability $\pup_v$ to construct   the set
\begin{align*}
\+B_v \subseteq H_v.	
\end{align*}
This is the standard Bernoulli process. With the data structure $\+H_v$, the time complexity of constructing the set $\+B_v$ is $O(\abs{\+B_v}\log^2 n)$.
 Given all the  sets $\+B_v $, it is easy to construct  all the sets $\Isingset_i$. 
 Hence, 
 \begin{align*}
 \Tprem = O\left(\Delta n + \sum_{v \in V}\abs{\+B_v} \log^2 n\right)=	 O\left(\Delta n + \sum_{i=1}^{N_{\min}}\abs{\Isingset_i} \log^2 n\right).
 \end{align*}
 In the preparation stage of multi-sample dynamic Gibbs sampling algorithm, we first construct the $\Imid =(V,E,Q,\Phi^{\mathsf{mid}})$ as in~\eqref{eq-def-Imid}, 
and each $\Isingset_i$ ($1\leq i\leq N_{\min}$) is constructed with respect to $\I$ and $\Imid$.
Note that $\dham(\I,\Imid ) \leq \dham(\I,\I') \leq \Lham$. 
By (\ref{eq-def-Ising-up}), we have for each $1\leq i\leq N_{\min}$,
\begin{align*}
\E{\abs{\Isingset_i}} \leq \sum_{t = 1}^T \sum_{v \in V} \frac{\pup_v}{n} \leq \frac{4T\Lham}{n}.	
\end{align*}
This proves the claim.
\qed

\subsubsection{Proof of \Cref{claim-time-cost}}
We give the implementation of the update stage of the single-sample dynamic Gibbs sampling algorithm (\Cref{alg-dynamic-Gibbs}). The algorithm updates the MRF instance from $\I$ to $\I'$ as follows,
\begin{align*}
\I \quad \to \quad \Imid \quad \to \quad \I_1 \quad \to \quad \I_2 \quad \to \quad \I', 	
\end{align*}
where $\Imid$ is defined in~\eqref{eq-def-Imid}, $\I_1 = \I_1(\Imid,\I')$ is defined in~\eqref{eq-def-I-1}, and $\I_2 = \I_2(\Imid,\I')$ is defined in~\eqref{eq-def-I-2}.
Then the algorithm calls \LengthFix{} to modifies the length of the execution log from $T$ to $T'$. 

The preparation stage computes all probabilities $(\pup_v)_{v \in V}$ in~(\ref{eq-def-Ising-up}), the set $\Isingset$ in~(\ref{eq-def-Ising-set}), and all instances $\Imid,\I_1,\I_2$. Consider the time cost of the update stage. In the update from $\Imid$ to $\I_1$, we only add isolated vertices in $V' \setminus V$, using the data structure in \Cref{theorem-DS}, the expected time cost is
\begin{align*}
\E{T_{\Imid \to \I_1}} = O\left( \frac{\abs{V' \setminus V}}{\abs{V}}T_{\max}\log^2{T_{\max}}	\right) = O\left( \frac{\Lgra}{n}T_{\max}\log^2{T_{\max}}	\right).
\end{align*}
In the update from $\I_2$ to $\I'$, we only delete isolated vertices in $V \setminus V'$, thus
\begin{align*}
\E{T_{\Imid \to \I_1}} = O\left( \frac{\abs{V \setminus V'}}{\abs{V \cup V'}}T_{\max}\log^2{T_{\max}}	\right) = O\left( \frac{\Lgra}{n}T_{\max}\log^2{T_{\max}}	\right).
\end{align*}
It is also easy to observe that the expected time cost of \LengthFix{} is
\begin{align*}
\E{T_{\LengthFix}} = O\tp{ \Delta \abs{T-T'}\log^2 T_{\max} }.	
\end{align*}
We then prove that
\begin{align}
\E{T_{\I \to \Imid}} &= O\tp{\Delta\E{\Rham}\log^2T_{\max}}\label{eq-proof-claim-1}\\
\E{T_{\I_1 \to \I_2}} &= O\tp{\Delta\E{\Rgra}\log^2T_{\max}}\label{eq-proof-claim-2}.
\end{align}
Combining all the running time together proves~\Cref{claim-time-cost}.

We give the implementation of \Cref{alg-ham-update} to prove~\eqref{eq-proof-claim-1}. 
The \Cref{alg-constraint-update} can be implemented in a similar way to prove~\eqref{eq-proof-claim-2}.
Since $(\pup_v)_{v \in V}$ and $\Isingset$ are given, the running time of \Cref{alg-ham-update} is dominated by the while-loop. We implement \Cref{alg-ham-update} such that after each execution of the while-loop, the first $t_0$ transition steps of the Gibbs sampling on instance $\I$ is updated to the first $t_0$ transition steps of the Gibbs sampling on instance $\I'$, namely, $(\X_t)_{t=0}^{t_0}$ is updated to $(\Y_t)_{t=0}^{t_0}$, where $t_0$ is the variable in \Cref{alg-ham-update}. 
Recall the sets $\+D$ and $\+P$ in~\Cref{alg-ham-update}.
We need some temporary data structures:
\begin{itemize}
\item a balanced binary search tree $\+T$ to maintain the set $\+D$ and the configuration $X_{t_0 - 1}(\+D)$;
\item a heap $\+H_1$ to maintain the set $\+P$;
\item a heap $\+H_2$ such that once a vertex $v$ is added into $\+D$, the update times $\datasuccessor(t_0,u)$ for all $u \in \Gamma_G(v) \cup \{v\}$  are added into $\+H_2$, where $\datasuccessor$ is the operation of the data structure in \Cref{theorem-DS}.
\end{itemize}
\Cref{line-next-ham} can be implemented using $\+H_1,\+H_2,\+T$. And \Cref{line-sample-ham-1} and \Cref{line-flip-ham} can be implemented using $\+T$ and the main data structure in \Cref{theorem-DS}. Note that the time cost of each operation of $\+T$ is $O(\log n) = O(\log T_{\max})$. Also note that at most $\Delta \Rham$ elements can be added into $\+H_2$. Hence, all the time cost contributed by $\+H_2$ is $O(\Delta \Rham \log (\Delta \Rham)) = O(\Delta \Rham \log T_{\max})$. One can verify that the total running time is
\begin{align*}
T_{\I \to \Imid} = 	 O\tp{\Delta \Rham\log^2T_{\max}}.
\end{align*}
This proves~\eqref{eq-proof-claim-1}.
\qed

\subsection{Dynamic Gibbs sampling  for specific models}
\label{section-proof-model}
In this section, we apply our algorithm on Ising model, graph $q$-coloring, and hardcore model. We prove the following theorem.

\begin{theorem}
\label{theorem-model}
There exist dynamic sampling algorithms as stated in Theorem~\ref{theorem-sample-MRF} with the same space cost $O\left(n N(n) \log n\right)$, and expected time cost $O\left(\Delta^2( \Lgra + \Lham)N(n)\log ^ 3 n +\Delta n \right)$ for each update,  if the input instance $\I$ with $n$ vertices and the updated instance $\I'$ satisfying $\dgraph(\I,\I') \leq \Lgra = o(n), \dham(\I,\I') \leq \Lham$ both are: 
\begin{itemize}
\item Ising models with temperature $\beta$ and arbitrary local fields where $\exp(-2|\beta|) \geq 1 - \frac{2-\delta}{\Delta + 1}$;
\item proper $q$-colorings with $q\ge(2+\delta)\Delta$;
\item hardcore models with fugacity $\lambda\le\frac{2-\delta}{\Delta-2}$, but with an alternative time cost for each update 
\begin{align}
\label{eq-running-time-hardcore}
O\left(\Delta^3( \Lgra + \Lham)N(n)\log ^ 3 n +\Delta n \right),
\end{align}
\end{itemize}	
where $\delta>0$ is a constant, $\Delta = \max\{\Delta_{G},\Delta_{G'}\}$, $\Delta_G$ denotes the maximum degree of the input graph, and  $\Delta_{G'}$ denotes the maximum degree of the updated graph.
\end{theorem}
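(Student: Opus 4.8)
The plan is to split the three models into two groups according to whether the stated regime is already covered by the Dobrushin--Shlosman condition (\Cref{condition-Dobrushin}). For the Ising model and proper $q$-coloring I would verify the condition directly and invoke \Cref{theorem-sample-MRF} verbatim. For the hardcore model, whose regime $\lambda\le\frac{2-\delta}{\Delta-2}$ strictly exceeds the Dobrushin--Shlosman threshold $\lambda\le\frac{1-\delta}{\Delta-1}$, I would re-run the argument behind \Cref{theorem-sample-MRF} with the one-step optimal coupling replaced by the coupling of Vigoda~\cite{vigoda1999fast,luby1999fast}, which is the source of the extra factor $\Delta$ in~\eqref{eq-running-time-hardcore}.

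For the Ising model the first step is to compute the influence matrix: for an edge $e=\{u,v\}$ with temperature $\beta_e$, an optimal neighbor flip moves the effective field of $v$ by $2\beta_e$, so a short calculation gives $A_{\I}(u,v)=\frac{1-\exp(-2\abs{\beta_e})}{1+\exp(-2\abs{\beta_e})}$, which is increasing in $\abs{\beta_e}$ and independent of the local fields. Hence $\max_{u}\sum_{v}A_{\I}(u,v)\le \Delta\cdot\frac{1-\exp(-2\abs{\beta})}{1+\exp(-2\abs{\beta})}$, and substituting $\exp(-2\abs{\beta})\ge 1-\frac{2-\delta}{\Delta+1}$ shows this is at most $1-\delta'$ for a constant $\delta'\ge\delta/3>0$. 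For proper $q$-coloring the marginal $\mu_{v,\I}(\cdot\mid\sigma)$ is uniform over the colors left free by the neighbors of $v$; changing one neighbor's color alters the free set by at most one color in each direction, so $A_{\I}(u,v)\le\frac{1}{q-\Delta}$ and $\max_u\sum_v A_{\I}(u,v)\le\frac{\Delta}{q-\Delta}\le\frac{1}{1+\delta}=1-\frac{\delta}{1+\delta}$ under $q\ge(2+\delta)\Delta$. In both cases \Cref{condition-Dobrushin} holds with a constant gap, and \Cref{theorem-sample-MRF} then yields the claimed $O(nN(n)\log n)$ space and $O(\Delta^2(\Lgra+\Lham)N(n)\log^3 n+\Delta n)$ time with no further work.

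For the hardcore model the obstacle is that the naive one-step optimal coupling contracts the Hamming distance only when $\lambda<\frac{1}{\Delta-1}$, so \Cref{proposition-mixing-Gibbs} and the disagreement-percolation bounds (\Cref{lemma-upper-bound-R-Ham}, \Cref{lemma-upper-bound-R}) are unavailable in the wider regime. The plan is to substitute Vigoda's coupling for the optimal coupling $D^{\sigma,\tau}_{\opt,\I_v}$ throughout \Cref{def-one-step-coupling-static-Gibbs}, \Cref{def-one-step-local-coupling-dynamic} and \Cref{def-one-step-local-coupling-graph}, and to re-establish a one-step contraction $\E{H(\X_t,\Y_t)\mid\X_{t-1},\Y_{t-1}}\le(1-\frac{\delta'}{n})H(\X_{t-1},\Y_{t-1})$ for $\lambda\le\frac{2-\delta}{\Delta-2}$. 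The structural facts underpinning the dynamic algorithm---\Cref{observation-coupling-Ising}, \Cref{observation:skip}, and the data structure of \Cref{theorem-DS}---depend only on the triviality of transitions that neither touch the current disagreement neighborhood $\Gamma_G^+(\D_{t-1})$ (nor $\mathcal{S}$) nor lie in the resampling set $\Isingset$, and are insensitive to the choice of inner coupling, so they transfer unchanged. Re-deriving \Cref{lemma-upper-bound-R-Ham} and \Cref{lemma-upper-bound-R} with Vigoda's coupling is where the extra $\Delta$ enters: unlike the optimal coupling, a single bad transition at a neighbor of a disagreement vertex can spread the disagreement across the whole neighborhood, so the expected size $\E{\abs{\D_t}}$ (and hence $\E{\Rham}$ and $\E{\Rgra}$) grows by a factor $\Delta$, while the space cost and correctness are untouched.

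I expect the hardcore case to be the crux. The routine part is bookkeeping: checking that Vigoda's coupling fits the template of a coupling between $\mu_{v,\I}(\cdot\mid\sigma)$ and $\mu_{v,\I'}(\cdot\mid\tau)$ supported on the local neighborhood $\I_v$, so that the filter based on $\pup_v$ and $\Isingset$ and the ``skip'' observations carry over unchanged. The hard part will be the quantitative percolation estimate---establishing that Vigoda's coupling contracts a (possibly weighted) Hamming distance in the regime $\lambda\le\frac{2-\delta}{\Delta-2}$, and then translating that weighted contraction into an $O(\Delta L/\delta')$ bound on $\E{\abs{\D_t}}$, which is precisely the step accounting for the additional $\Delta$ relative to \Cref{corollary-dynamic-Gibbs} and hence for the $\Delta^3$ in~\eqref{eq-running-time-hardcore}.
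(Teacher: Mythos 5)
Your treatment of the Ising model and proper $q$-coloring is correct and matches the paper: in those regimes the Dobrushin--Shlosman condition holds with a constant gap, so \Cref{theorem-sample-MRF} applies verbatim (the paper treats these as immediate corollaries; your influence-matrix calculations are a fine way to verify the condition).

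The hardcore case, however, contains a genuine gap, and it sits exactly in the step you flag as the crux. Your stated plan is to replace the optimal coupling by ``Vigoda's coupling'' and to ``re-establish a one-step contraction $\E{H(\X_t,\Y_t)\mid \X_{t-1},\Y_{t-1}}\le \left(1-\frac{\delta'}{n}\right)H(\X_{t-1},\Y_{t-1})$'' for $\lambda\le\frac{2-\delta}{\Delta-2}$. No such contraction of the plain Hamming distance exists in this regime --- if it did, the model would satisfy a Dobrushin--Shlosman-type contraction and no extra $\Delta$ factor would arise; this impossibility is precisely why Vigoda's argument is needed at all. What Vigoda changes is not the coupling but the \emph{metric}: the paper keeps the one-step optimal coupling of Gibbs sampling (see the footnote in the proof of \Cref{lemma-hardcore}) and proves step-wise decay of a potential function $\rho_{\I}$ that upper-bounds the Hamming distance and is $12\Delta$-Lipschitz. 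The extra $\Delta$ in \eqref{eq-running-time-hardcore} comes from this Lipschitz constant: a resampling step (the $\Isingset$ filter) or an update at a vertex of $\mathcal{S}$ can increase $\rho_{\I}$ by $O(\Delta)$, so one only gets $\E{\abs{\D_t}}\le\E{\rho_{\I}(\X_t,\Y_t)}=O(\Delta L/\delta)$ rather than $O(L/\delta)$. Your proposed mechanism --- that ``a single bad transition at a neighbor of a disagreement vertex can spread the disagreement across the whole neighborhood'' --- is not what happens: single-site dynamics changes one spin per step under \emph{any} coupling, so $\abs{\D_t}$ can grow by at most one per step; the factor $\Delta$ is an artifact of measuring progress in the potential rather than in Hamming distance.

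A second omission is the hard-constraint issue, which forces changes to the algorithm itself, not merely to its analysis. The decay property in \Cref{lemma-hardcore} holds only for pairs of \emph{feasible} configurations, while the two coupled chains live on different instances with different feasible sets. The paper handles this by starting both chains from the all-zero (empty independent set) configuration and by re-ordering the update decomposition so that feasible regions are nested in every phase: first change the fugacity via \UpdateHamiltonian{} (feasible set unchanged), then add isolated vertices, then \emph{delete} edges (feasible set grows), then \emph{add} edges (feasible set shrinks), then delete isolated vertices --- that is, \UpdateEdge{} is split into a pure-deletion call and a pure-addition call, unlike in the generic algorithm. It also modifies Vigoda's potential function to tolerate isolated vertices (Hamming distance on the isolated part, Vigoda's function on the rest), since the dynamic updates create and destroy such vertices. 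Without these arrangements, your claim that \Cref{observation-coupling-Ising}, \Cref{observation:skip} and the $\pup_v$/$\Isingset$ machinery ``transfer unchanged'' is not justified: they do transfer, but only after the decomposition guarantees that both coupled configurations remain feasible for the instance whose potential is being contracted.
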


In \Cref{theorem-model}, the regime for Ising model and $q$-coloring match the Dobrushin-Shlosman condition, thus the results are corollaries of \Cref{theorem-sample-MRF}. The regime for hardcore model is better than the Dobrushin-Shlosman condition. We give the proof for hardcore model. 

We use $\I = (V, E, \lambda)$ to specify the hardcore model on graph $G=(V, E)$ with fugacity $\lambda$. 
A configuration of hardcore model is $\sigma \in \{0,1\}^V$, where $\sigma_v = 1$ indicates $v$ is occupied, $\sigma_v = 0$ indicates $v$ is unoccupied. If $\sigma$ forms an independent set, then $\mu_{\I}(\sigma) \propto \lambda^{\Vert \sigma \Vert}$; otherwise, $\mu_{\I}(\sigma) = 0$.
We need the following lemma proved by Vigoda's coupling technique~\cite{vigoda1999fast}.
\begin{lemma}
\label{lemma-hardcore}
Let $\delta >0$ be a constant.
Let $\I = (V, E, \lambda)$ be a hardcore instance, where $n = |V|$, and $\Omega_{\I}\triangleq\{ \sigma\in \{0,1\}^V\mid \mu_{\I}(\sigma)>0\}$.
Assume $\lambda \leq \frac{2-\delta}{\Delta - 2}$, where $\Delta$ is the maximum degree of $G=(V,E)$.
There exist a potential function $\rho_{\I}:\Omega_{\I} \times \Omega_{\I}\to\mathbb{R}_{\ge 0}$, where $\forall\sigma,\tau \in \Omega_{\I}$,
$\rho_\I(\sigma,\tau) = 0$ if $\sigma=\tau$ and $\rho_\I(\sigma,\tau) \ge  1$ if $\sigma\neq\tau$, and  $\mathrm{Diam}_{\I}\triangleq\max_{\sigma,\tau\in\Omega_\I}\rho_\I(\sigma,\tau)\le \Delta n$, such that the one-step optimal coupling (Definition~\ref{def-one-step-coupling-static-Gibbs})  $(\X_t,\Y_t)_{t\geq 0}$ of Gibbs sampling on $\I$ satisfies
\begin{enumerate}
\item (\textbf{step-wise decay}) for the coupling $(\X_t,\Y_t)_{t\geq 0}$ of Gibbs sampling, it holds that
\begin{align}
\forall\,\sigma, \tau \in \Omega_{\I}:\quad \E{\,\rho_{\I}(\X_t,\Y_t)\mid \X_{t-1} = \sigma \land \Y_{t-1}=\tau\,} \leq \mbox{$\left(1-\frac{\beta}{n}\right)$}\cdot\rho_{\I}(\sigma, \tau),\label{eq-contra-coupling}
\end{align}
where $\beta = \frac{1}{96\delta}$, which implies $\tau_{\mathsf{mix}}(\I, \epsilon) \leq \lceil \frac{n}{\beta}\log \frac{\mathrm{Diam}_{\I}}{\epsilon} \rceil = O(n\log \frac{n}{\epsilon})$.
\item (\textbf{up-bound to Hamming}) for all $\sigma, \tau \in \Omega_{\I}$, $H(\sigma,\tau) \leq \rho_{\I}(\sigma,\tau)$, where $H(\sigma, \tau)$ denotes the Hamming distance between $\sigma$ and $\tau$.
\item (\textbf{Lipschitz}) 
function $\rho_\I(\cdot,\cdot)$, seen as a function of $2n$ variables, is $K$-Lipschitz, that is, 
$$\max_{\sigma,\sigma',\tau,\tau'\in\Omega_{\I}}\left|\rho_{\I}(\sigma,\tau)-\rho_{\I}(\sigma',\tau')\right|\le K\cdot H(\sigma\tau,\sigma'\tau'),$$
where $K = 12 \Delta$.
\end{enumerate}	
\end{lemma}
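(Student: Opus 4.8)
The plan is to obtain this lemma by recasting Vigoda's coupling analysis for Glauber dynamics on independent sets \cite{vigoda1999fast} in terms of a single potential $\rho_\I$ carrying the three quantitative guarantees in the statement. The coupling is already fixed as the one-step optimal coupling of \Cref{def-one-step-coupling-static-Gibbs}, which for the hardcore model is the monotone coupling of the two conditional Bernoulli marginals at the resampled site; so the entire improvement from the naive threshold $\frac{1}{\Delta-1}$ to $\frac{2-\delta}{\Delta-2}$ has to come from a clever choice of $\rho_\I$ rather than from a cleverer coupling. Accordingly I would define $\rho_\I(X,Y)=\sum_{u\in X\triangle Y}c_u(X,Y)$ as a context-dependent weighted count of the disagreement vertices (viewing hardcore configurations as their occupied sets, so $X\triangle Y$ is the disagreement set), where the weight $c_u\in\{a,b\}$ of a disagreement at $u$ records whether $u$ is \emph{free} or \emph{blocked} in the configuration in which it is vacant, i.e.\ whether some neighbour of $u$ is occupied there. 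The two weights $1\le b\le a$ are left as parameters to be optimised at the end; the point of distinguishing them is that a blocked disagreement is far less likely to propagate, and charging it at the smaller weight $b$ is what buys the factor of two.

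The core step is a direct drift estimate for $\rho_\I$ holding for \emph{all} pairs $(X,Y)$, not merely pairs differing at a single vertex. Fixing $(X,Y)$ and averaging over the uniformly chosen resampled vertex $v_t$, I would split into the standard cases: (i) if $v_t$ is neither a disagreement nor adjacent to one, the monotone coupling keeps the two chains equal at $v_t$ and leaves every weight unchanged, contributing $0$; (ii) if $v_t\in X\triangle Y$, the optimal coupling resolves the disagreement with controlled probability and removes its weight $c_{v_t}$; (iii) if $v_t$ is a non-disagreeing neighbour of the disagreement set, resampling may create a new disagreement at $v_t$ and may also toggle the free/blocked status, hence the weight, of the neighbouring disagreements. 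Collecting the gains from (ii) against the losses from (iii), charging spawned disagreements at the reduced weight $b$, and using the degree bound $\Delta$ together with the occupancy probability $\frac{\lambda}{1+\lambda}$, yields an inequality of the form $\E{\rho_\I(\X_t,\Y_t)\mid \X_{t-1}=X\land\Y_{t-1}=Y}\le(1-\tfrac{\beta}{n})\rho_\I(X,Y)$; this is exactly where $\lambda\le\frac{2-\delta}{\Delta-2}$ is consumed, and optimising $a,b$ in the resulting constraint produces $\beta=\frac{1}{96\delta}$. The implied bound $\tau_{\mathsf{mix}}(\I,\epsilon)\le\lceil\frac{n}{\beta}\log\frac{\mathrm{Diam}_\I}{\epsilon}\rceil$ is then the textbook consequence of geometric contraction of a nonnegative potential whose minimum nonzero value is at least $1$.

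The three remaining properties I would read directly off the weights. Property~(2) follows because every weight satisfies $c_u\ge b\ge 1$, so $\rho_\I(X,Y)\ge |X\triangle Y|=H(X,Y)$, and the same bound gives $\rho_\I\ge 1$ for distinct configurations and $\rho_\I=0$ exactly when $X=Y$. The diameter bound follows from $\rho_\I(X,Y)\le a\,H(X,Y)\le a n\le \Delta n$, using that $a$ is an absolute constant and $\Delta\ge 3$ throughout the admissible regime $\lambda\le\frac{2-\delta}{\Delta-2}$. For the Lipschitz property~(3), flipping one of the $2n$ coordinates of $(X,Y)$ can change the disagreement status of the flipped vertex (altering $\rho_\I$ by at most $a$) and can change the free/blocked status, hence the weight, of each of its at most $\Delta$ neighbouring disagreements (each by at most $a-b$); summing gives a bound of the form $a+\Delta(a-b)\le 12\Delta$, which is the claimed $K=12\Delta$.

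The step I expect to be the genuine obstacle is the drift computation of the second paragraph. Making the inequality close precisely at $\lambda\le\frac{2-\delta}{\Delta-2}$ requires an exhaustive and careful accounting of how a single resampling both creates or destroys disagreements and toggles the free/blocked labels of nearby disagreements, followed by a correct optimisation of the weight pair $(a,b)$; it is in this bookkeeping that the explicit constants $\beta=\frac{1}{96\delta}$ and $K=12\Delta$ are pinned down. By contrast, the mixing consequence of property~(1) and the verifications of~(2),~(3) and the diameter bound are routine once the weighting is fixed, so essentially all the difficulty is concentrated in reproducing Vigoda's weighted-metric contraction with the constants tracked explicitly.
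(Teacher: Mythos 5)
Your plan collapses at precisely the step you defer, and the failure is structural rather than a matter of unfinished bookkeeping: no two-valued weighting $c_u\in\{a,b\}$ that depends only on the disagreeing vertex's own blocked/free status can certify contraction for any $\lambda$ beyond roughly $\frac{1}{\Delta-1}$, so it cannot reach $\frac{2-\delta}{\Delta-2}$. Take the ``blocked pair'': $u\sim z$, with $u$ occupied in $X$ and vacant in $Y$, $z$ occupied in $Y$ and vacant in $X$, every other vertex vacant and agreeing, $\deg(u)=\deg(z)=\Delta$, and $u,z$ sharing no neighbour. Both disagreements are blocked, so $\rho=2b$. When $u$ is resampled, the optimal coupling resolves its disagreement only with probability $\frac{1}{1+\lambda}$ (in $Y$ the site is frozen by $z$), and resolution simultaneously un-blocks $z$, raising its weight from $b$ to $a$; the gain per resolution is thus only $2b-a$. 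Meanwhile each of the $2(\Delta-1)$ outer neighbours of the pair spawns a new blocked disagreement of weight $b$ with probability $\frac{\lambda}{1+\lambda}$, changing no other weights. The one-step drift is $\frac{2}{n(1+\lambda)}\left[(a-2b)+(\Delta-1)\lambda b\right]$, so non-expansion forces $\frac{a}{b}\le 2-(\Delta-1)\lambda$. On the other hand your free single disagreement ($X=\{u\}$, $Y=\emptyset$) resolves with probability $1$ but spawns weight-$b$ disagreements while dropping its own weight to $b$, giving drift $\frac{1}{n}\left[-a+\frac{\Delta\lambda}{1+\lambda}(2b-a)\right]$, which forces $\frac{a}{b}>\frac{2\Delta\lambda}{1+(\Delta+1)\lambda}$. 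These two bounds coincide at the value $1$ when $\lambda=\frac{1}{\Delta-1}$, and for $\lambda>\frac{1}{\Delta-1}$ the lower bound exceeds the upper bound, so no choice of positive weights works: your scheme buys exactly nothing over the plain Hamming/Dobrushin analysis. The intuition driving it is also wrong: a blocked disagreement propagates to its outer neighbours at the same rate $\frac{\lambda}{1+\lambda}$ as a free one; what is true is that it is harder to \emph{resolve}, which cuts against charging it less.

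It is also worth knowing that the paper never performs the drift computation you identify as the crux: it imports Vigoda's potential and Vigoda's contraction theorem as a black box (with a footnote checking that his chain and coupling coincide with the Gibbs sampler and the one-step optimal coupling), and the constant $\delta/96$ in the decay rate is inherited from that theorem, not re-derived --- so your expectation that optimizing $(a,b)$ would reproduce the stated $\beta$ is unfounded, since no $(a,b)$ yields contraction at all in the relevant regime. Vigoda's weights are degree-dependent ($\alpha_v=\deg(v)$ for each disagreement $v$) with corrections $-c\,d_v$ or $-c(d_v-1)$ attached to vertices according to their number of disagreeing neighbours and the presence of agreed-occupied neighbours; the discount thereby sits on the potential \emph{targets} of percolation (a neighbour protected by another occupied vertex cannot become a disagreement), not on the disagreement's own status, and this is where the factor of two genuinely comes from. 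The paper's own contribution is exactly the part you call routine: splitting off isolated vertices (where degree weights vanish and would break $\rho\ge 1$ and $\rho\ge H$), rescaling by $4$ so that $1-c\ge\frac{1}{4}$ yields the Hamming lower bound, and verifying the $12\Delta$-Lipschitz and $\Delta n$ diameter bounds for the combined potential.
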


Compared with \Cref{proposition-mixing-Gibbs}, the step-wise decay property in~\eqref{eq-contra-coupling} holds only for feasible configurations $\sigma$ and $\tau$, and the decay property is established on the potential function $\rho_{\I}$ rather than the Hamming distance $H$. We first use \Cref{lemma-hardcore} to prove \Cref{theorem-model}, then we prove \Cref{lemma-hardcore} in the end of this section. 

Recall that the error function $\epsilon$ satisfies $\epsilon(\ell) \geq \frac{1}{\mathrm{poly}(\ell)}$ by~\Cref{lemma-smooth-epsilon}.
Recall $\Delta = \max\{\Delta_G,\Delta_{G'}\}$.
By \Cref{lemma-hardcore} and $n' = \Theta(n)$ (since $\Lgra = o(n)$), we can set 
\begin{align*}
T &= T(\I) = \left\lceil\frac{96n}{\delta}\log \frac{n \Delta}{\epsilon(n)} \right\rceil = O\left(n\log n\right)\\
T' &= T(\I') = \left\lceil\frac{96n'}{\delta}\log \frac{n' \Delta}{\epsilon(n')} \right\rceil = O\left(n\log n\right).
\end{align*}

We modify~\Cref{alg-dynamic-Gibbs} for the hardcore model as follows. Suppose the current instance is $\I=(V,E,\lambda)$, we set the initial configuration $\X_0$ as
\begin{align*}
\forall v \in V, \quad X_0(v) = 0.	
\end{align*}
Thus $\X_0$ is feasible.
Suppose the instance $\I=(V,E,\lambda)$ is updated to $\I'=(V',E',\lambda')$. 
We divide the update into the following steps
\begin{align*}
\I \quad \to \quad \Imid \quad \to \quad \I_1 \quad \to \quad \I_{2} \quad \to \quad \I_3 \quad \to \quad \I',	
\end{align*}
\begin{itemize}
\item change fugacity to update $\I = (V,E,\lambda)$ to $\Imid=(V,E,\lambda')$ using \UpdateHamiltonian;
\item add isolated vertices in $V' \setminus V$ to update $\Imid = (V,E,\lambda')$ to $\I_1=(V\cup V',E,\lambda')$ using \AddVertex;
\item delete edges in $E \setminus E'$ to update $\I_1=(V\cup V',E,\lambda')$ to  $\I_2=(V\cup V',E\cap E' ,\lambda')$ using \UpdateEdge; 
\item add edges in $E' \setminus E$ to update $\I_2=(V\cup V',E \cap E' ,\lambda')$ to  $\I_3=(V\cup V',E',\lambda')$ using \UpdateEdge; 
\item delete isolated vertices in $V' \setminus V$ to update  $\I_3=(V\cup V',E',\lambda')$ to $\I'=(V',E',\lambda')$;
\item fix the length of the execution log from $T$ to $T'$.
\end{itemize}
Compared to~\Cref{alg-dynamic-Gibbs}, we further divide the update of edges into two steps: at first delete edges, then add edges. 
Thus, we have the following observation.
\begin{observation}
\label{observation-hardcore}
The following results holds:
\begin{itemize}
\item $\Omega_{\I} = \Omega_{\Imid}$, $\Omega_{\I_1} \subseteq \Omega_{\I_2}$ and $\Omega_{\I_3} \subseteq \Omega_{\I_2}$, where $\Omega_{\+J}$ is the set of feasible configurations for any instance $\+J$.
\item the instances $\I,\I_2,\I_3,\I'$ all satisfy 	 $\lambda \leq \frac{2-\delta}{\Delta - 2}$, where $\lambda$ and $\Delta$ are the fugacity and maximum degree of the corresponding instance. 
\end{itemize}
\end{observation}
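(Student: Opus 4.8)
The plan is to read Observation~\ref{observation-hardcore} as two elementary claims about the intermediate instances $\Imid,\I_1,\I_2,\I_3$ built from $\I=(V,E,\lambda)$ and $\I'=(V',E',\lambda')$ — one about feasible configurations, one about the fugacity regime — and to settle each by inspecting how those instances are assembled. The structural fact I would isolate first is that for any hardcore instance $\+J=(U,F,\lambda)$ the feasible set $\Omega_{\+J}=\{\sigma\in\{0,1\}^U:\sigma\text{ is an independent set of }(U,F)\}$ depends only on the graph $(U,F)$ and not on the fugacity, and that it can only \emph{grow} when edges are deleted: if $F'\subseteq F$ on the same vertex set, every independent set of $(U,F)$ is also an independent set of $(U,F')$. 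All three feasibility assertions then follow at once: $\I$ and $\Imid$ share the graph $(V,E)$, giving $\Omega_\I=\Omega_{\Imid}$; the pairs $(\I_1,\I_2)$ and $(\I_3,\I_2)$ share the vertex set $V\cup V'$ with $E\cap E'\subseteq E$ and $E\cap E'\subseteq E'$ respectively, giving $\Omega_{\I_1}\subseteq\Omega_{\I_2}$ and $\Omega_{\I_3}\subseteq\Omega_{\I_2}$. This is exactly why the graph update is split into a delete phase $\I_1\to\I_2$ and an add phase $\I_2\to\I_3$: the middle instance $\I_2$ carries the smallest edge set $E\cap E'$ and hence the largest feasible set, so both endpoints of each phase lie in $\Omega_{\I_2}$ and a single potential $\rho_{\I_2}$ from Lemma~\ref{lemma-hardcore} is well-defined along the entire graph update.

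For the fugacity claim I would use the monotonicity of the threshold $\Delta\mapsto\frac{2-\delta}{\Delta-2}$, which is decreasing in $\Delta$ (for $\Delta>2$; the condition is unconstrained when $\Delta\le2$, where hardcore is trivially rapidly mixing). The instances $\I$ and $\I'$ are in regime by hypothesis. For $\I_3=(V\cup V',E',\lambda')$ the vertices of $V\setminus V'$ are isolated under $E'$, so its maximum degree equals $\Delta_{G'}$ and its fugacity is $\lambda'$; these coincide with the parameters of $\I'$, whence $\lambda'\le\frac{2-\delta}{\Delta_{G'}-2}=\frac{2-\delta}{\Delta_{\I_3}-2}$. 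For $\I_2=(V\cup V',E\cap E',\lambda')$, deleting edges from $E'$ can only lower degrees, so $\Delta_{\I_2}\le\Delta_{G'}$, and monotonicity gives $\lambda'\le\frac{2-\delta}{\Delta_{G'}-2}\le\frac{2-\delta}{\Delta_{\I_2}-2}$.

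Finally I would point out what is intentionally missing: $\Imid=(V,E,\lambda')$ and $\I_1=(V\cup V',E,\lambda')$ are \emph{not} asserted to satisfy the regime, since they carry the new fugacity $\lambda'$ on a graph of maximum degree $\Delta_G$, whereas $\lambda'$ is only controlled against $\Delta_{G'}$. This is harmless downstream, because the Vigoda contraction of Lemma~\ref{lemma-hardcore} is always applied to an in-regime instance: the Hamiltonian step $\I\to\Imid$ is analyzed against Gibbs sampling on the source $\I$, and both graph phases are analyzed against Gibbs sampling on $\I_2$, so the bounds for $\Imid$ and $\I_1$ are never invoked. I expect no genuine obstacle here; the only points that need care are the fugacity-independence of $\Omega_{\+J}$, the correct direction of the edge inclusions, and the monotonicity of the threshold together with the degenerate regime $\Delta\le2$.
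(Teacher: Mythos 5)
Your proposal is correct and takes essentially the same route as the paper, which states the observation without proof precisely because it follows from this direct inspection: feasible sets depend only on the underlying graph (not the fugacity) and can only grow under edge deletion, while the threshold $\frac{2-\delta}{\Delta-2}$ is decreasing in $\Delta$. One minor remark: under the hypothesis of \Cref{theorem-model} both fugacities are bounded by $\frac{2-\delta}{\Delta-2}$ with $\Delta=\max\{\Delta_G,\Delta_{G'}\}$, so $\Imid$ and $\I_1$ in fact also satisfy the regime; your claim that they need not is harmless, since the observation asserts nothing about them and, as you note, the downstream analysis never invokes them.
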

\noindent
By the observation, we know that $\Omega_{\I} = \Omega_{\Imid}$, $\Omega_{\I_1} \subseteq \Omega_{\I_2}$ and $\Omega_{\I_3} \subseteq \Omega_{\I_2}$, thus we can use \Cref{lemma-hardcore}, because the step-wise decay property~\eqref{eq-contra-coupling} is established only on feasible configurations.

We need to analyze $\Rham$ and $\Rgra$ defined in~\eqref{eq-def-R-Ising} and~\eqref{eq-def-R} for the hardcore model. 
We prove the following two lemmas for hardcore model.

\begin{lemma}
\label{lemma-hardcore-1}
Consider $\UpdateHamiltonian\left(\I,\I',\*X_0, \exelog{X}{v_t}{T} \right)$.
Let $\I=(V,E,\lambda)$ be the current instance and $\I'=(V,E,\lambda')$ the updated instance.
Assume  $\lambda \leq \frac{2-\delta}{\Delta - 2}$, where $\delta > 0$ is a constant and $\Delta$ is the maximum degree of $G=(V,E)$. Also assume $\dham(\I,\I') =  n\abs{\ln\lambda-\ln\lambda'} \leq L$.  
Then $\E{\Rham} = O\left( \frac{\Delta^2 T L}{n\delta}  \right)$,
where $n= V$, $\Delta$ is the maximum degree of graph $G=(V,E)$.
\end{lemma}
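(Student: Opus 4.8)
The plan is to follow the same two-layer strategy as in the proof of \Cref{lemma-upper-bound-R-Ham}, but to replace the Hamming-distance contraction (which fails in the hardcore regime $\lambda\le\frac{2-\delta}{\Delta-2}$, beyond Dobrushin--Shlosman) by the potential-function contraction supplied by \Cref{lemma-hardcore}. First I would expand $\E{\Rham}=\sum_{t=1}^T\E{\gamma_t}$ and condition on the disagreement set $\D_{t-1}$. Exactly as in \eqref{eq-bound-R-Ising}, since $v_t$ is uniform and the event $t\in\Isingset$ occurs with probability $\frac1n\sum_{v}\pup_v$ independently of $\X_{t-1},\Y_{t-1}$, we get
\begin{align*}
\Pr[\gamma_t=1\mid \D_{t-1}]\le \frac{(\Delta+1)\abs{\D_{t-1}}}{n}+\frac1n\sum_{v\in V}\pup_v.
\end{align*}
For the hardcore Hamiltonian update only the fugacity changes, so the edge potentials are untouched and $\sum_{v}\pup_v\le 2\sum_v\norm{\phi_v-\phi'_v}_1=2\,\dham(\I,\I')\le 2L$ by \eqref{eq-def-Ising-up}. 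Hence the whole problem reduces to bounding $\E{\abs{\D_{t-1}}}$, and the target bound $\E{\Rham}=O(\Delta^2 TL/(n\delta))$ will follow once I show $\E{\abs{\D_t}}=O(\Delta L/\delta)$ for every $t$.

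The core step is a one-step drift estimate for the Vigoda potential $\rho_\I$ along the coupling $(\X_t,\Y_t)$ of \Cref{def-one-step-local-coupling-dynamic}. I would decompose a single step into (i) the optimal one-step coupling $D^{\sigma,\tau}_{\opt,\I_{v_t}}$ of Gibbs sampling on $\I$, producing an intermediate pair $(\X',\Y')$ with $\X_t=\X'$, followed by (ii) the resampling of $Y_t(v_t)$, which is performed only when $t\in\Isingset$ and changes $\Y'$ in at most the single coordinate $v_t$. For part (i), the pair $(\X',\Y')$ is exactly one step of the one-step optimal coupling of \Cref{def-one-step-coupling-static-Gibbs} on $\I$, so the step-wise decay property \eqref{eq-contra-coupling} of \Cref{lemma-hardcore} gives $\E{\rho_\I(\X',\Y')\mid\X_{t-1},\Y_{t-1}}\le(1-\tfrac{\beta}{n})\rho_\I(\X_{t-1},\Y_{t-1})$ with $1/\beta=O(1/\delta)$. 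For part (ii), the Lipschitz property ($K=12\Delta$) bounds the extra cost of the single-coordinate perturbation by $K$, and this is incurred with probability at most $\Pr[t\in\Isingset]=\frac1n\sum_v\pup_v\le 2L/n$. Combining the two parts and taking expectations yields the recursion
\begin{align*}
\E{\rho_\I(\X_t,\Y_t)}\le\Big(1-\frac{\beta}{n}\Big)\E{\rho_\I(\X_{t-1},\Y_{t-1})}+\frac{2KL}{n},
\end{align*}
which together with $\rho_\I(\X_0,\Y_0)=0$ (as $\X_0=\Y_0$) unrolls to $\E{\rho_\I(\X_t,\Y_t)}\le \frac{2KL}{\beta}=O(\Delta L/\delta)$.

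Finally I would invoke the up-bound-to-Hamming property, $\abs{\D_t}=H(\X_t,\Y_t)\le\rho_\I(\X_t,\Y_t)$, to conclude $\E{\abs{\D_t}}\le\E{\rho_\I(\X_t,\Y_t)}=O(\Delta L/\delta)$, and substitute this into the first display to obtain
\begin{align*}
\E{\Rham}\le\sum_{t=1}^T\Big(\frac{(\Delta+1)}{n}\cdot O(\Delta L/\delta)+\frac{2L}{n}\Big)=O\!\left(\frac{\Delta^2 TL}{n\delta}\right),
\end{align*}
as claimed. The main obstacle I anticipate is part (ii): I must verify that $\rho_\I$ is actually defined along the whole trajectory, i.e.\ that both $\X_t,\Y_t$ remain in $\Omega_\I$ so that the decay and Lipschitz inequalities apply. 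This uses \Cref{observation-hardcore} ($\Omega_\I=\Omega_{\Imid}$, so $\I$ and $\I'$ share the same feasible set) together with the fact that the resampling distribution $\nu^\tau_{\I_{v},\I'_{v}}$ is supported on spins that keep $\Y$ feasible; formally this requires a small induction on $t$ maintaining feasibility of both chains. The extra factor of $\Delta$ relative to \Cref{lemma-upper-bound-R-Ham} is precisely the Lipschitz constant $K=\Theta(\Delta)$ entering part (ii).
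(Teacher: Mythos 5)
Your proposal is correct and follows essentially the same route as the paper's proof: reduce $\E{\Rham}$ to bounding $\E{\abs{\D_t}}$, then run the drift recursion for the Vigoda potential $\rho_\I$ from \Cref{lemma-hardcore}, splitting each coupled step into the optimal coupling (step-wise decay) plus the rare resampling event (charged via the $12\Delta$-Lipschitz property at rate $\frac{1}{n}\sum_v \pup_v \le 2L/n$), and finishing with the up-bound-to-Hamming property. The feasibility point you flag is handled in the paper exactly as you suggest: the hardcore variant of the algorithm starts both chains at the all-zero configuration, both evolve as faithful Gibbs chains, and $\Omega_\I=\Omega_{\I'}$ since only the fugacity changes, so $\X_t,\Y_t$ remain in $\Omega_\I$ for all $t$.
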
 

\begin{lemma}\label{lemma-hardcore-2}
Consider $\UpdateEdge\left(\I,\I',\*X_0, \exelog{X}{v_t}{T} \right)$.
Let $\I=(V,E,\lambda)$ be the current instance and $\I'=(V,E',\lambda)$ the updated instance.
Assume $|E \oplus E'| \leq L$. Also assume one of the following two conditions holds for some constant $\delta > 0$:
\begin{itemize}
\item  $\lambda \leq \frac{2-\delta}{\Delta_G - 2}$ and $\Omega_{\I'} \subseteq \Omega_{\I}$, 	where  $\Delta_G$ is the maximum degree of $G=(V,E)$;
\item  $\lambda \leq \frac{2-\delta}{\Delta_{G'} -2}$ and $\Omega_{\I} \subseteq \Omega_{\I'}$, 	where  $\Delta_{G'}$ is the maximum degree of $G'=(V,E')$.
\end{itemize}
Then $\E{\Rgra} = O\left( \frac{\Delta^2 T L}{n\delta}  \right)$,
where $n= V$, $\Delta = \max\{\Delta_G,\Delta_{G'}\}$.
\end{lemma}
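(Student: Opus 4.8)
The plan is to run the proof at the same top level as that of \Cref{lemma-upper-bound-R}, replacing every use of the Hamming contraction supplied by the Dobrushin--Shlosman condition (\Cref{proposition-mixing-Gibbs}) with the potential contraction supplied by Vigoda's coupling (\Cref{lemma-hardcore}). First I would reduce the target quantity exactly as in \eqref{eq-ER-upper}: since $\gamma_t=\one{v_t\in\mathcal{S}\cup\Gamma_G^+(\D_{t-1})}$ with $v_t$ uniform, $|\Gamma_G^+(\D_{t-1})|\le(\Delta+1)|\D_{t-1}|$, and $|\mathcal{S}|\le 2|E\oplus E'|\le 2L$, linearity of expectation gives
\[
\E{\Rgra}\le\frac{\Delta+1}{n}\sum_{t=1}^{T}\E{|\D_{t-1}|}+\frac{2LT}{n}.
\]
Hence it suffices to establish the uniform bound $\E{|\D_t|}=O(\Delta L/\delta)$, which immediately upgrades to the claimed $\E{\Rgra}=O(\Delta^2 TL/(n\delta))$.

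The crux is that the decay in \Cref{lemma-hardcore} lives on $\rho$ and only holds on feasible pairs, so I would first fix a reference instance $\widehat{\I}$ adapted to the hypothesis: in the case $\Omega_{\I'}\subseteq\Omega_\I$ (edge additions) take $\widehat{\I}=\I$, and in the case $\Omega_{\I}\subseteq\Omega_{\I'}$ (edge deletions) take $\widehat{\I}=\I'$. In each case the stated fugacity bound is precisely the one needed for $\widehat{\I}$ (with maximum degree $\widehat{\Delta}$), so \Cref{lemma-hardcore} applies to $\widehat{\I}$ with decay rate $\beta$; moreover the subset relation forces both coupled configurations $\X_{t-1},\Y_{t-1}$ into $\Omega_{\widehat{\I}}$ along the whole trajectory, so $\rho_{\widehat{\I}}$ is defined everywhere and the step-wise decay \eqref{eq-contra-coupling} is legitimately usable. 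I would then reproduce the per-vertex decomposition \eqref{eq-expect-2}, writing the one-step expectation of $\rho_{\widehat{\I}}(\X_t,\Y_t)$ as $\frac1n\sum_{v\in V}\E{\rho_{\widehat{\I}}(\sigma^{v\gets C^X_v},\tau^{v\gets C^Y_v})}$ for $(C^X_v,C^Y_v)\sim D^{\sigma,\tau}_{\I_v,\I'_v}$ from \eqref{eq:local-coupling-D-for-R}, and comparing it term by term with the static optimal coupling on $\widehat{\I}$.

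For $v\notin\mathcal{S}$ one has $\I_v=\I'_v$, so $D^{\sigma,\tau}_{\I_v,\I'_v}=D^{\sigma,\tau}_{\opt,\widehat{\I}_v}$ and the term equals the corresponding static term. For the at most $|\mathcal{S}|\le 2L$ affected vertices I would invoke the Lipschitz property of \Cref{lemma-hardcore} with $K=12\Delta$: swapping the product-coupled spins $(C^X_v,C^Y_v)$ for optimally-coupled spins $(C^{X'}_v,C^{Y'}_v)$ perturbs the Hamming argument of $\rho_{\widehat{\I}}$ in at most two coordinates, hence changes $\rho_{\widehat{\I}}$ by at most $2K$. Summing yields the drift inequality, the analogue of \eqref{eq-claim-decay},
\[
\E{\rho_{\widehat{\I}}(\X_t,\Y_t)\mid \X_{t-1},\Y_{t-1}}\le\Bigl(1-\tfrac{\beta}{n}\Bigr)\rho_{\widehat{\I}}(\X_{t-1},\Y_{t-1})+\frac{4KL}{n}.
\]
Since $\X_0=\Y_0$ gives $\rho_{\widehat{\I}}(\X_0,\Y_0)=0$, iterating gives $\E{\rho_{\widehat{\I}}(\X_t,\Y_t)}\le\frac{4KL}{\beta}=O(\Delta L/\delta)$, the $1/\delta$ entering through the reciprocal of the decay rate $\beta$; the up-bound-to-Hamming property then gives $\E{|\D_t|}=\E{H(\X_t,\Y_t)}\le\E{\rho_{\widehat{\I}}(\X_t,\Y_t)}=O(\Delta L/\delta)$, which with the reduction above (mirroring \eqref{eq-D-upper}) proves the lemma. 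Both hypotheses are handled uniformly by the single choice of $\widehat{\I}$.

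The main obstacle is the feasibility bookkeeping that is wholly absent from the soft-constraint argument of \Cref{lemma-upper-bound-R}: each use of the step-wise decay and of the Lipschitz comparison demands that the coupled pair \emph{and} all single-site-updated configurations $\sigma^{v\gets\cdot},\tau^{v\gets\cdot}$ lie simultaneously in $\Omega_{\widehat{\I}}$. It is exactly the asymmetric pairing in the hypotheses---$\Omega_{\I'}\subseteq\Omega_\I$ with $\Delta_G$, and $\Omega_\I\subseteq\Omega_{\I'}$ with $\Delta_{G'}$---that guarantees the correct $\widehat{\I}$ keeps both chains feasible, and that the product coupling on $\mathcal{S}$ only produces configurations feasible for $\widehat{\I}$ (feasibility being monotone under the subset relation), so the Lipschitz comparison is well-posed. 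Verifying this invariant is the delicate step; once it is in place the remaining recurrence is routine.
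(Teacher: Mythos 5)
Your proposal is correct and follows essentially the same route as the paper's proof: the same reduction of $\E{\Rgra}$ to a uniform bound on $\E{|\D_t|}$, the same per-vertex comparison between the dynamic local coupling $D^{\sigma,\tau}_{\I_v,\I'_v}$ and the static optimal coupling on the reference instance, with the Lipschitz property of $\rho$ absorbing the at most $2L$ vertices of $\mathcal{S}$ and the up-bound-to-Hamming property converting the potential bound into a disagreement bound. The only presentational difference is that you unify the two hypotheses through a single reference instance $\widehat{\I}$ (the one with the larger feasible set and the matching fugacity bound), whereas the paper proves the case $\Omega_{\I'}\subseteq\Omega_{\I}$ explicitly and invokes symmetry for the other; your feasibility bookkeeping matches the paper's argument that all single-site-updated configurations lie in $\Omega_{\widehat{\I}}$.
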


Note that we call the subroutine \UpdateHamiltonian{} for the update modifying $\I$ to $\Imid$. By \Cref{observation-hardcore}, the condition in \Cref{lemma-hardcore-1} holds. 
We call the subroutine \UpdateEdge{} for the update modifying $\I_1$ to $\I_2$ and the update modifying $\I_2$ to $\I_3$. By \Cref{observation-hardcore},
in both two calls of \UpdateEdge{}, the condition in \Cref{lemma-hardcore-2} holds.
Then \Cref{theorem-model} for hardcore can by proved by going through the proof in 
\Cref{sec:agorithm-dynamic-Gibbs}.
 Compared to ~\Cref{lemma-upper-bound-R-Ham} and ~\Cref{lemma-upper-bound-R}, $\E{\Rham},\E{\Rgra}$ in \Cref{lemma-hardcore-1} and \Cref{lemma-hardcore-2} are bounded by  $O\left( \frac{\Delta^2 T L}{n\delta}  \right)$ rather than  $O\left( \frac{\Delta T L}{n\delta}  \right)$. This is why the hardcore model has an alternative running time in~\eqref{eq-running-time-hardcore}.

The proofs of \Cref{lemma-hardcore-1} and \Cref{lemma-hardcore-2} are similar to the proofs of \Cref{lemma-upper-bound-R-Ham} and \Cref{lemma-upper-bound-R}. We give the proofs here for the completeness.
\begin{proof}[Proof of \Cref{lemma-hardcore-1}]
By the definition of the indicator $\gamma_t$ in~\eqref{eq-def-R-Ising}, we have
\begin{align*}
\Pr[\gamma_t = 1 \mid \mathcal{D}_{t-1}] &\leq \Pr\left[t \in \Isingset \right] + \Pr\left[ v_t \in \Gamma_G^+(\mathcal{D}_{t-1}) \right]=\frac{(\Delta+1)|\mathcal{D}_{t-1}|}{n} + \sum_{v \in V}\frac{\pup_v}{n}.
\end{align*}
By the definition of $\pup_v$ in ~\eqref{eq-def-Ising-up} and $\dham(\I,\I') =  n\abs{\ln \lambda - \ln \lambda'} \leq L$, we have
\begin{align*}
\Pr[\gamma_t = 1 \mid \mathcal{D}_{t-1}] \leq \frac{(\Delta+1)|\mathcal{D}_{t-1}|}{n} + \frac{2L}{n}. 	
\end{align*}
By the definition of $\Rham \triangleq \sum_{t=1}^T \gamma_t$, we have
\begin{align}
\label{eq-bound-R-hardcore}
\E{\Rham} = \sum_{t=1}^T\E{\gamma_t} =  \sum_{t=1}^{T}{\E{\E{\gamma_t \mid \mathcal{D}_{t-1}}}}\leq \sum_{t=1}^{T}\left(\frac{(\Delta+1)\E{|\mathcal{D}_{t-1}|}}{n} + \frac{2L}{n}\right).
\end{align}
Next, we bound the expectation $\E{|\D_t|}$.
In our implementation of the one-step local coupling for Hamiltonian update (\Cref{def-one-step-local-coupling-dynamic}), we first construct the random set $\Isingset \subseteq V$ in~\eqref{eq-def-Ising-set}.
In the $t$-th step, where $1\leq t \leq T$, given any $\X_{t-1}$ and $\Y_{t-1}$, the $\X_t$ and $\Y_t$ is generated as follows.
\begin{itemize}
\item Let $X'(u) = X_{t-1}(u)$ and $Y'(u) = Y_{t-1}(u)$ for all $u \in V \setminus \{v_t\}$, sample $(X'(v_t),Y'(v_t)) \in \{0,1\}^2$ jointly from the optimal coupling $D^{\sigma,\tau}_{\opt,\I_{v_t}}$ of the marginal distributions $\mu_{v_t,\I}(\cdot\mid \sigma)$ and $\mu_{v_t,\I}(\cdot\mid \tau)$, where $\sigma = X_{t-1}(\Gamma_G(v_t))$  and $\tau = Y_{t-1}(\Gamma_G(v_t))$.
\item Let $\X_t = \X'$ and $\Y_t = \Y'$. If $t \in \Isingset$, update the value of $Y_t(v_t)$ using~\eqref{eq-couple-resample}.
\end{itemize}
Note that $\Omega_{\I} = \Omega_{\I'}$.
Since $\I$ satisfies $\lambda \leq \frac{2-\delta}{\Delta - 2}$ with constant $\delta > 0$,
by \Cref{lemma-hardcore}, for any feasible $\X_{t-1}  ,\Y_{t-1}  \in \Omega_{\I} = \Omega_{\I'}$, we have
\begin{align}
\label{eq-use-mix-hardcore}
\E{\rho_{\I}(\X',\Y') \mid \X_{t-1},\Y_{t-1}} \leq 	\left(1 - \frac{\delta}{96n}\right)\rho_{\I}(\X_{t-1},\Y_{t-1}). 
\end{align}
By \Cref{lemma-hardcore}, function $\rho_\I(\cdot,\cdot)$, seen as a function of $2n$ variables, is $12\Delta$-Lipschitz. Let $\+F$ indicates whether $t \in \+P$. We flip the value of $Y_t(v_t)$ only if $\+F$ occurs.
By~\eqref{eq-use-mix-hardcore}, we have 
\begin{align*}
\E{\rho_{\I}(\X_t,\Y_t) \mid \X_{t-1},\Y_{t-1}} &\leq \E{\rho_{\I}(\X',\Y')  + 12\Delta \+F \mid \X_{t-1},\Y_{t-1}}\\
&= \E{\rho_{\I}(\X',\Y') \mid \X_{t-1},\Y_{t-1}}  + \E{12\Delta \+F \mid \X_{t-1},\Y_{t-1}}\\
(\text{$\+F$ is independent with $\X_{t-1},\Y_{t-1}$})\quad&\leq 	\left(1 - \frac{\delta}{96n}\right)\rho_{\I}(\X_{t-1},\Y_{t-1}) + 12\Delta \E{\+F}\\
&\leq \left(1 - \frac{\delta}{96n}\right)\rho_{\I}(\X_{t-1},\Y_{t-1})  + 12\Delta\sum_{v \in V}\frac{\pup_v}{n}\\
(\text{by}~\eqref{eq-def-Ising-up} )\quad&\leq  \left(1 - \frac{\delta}{96n}\right)\rho_{\I}(\X_{t-1},\Y_{t-1})  + \frac{24\Delta}{n}\sum_{v \in V} \abs{\ln \lambda - \ln \lambda'} \\
(\text{by}~\dham(\I,\I')\leq L)\quad&\leq  \left(1 - \frac{\delta}{96n}\right)\rho_{\I}(\X_{t-1},\Y_{t-1})  + \frac{24L\Delta}{n}.
\end{align*}
Note that $\rho_{\I}(\X_0,\Y_0) = 0$ and $\X_0(v) = \Y_0(v) = 0$ for all $v \in V$, the configurations $\X_t,\Y_t$ are feasible for all $t \geq 0$.
Thus, we have
\begin{align*}
\E{\rho_{\I}(\X_t,\Y_t) } \leq  \left(1 - \frac{\delta}{96n}\right)\E{\rho_{\I}(\X_{t-1},\Y_{t-1}) } + \frac{24L\Delta}{n}.
\end{align*}
Thus $\E{\rho_{\I}(\X_t,\Y_t)} \leq \frac{5000L\Delta}{\delta}. $
By the up-bound to Hamming in \Cref{lemma-hardcore}, we have
\begin{align*}
\E{|\mathcal{D}_t|} \leq \frac{5000L\Delta}{\delta}.
\end{align*}
Thus, by~\eqref{eq-bound-R-hardcore}, we have
\begin{align*}
\E{\Rham} \leq \frac{50000\Delta^2 T L}{\delta n} =  O\left( \frac{\Delta^2 T L}{\delta n} \right). 
\end{align*}	
\end{proof}

\begin{proof}[Proof of \Cref{lemma-hardcore-2}]
By the definition of $\Rgra$ in~\eqref{eq-def-R} and the linearity of the expectation, we have
\begin{align*}
\E{\Rgra} &= \sum_{t= 1}^{T}\E{\gamma_t}=\sum_{t=1}^{T} \E{\E{\gamma_t \mid \D_{t-1}}}.
\end{align*}
Recall $\gamma_t = \one{v_t \in \mathcal{S} \cup \Gamma^+_G(\D_{t-1})}$ and $v_t \in V$ is uniformly at random given $\D_{t-1}$. Note that $|\Gamma^+_G(\D_{t-1})| \leq (\Delta+1)|\D_{t-1}|$ and $|\mathcal{S}| \leq 2\vert E \oplus E' \vert \leq 2L$. We have
\begin{align}
\label{eq-ER-upper-harecore}
\E{\Rgra} \leq \sum_{t=1}^{T}\E{\frac{(\Delta+1)|\D_{t-1}| + 2L}{n}} = \frac{(\Delta+1)}{n}\sum_{t=1}^{T}\E{|\D_{t-1}|} + \frac{2LT}{n}.
\end{align}
Suppose $\lambda \leq \frac{2-\delta}{\Delta_G - 2}$ and $\Omega_{\I'} \subseteq \Omega_{\I}$. The other condition follows from symmetry. We claim that
\begin{align}
\label{eq-D-upper-hardcore}
\forall\, 0\leq t \leq T:\quad \E{|\D_{t}|} \leq \frac{10000\Delta L}{\delta}. 	
\end{align}
Combining~\eqref{eq-ER-upper-harecore} and~\eqref{eq-D-upper-hardcore}, we have
\begin{align*}
\E{\Rgra} \leq \frac{100000\Delta L T}{n\delta}  = 	O\left( \frac{\Delta^2 LT}{ n\delta}\right).
\end{align*}
This proves the lemma. 

We now prove~\eqref{eq-D-upper-hardcore}.
Let $(\X_t, \Y_t)_{t \geq 0}$ be the one-step local coupling for updating edges (Definition~\ref{def-one-step-local-coupling-graph}). 
 We claim the following result
\begin{align}
\label{eq-claim-decay-hardcore}
\forall\,\sigma \in \Omega_{\I}, \tau \in \Omega_{\I'} \subseteq \Omega_{\I}, \E{\,\rho_{\I}(\X_t,\Y_t)\mid \X_{t-1} =\sigma \land \Y_{t-1}=\tau\,}
 \leq\left(1-\frac{\delta}{96n}\right)\cdot \rho_{\I}(\sigma, \tau) + \frac{48\Delta L}{n},
\end{align}
where $\rho_{\I}$ is the potential function in \Cref{lemma-hardcore}.
Assume~\eqref{eq-claim-decay-hardcore} holds.
Since $\X_0 = \Y_0 = \{0\}^V$ and $\Omega_{\I'} \subseteq \Omega_{\I}$, we must have $\X_{t-1},\Y_{t-1} \in \Omega_{\I}$.
Taking expectation over $\X_{t-1}$ and $\Y_{t-1}$, we have
\begin{align}
\label{eq-recur-hardcore}
\E{\rho_{\I}(\X_t , \Y_t)}	\leq \left(1-\frac{\delta}{96n}\right)\E{\rho_{\I}(\X_{t-1}, \Y_{t-1})} + \frac{48\Delta L}{n}.
\end{align}
Note that $\X_0 = \Y_0$, we have
\begin{align}
\label{eq-init-hardcore}
\rho_{\I}(\X_0, \Y_0) = 0.	
\end{align}
Combining~\eqref{eq-recur-hardcore}, ~\eqref{eq-init-hardcore} and upper-bound Hamming property in \Cref{lemma-hardcore}  implies
\begin{align*}
\forall\,  0\leq t\leq T:\quad \E{\abs{\+D_t}} \leq \E{\rho_{\I}(\X_t , \Y_t)}\leq \frac{10000\Delta L}{\delta}.
\end{align*}
This proves the claim in~\eqref{eq-D-upper-hardcore}.

We finish the proof by proving the claim in~\eqref{eq-claim-decay-hardcore}.
Let $(\X'_t, \Y'_t)_{t \geq 0}$ be the one-step optimal coupling for Gibbs sampling on instance $\I$ (Definition~\ref{def-one-step-coupling-static-Gibbs}). 
Since $\I$ satisfies $\lambda \leq \frac{2-\delta}{\Delta_G - 2}$, by \Cref{lemma-hardcore}, we have
\begin{align}
\label{eq-contra-Gibbs-hardcore}
\forall\,\sigma , \tau \in \Omega_{\I} :\quad \E{\,\rho_{\I}(\X'_t,\Y'_t)\mid \X'_{t-1} = \sigma \land \Y'_{t-1}=\tau\,} \leq \left(1-\frac{\delta}{96n}\right)\cdot \rho_{\I}(\sigma, \tau).
\end{align}
According to the coupling, we can rewrite the expectation in~\eqref{eq-contra-Gibbs-hardcore} as follows:
\begin{align}
\label{eq-expect-1-hardcore}
\E{\rho_{\I}(\X'_t , \Y'_t) \mid \X'_{t-1} =\sigma \land \Y'_{t-1} =\tau} = \frac{1}{n}\sum_{v \in V}\E{\rho_{\I}\left(\sigma^{v \gets C_v^{X'} }, \tau^{v \gets C_v^{Y'}}\right)}	,
\end{align}
where $(C^{X'}_v, C^{Y'}_v) \sim D^{\sigma,\tau}_{\opt,\I_{v}}$, $D^{\sigma,\tau}_{\opt,\I_{v}}$ is the optimal coupling between $\mu_{v,\I}(\cdot\mid \sigma)$ and $\mu_{v,\I}(\cdot\mid \tau)$, and the configuration $\sigma^{v \gets C_v^{X'} } \in Q^V$ is defined as
\begin{align*}
\sigma^{v \gets C_v^{X'} }(u) \triangleq\begin{cases}
C^{X'}_v &\text{if } u = v\\
\sigma(u) &\text{if } u \neq v	
\end{cases}
 \end{align*}
and the configuration $\tau^{v \gets C_v^{Y'}} \in Q^V$ is defined in a similar way.

Similarly, we can rewrite the expectation in~\eqref{eq-claim-decay-hardcore} as follows:
\begin{align}
\label{eq-expect-2-hardcore}
\E{\rho_{\I}(\X_t , \Y_t) \mid \X_{t-1} =\sigma \land \Y_{t-1} =\tau} = \frac{1}{n}\sum_{v \in V}\E{\rho_{\I}\left(\sigma^{v \gets C_v^{X} }, \tau^{v \gets C_v^{Y}}\right)}	,
\end{align}
where $(C^{X}_v, C^{Y}_v) \sim D_{\I_v,\I_v'}^{\sigma,\tau}$, where $D_{\I_v,\I_v'}^{\sigma,\tau}$ is the local coupling defined in~\eqref{eq:local-coupling-D-for-R}.

The following two properties hold for~\eqref{eq-expect-1-hardcore} and~\eqref{eq-expect-2-hardcore}.
\begin{itemize}
\item If $v \not \in \mathcal{S}$, by the definition of $D_{\I_v,\I'_v}^{\sigma,\tau}(\cdot,\cdot)$ in~\eqref{eq:local-coupling-D-for-R}, it holds that $D_{\I_v,\I'_v}^{\sigma,\tau} = D_{\opt,\I_v}^{\sigma,\tau}$. Hence
\begin{align*}
\forall v \not\in \mathcal{S}:\quad 	\E{\rho_{\I}\left(\sigma^{v \gets C_v^{X'} }, \tau^{v \gets C_v^{Y'}}\right)} = \E{\rho_{\I}\left(\sigma^{v \gets C_v^{X} }, \tau^{v \gets C_v^{Y}}\right)}.
\end{align*}
\item If $v \in \mathcal{S}$, then it holds that  $H(\sigma^{v \gets C_v^{X}},\sigma^{v \gets C_v^{X'}} ) \leq 1$ and $H(\tau^{v \gets C_v^{Y'}},\tau^{v \gets C_v^{Y}}) \leq 1$, where $H$ is the Hamming distance. Since $\Omega_{\I'} \subseteq \Omega_{\I}$, it holds that $\sigma^{v \gets C_v^{X'}},\sigma^{v \gets C_v^{X}}, \tau^{v \gets C_v^{Y}}, \tau^{v \gets C_v^{Y'}} \in \Omega_{\I}$. Since the function $\rho_{\I}$ is $12\Delta$-Lipschitz, we have
\begin{align*}
\forall v \in \mathcal{S}:\quad
\E{\rho_{\I}\left(\sigma^{v \gets C_v^{X} }, \tau^{v \gets C_v^{Y}}\right)} \leq \E{\rho_{\I}\left(\sigma^{v \gets C_v^{X'} }, \tau^{v \gets C_v^{Y'}}\right)} + 24\Delta.
\end{align*}
\end{itemize}
Combining above two properties with ~\eqref{eq-contra-Gibbs-hardcore},~\eqref{eq-expect-1-hardcore} and~\eqref{eq-expect-2-hardcore}, we have for any $\sigma \in , \tau \in \Omega$,
\begin{align*}
&\E{\rho_{\I}(\X_t , \Y_t) \mid \X_{t-1} =\sigma \land \Y_{t-1} =\tau}\\
=&\, \frac{1}{n}\sum_{v \in V}\E{\rho_{\I}\left(\sigma^{v \gets C_v^{X} }, \tau^{v \gets C_v^{Y}}\right)}\\
\leq&\, \frac{1}{n}\sum_{v \not\in \mathcal{S}}\E{\rho_{\I}\left(\sigma^{v \gets C_v^{X'} }, \tau^{v \gets C_v^{Y'}}\right)} + \frac{1}{n}\sum_{v \in \mathcal{S}}\left(\E{\rho_{\I}\left(\sigma^{v \gets C_v^{X'} }, \tau^{v \gets C_v^{Y'}}\right)} + 24\Delta \right)\\
(\ast)\quad\leq&\, \E{\rho_{\I}(\X'_t , \Y'_t) \mid \X'_{t-1} =\sigma \land \Y'_{t-1} =\tau} + \frac{48L\Delta}{n}\\
 \leq&\, \left(1-\frac{\delta}{96n}\right)\cdot \rho_{\I}(\sigma, \tau) + \frac{48L\Delta}{n},
\end{align*}
where $(\ast)$ holds due to $|\mathcal{S}| \leq 2L$. This proves the claim in~\eqref{eq-claim-decay-hardcore}.
\end{proof}

Finally, we prove \Cref{lemma-hardcore}. This proof is based on the coupling technique in~\cite{vigoda1999fast}. 
\begin{proof}[Proof of \Cref{lemma-hardcore}]
We give a potential function $\rho_{\I}$ for the hard core instance $\I$. 
%
%
We mainly use Vigoda's potential function in~\cite{vigoda1999fast}.
However, we need to slightly modify Vigoda's potential function to handle the isolated vertices.

Recall that for hard core model, $Q = \{0,1\}$. For any $\sigma \in Q^V$, $\sigma(v) = 1$ represents $v$ is occupied and $\sigma(v) = 0$ represents $v$ is unoccupied.
For each vertex $v \in V$, we use $\mathrm{deg}(v)$ to denote the degree of $v$ in graph $G=(V, E)$.
We divide the graph $G = (V, E)$ into two graphs $G_1=(V_1,E_1)$ and $G_2=(V_2,E_2)$ such that
\begin{align*}
&V_1 = \{v\in V\mid\deg(v) = 0\},\quad  E_1 = \emptyset,\\
&V_2	= V \setminus V_1,\quad  E_2 = E.
\end{align*}
Thus $G_1$ is an empty graph and $G_2$ contains no isolated vertex.
The potential function $\rho_{\I}$ is defined as
\begin{align*}
\forall \sigma, \tau \in \Omega_{\I}:\quad \rho_{\I}(\sigma,\tau) \triangleq 4\rho_1(\sigma(V_1),\tau(V_1)) + 4\rho_2(\sigma(V_2),\tau(V_2)).
\end{align*}
Here, $\rho_1$ is the potential function on $G_1$, which is the Hamming distance:
\begin{align*}
\rho_1(\sigma(V_1),\tau(V_1)) = \sum_{v \in V_1}\one{\sigma(v) \neq \tau(v)}.	
\end{align*}
And $\rho_2(\sigma(V_2),\tau(V_2))$ is the Vigoda's potential function~\cite{vigoda1999fast} on the graph $G_2$. Formally, let $D = \{v \in V_2 \mid \sigma(v) \neq \tau(v)\}$. For each $v \in V_2$, let $d_v = |D \cap \Gamma_{G_2}(v)|$. Let $c = \frac{\Delta\lambda}{\Delta\lambda + 2}$, where $\Delta$ is the maximum degree of graph $G$. Note that the maximum degree of graph $G_2$ is also $\Delta$. The potential  function $\rho_2(\sigma(V_2),\tau(V_2))$ is defined as
\begin{align*}
\alpha_v = \begin{cases}
 \deg(v) &\text{if } v \in D\\
 0 &\text{otherwise};	
 \end{cases}\quad
\beta_v = \begin{cases}
-c d_v &\text{if $\exists\,w \in \Gamma_{G_2}(v)$ such that $\sigma(w) = \tau(w) = 1$}\\
-c (d_v - 1) &\text{if there is no such $w$ and $d_v>1$ }\\
0& \text{otherwise};
 \end{cases}
\end{align*}
\begin{align*}
\rho_2(\sigma(V_2),\tau(V_2)) = \sum_{v \in V_2}(\alpha_v + \beta_v).	
\end{align*}
It is easy to see $\rho_{\I}(\sigma,\sigma) = 0$ and $\max_{\sigma,\tau \in \Omega_{\I}}\rho_{\I}(\sigma, \tau) = \Delta n$.
We then verify other properties for $\rho_{\I}$.

At first, we prove the upper-bound to Hamming property. For function $\rho_1$, it holds that 
\begin{align*}
\rho_1(\sigma(V_1),\tau(V_1)) = H(\sigma(V_1),\tau(V_1)). 	
\end{align*}
For function $\rho_2$, it holds that 
\begin{align*}
\rho_2(\sigma(V_2),\tau(V_2)) = \sum_{v \in V_2}(\alpha_v + \beta_v) = \sum_{v \in D}\alpha_v + \sum_{v \in V_2}\beta_v	 \geq \sum_{v \in D}\sum_{w \in \Gamma_{G_2}(v)}(1-c),
\end{align*}
where the last inequality holds due to $\sum_{v \in V_2}\beta_v \geq -\sum_{v \in V_2}cd_v = -c\sum_{v \in D}\deg(v)$.
Since graph $G_2$ contains no isolated vertex, then $|\Gamma_{G_2}(v)| = \deg(v) \geq 1$ for all $v \in D$. Note $c < 1$. Thus
\begin{align*}
\rho_2(\sigma(V_2),\tau(V_2)) \geq |D|(1-c)	= |D|\frac{2}{\Delta\lambda + 2} \geq \frac{|D|}{4} =\frac{1}{4}H(\sigma(V_2),\tau(V_2)),
\end{align*}
where $\frac{2}{\lambda \Delta + 2} \geq \frac{1}{4}$ is because $\lambda < \frac{2}{\Delta-2}$ and $\Delta \geq 3$.
Combining together we have
\begin{align*}
\rho_{\I}(\sigma, \tau) = 	4\rho_1(\sigma(V_1),\tau(V_1)) + 4\rho_2(\sigma(V_2),\tau(V_2)) \geq H(\sigma, \tau).
\end{align*}
This also implies $\rho_{\I}(\sigma, \tau) \geq \one{\sigma \neq \tau}$.

Next, we show the function $\rho_{\I}$ is  $12\Delta$-Lipschitz.  
Recall $V_1 \cap V_2 = \emptyset$, $V_1 \cup V_2 = V$ and
\begin{align*}
\rho_{\I}(\sigma, \tau) = 4\rho_1(\sigma(V_1),\tau(V_1)) + 4\rho_2(\sigma(V_2), \tau(V_2)).	
\end{align*}
Since $\rho_1$ is the Hamming distance, it is easy to see $\rho_1$ is $1$-Lipschitz. 
To give the Lipschitz constant for $\rho_2$. We extend the function $\rho_2$ as follows.
Suppose the function $\rho_2$ is defined over $Q^{V_2} \times Q^{V_2}$, where $Q = \{0, 1\}$. 
For any $x,y,x',y' \in Q^{V_2}$ such that $H(xy,x'y') = 1$, it is easy to verify the extended function $\rho_2$ satisfies
\begin{align*}
|\rho_2(x,y) - \rho_2(x',y')| \leq 3\Delta.
\end{align*}
This implies the original function $\rho_2$ is $3\Delta$-Lipschitz.
Hence, the function $\rho_{\I}$ is  $12\Delta$-Lipschitz.

Finally, we prove the step-wise decay property.
Let $(\X_t^{(1)})_{t\geq 0},(\Y_t^{(1)})_{t \geq 0}$ be the Gibbs sampling chains for hard core model on graph $G_1$. Since $G_1$ is a graph consisting of isolated vertices, then the one step optimal coupling  $(\X_t^{(1)}, \Y_t^{(1)})_{t \geq 0}$ satisfies
\begin{align*}
\forall \sigma, \tau \in \Omega_{\I}:\, \E{\rho_1\left(\X^{(1)}_t,\Y^{(1)}_t\right)\mid \X^{(1)}_{t-1} = \sigma(V_1) \land \Y^{(1)}_{t-1}=\tau(V_1)} \leq \left(1-\frac{1}{|V_1|}\right)\rho_1(\sigma(V_1), \tau(V_1)).
\end{align*}
Let $(\X_t^{(2)})_{t\geq 0},(\Y_t^{(2)})_{t \geq 0}$ be the Gibbs sampling chains for hard core model on graph $G_2$. If $\lambda \leq \frac{2-\delta}{\Delta-2}= \frac{2(1-\delta/2)}{\Delta-2}$, then due to Vigoda's proof \footnote{It can be verified that in Vigoda's proof~\cite{vigoda1999fast}, the Markov chain for sampling hard core is indeed the Gibbs sampling and the coupling for analysis is indeed the one step-optimal coupling for Gibbs sampling.}, the one step optimal coupling  $(\X_t^{(2)}, \Y_t^{(2)})_{t \geq 0}$ satisfies:
\begin{align*}
\forall \sigma, \tau \in \Omega_{\I}:\, \E{\rho_2\left(\X^{(2)}_t,\Y^{(2)}_t\right)\mid \X^{(2)}_{t-1} = \sigma(V_2) \land \Y^{(2)}_{t-1}=\tau(V_2)} \leq \left(1-\frac{\delta}{96|V_2|}\right)\rho_2(\sigma(V_2), \tau(V_2)).	
\end{align*}
Let $(\X_t)_{t\geq 0},(\Y_t)_{t \geq 0}$ be the Gibbs sampling chains for hard core model on graph $G$. If $\lambda \leq \frac{2-\delta}{\Delta-2}$, then the one step optimal coupling  $(\X_t, \Y_t)_{t \geq 0}$ satisfies:
\begin{align*}
\forall \sigma, \tau \in \Omega_{\I}:\quad &\E{\rho_{\I}\left(\X_t,\Y_t\right)\mid \X_{t-1} = \sigma \land \Y_{t-1}=\tau}\\
=&\, \frac{|V_1|}{n}\left(\left(1-\frac{1}{|V_1|}\right)4\rho_1(\sigma(V_1), \tau(V_1))+ 4\rho_2(\sigma(V_2), \tau(V_2))\right)\\
&+ \frac{|V_2|}{n}\left(4\rho_1(\sigma(V_1), \tau(V_1))+ \left(1-\frac{\delta}{96|V_2|}\right)4\rho_2(\sigma(V_2), \tau(V_2))\right)\\
\leq&\, \left(1 - \frac{\min\{\delta/96,1\}}{n}\right)\rho_{\I}(\sigma, \tau).
\end{align*}
Thus, the potential function $\rho_{\I}$ satisfies the step-wise decay property.
\begin{align*}
\forall \sigma, \tau \in \Omega_{\I}:\quad &\E{\rho_{\I}\left(\X_t,\Y_t\right)\mid \X_{t-1} = \sigma \land \Y_{t-1}=\tau}\leq \left(1 - \frac{\delta/96}{n}\right)\rho_{\I}(\sigma, \tau).
\end{align*}
This proves the lemma.
\end{proof}

\section{Proofs for dynamic inference}
\label{sec-proof-inference}
\subsection{Proof of the main theorem}
\label{section-proof-main}
Our dynamic inference algorithm is given as follows.
For each MRF instance $\I=(V,E,Q,\Phi)$, where $n = |V|$, our dynamic inference algorithm maintains $N(n)$ independent samples $\*X^{(1)},\*X^{(2)},\ldots,\*X^{(N(n))} \in Q^V$ satisfying each $\DTV{\mu_{\I}}{\*X^{(i)}} \leq \epsilon(n)$ and the estimator 
$\hat{\*\theta}(\I) = \mathcal{E}(\X^{(1)}, \X^{(2)},\ldots,\X^{(N(n))})$ for $\*\theta(\I)$.
Given an update that modifies $\I$ to $\I'=(V',E',Q,\Phi')$ where $n' = |V'|$, 
our algorithm does as follows.
\begin{itemize}
\item \emph{Update the sample sequence.} Update $\*X^{(1)},\*X^{(2)},\ldots,\*X^{(N(n))}$ to $N(n')$ independent random samples $\*Y^{(1)},\*Y^{(2)},\ldots,\*Y^{(N(n'))} \in Q^{V'}$ such that each $\DTV{\mu_{\I'}}{\*Y^{(i)}} \leq \epsilon(n')$ and output the difference between two sample sequences. 
\item \emph{Update the estimator.} Given the difference between two sample sequences $\*X^{(1)},\*X^{(2)},\ldots,\*X^{(N(n))}$ and $\*Y^{(1)},\*Y^{(2)},\ldots,\*Y^{(N(n'))}$, update $\hat{\*\theta}(\I)$ to $\hat{\*\theta}(\I') =\mathcal{E}_{\*\theta}(\Y^{(1)},\Y^{(2)},\ldots,\Y^{(N(n'))})$ using the black-box algorithm in \Cref{definition-estimator-dynamic}.
\end{itemize}
Obviously, $\hat{\*\theta}(\I')$ is an $(N,\epsilon)$-estimator for $\*\theta(\I')$.

The sample sequence is maintained and updated by the dynamic sampling algorithm in~\Cref{theorem-sample-MRF}.
By ~\Cref{theorem-sample-MRF}, we have the space cost for maintaining the sample sequence is  $O\left(nN(n)\log n\right)$  memory words, each of $O(\log n)$ bits.
By following the proof of~\Cref{theorem-sample-MRF}, 
it is easy to verify that the expected time cost for each update is 
$O\left(\Delta^2 L N(n)\log ^ 3 n +\Delta n \right)$.

The estimator is maintained and updated by the black-box algorithm in~\Cref{definition-estimator-dynamic}.
By~\Cref{lemma-smooth-function}, we have $N(n) \leq \mathrm{poly}(n)$.
Combining with ~\Cref{definition-estimator-dynamic}, we have 
the space cost for maintaining the estimator is $(n\cdot N(n) +K)\mathrm{polylog}(n)$ bits.
Let $\+D$ be the size of the difference between two sample sequences as defined in~\eqref{eq-def-diff-sample}.
We can follow the proof of~\Cref{theorem-sample-MRF} to bound the expectation of $\+D$. 
Let $T = \left\lceil\frac{n}{\delta}\log \frac{n}{\epsilon(n)}\right\rceil$ and $T' = \left\lceil\frac{n'}{\delta}\log \frac{n'}{\epsilon(n')}\right\rceil$.
Since $\abs{n-n'}\leq L = o(n)$, we have $\abs{T-T'} = O(L\log n)$~(due to \Cref{lemma-smooth-epsilon}).
Combining~\eqref{eq-bound-D-Ising},~\eqref{eq-Dt-edge} and~\eqref{eq-upper-bound-T-mix} yields 
\begin{align*}
\E{\+D} &=\abs{N(n) - N(n')}\cdot \max\{n, n'\} +  O\left(L + \abs{T - T'}\right) \cdot N(n) = O(LN(n)\log n),
\end{align*}
where the last equation holds because $N(n) - N(n') = O(\frac{N(n)}{n})$ (due to \Cref{lemma-smooth-function}).
Combining with ~\Cref{definition-estimator-dynamic}, we have the expected time cost for 
updating the estimator is $LN(n)\mathrm{polylog}(n)$.

In summary, our dynamic inference algorithm
 maintains an estimator for the current MRF instance $\I$,  
 using  extra  $\widetilde{O}\left(nN(n)+K\right)$  memory words, each of $O(\log n)$ bits,
such that when $\I$ is updated  to $\I'$, the algorithm updates the estimator 
within expected time cost
\begin{align*}
\E{T_{\mathsf{cost}}} &= \E{T_{\mathsf{sample}}} + \E{T_{\mathsf{estimator}}} \\
&= O\left(\Delta^2 L N(n)\log ^ 3 n +\Delta n \right) + LN(n)\mathrm{polylog}(n)\\
&= \widetilde{O}\left(\Delta^2 L N(n) +\Delta n \right).
\end{align*}

\subsection{Dynamic inference on specific models}
Applying our dynamic inference algorithm on Ising model, $q$-coloring and hardcore model yields the following result.
\begin{theorem}
\label{theorem-inf-model}
There exist dynamic inference algorithms as stated in Theorem~\ref{theorem-infer-MRF} with the same space cost $\widetilde{O}\left(n N(n) + K\right)$, and expected time cost $\widetilde{O}\left(\Delta^2 L N(n) +\Delta n \right)$ for each update, if the input instance $\I$ with $n$ vertices and the updated instance $\I'$ with $d(\I,\I')\leq L = o(n)$ both are:
\begin{itemize}
\item Ising models with temperature $\beta$ and arbitrary local fields where $\exp(-2|\beta|) \geq 1 - \frac{2-\delta}{\Delta + 1}$;
\item proper $q$-colorings with $q\ge(2+\delta)\Delta$;
\item hardcore models with fugacity $\lambda\le\frac{2-\delta}{\Delta-2}$, but with an alternative time cost for each update 
\begin{align*}
\widetilde{O}\left(\Delta^3 L N(n) +\Delta n \right),
\end{align*}
\end{itemize}	
where $\delta>0$ is a constant, $\Delta = \max\{\Delta_{G},\Delta_{G'}\}$,
$\Delta_G$ and $\Delta_{G'}$ denote the maximum degree of the input graph and updated graph respectively.
\end{theorem}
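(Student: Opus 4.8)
The plan is to derive all three model-specific results by specializing the two main theorems already established: Theorem~\ref{theorem-infer-MRF} for the cases whose regimes fall within the Dobrushin--Shlosman condition, and the model-specific dynamic sampling guarantee of Theorem~\ref{theorem-model} fed into the inference-from-sampling reduction of Section~\ref{section-proof-main} for the hardcore case. Since the reduction in Section~\ref{section-proof-main} uses the sampler only as a black box together with the dynamical efficiency of $(N,\epsilon,\+E_{\*\theta})$, the inference bounds follow once the correct sampling cost and the correct bound on the sample-sequence difference $\+D$ are supplied for each model.

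For the Ising and $q$-coloring cases I would first verify that the stated regimes imply Condition~\ref{condition-Dobrushin}. For Ising, a direct computation of the single-site marginals shows that flipping one neighbour of $v$ changes $\mu_{v,\I}(+\mid\cdot)$ in total variation by at most $\tanh|\beta|=\frac{1-\mathrm{e}^{-2|\beta|}}{1+\mathrm{e}^{-2|\beta|}}$, so each nonzero entry of the influence matrix is at most this value. Substituting $\mathrm{e}^{-2|\beta|}\ge 1-\frac{2-\delta}{\Delta+1}$ gives $\tanh|\beta|\le\frac{2-\delta}{2\Delta+\delta}$, whence $\max_u\sum_v A_\I(u,v)\le\Delta\tanh|\beta|\le 1-\frac{\delta}{2}$, so the Dobrushin--Shlosman condition holds with slack $\delta/2$. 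For proper $q$-coloring, the per-edge influence is bounded by $\frac{1}{q-\Delta}$, and $q\ge(2+\delta)\Delta$ forces $\Delta\cdot\frac{1}{q-\Delta}\le\frac{1}{1+\delta}\le 1-\frac{\delta}{1+\delta}$; together with $q>\Delta+1$ this also guarantees the feasibility assumption~\eqref{eq-assume-MRF}. In both cases Theorem~\ref{theorem-infer-MRF} then applies verbatim and yields the claimed space cost $\widetilde O(nN(n)+K)$ and time cost $\widetilde O(\Delta^2 L N(n)+\Delta n)$ per update.

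For the hardcore model the threshold $\lambda\le\frac{2-\delta}{\Delta-2}$ lies beyond the Dobrushin--Shlosman regime $\lambda\le\frac{1-\delta}{\Delta-1}$, so Theorem~\ref{theorem-infer-MRF} cannot be invoked directly. Here I would instead invoke Theorem~\ref{theorem-model}, whose hardcore analysis replaces the Hamming-distance contraction of Proposition~\ref{proposition-mixing-Gibbs} by Vigoda's potential function $\rho_\I$ (Lemma~\ref{lemma-hardcore}) and bounds the number of bad coupling steps by $\E{\Rham},\E{\Rgra}=O(\Delta^2 TL/(n\delta))$ in Lemmas~\ref{lemma-hardcore-1} and~\ref{lemma-hardcore-2}. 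Plugging these into the single- and multi-sample cost analyses produces dynamic sampling cost $\widetilde O(\Delta^3 L N(n)+\Delta n)$, one extra factor of $\Delta$ compared with the general bound. To turn this into an inference bound I would rerun the estimator-update argument of Section~\ref{section-proof-main}: the only model-dependent ingredient there is the expectation of the disagreement sets, and the bounds $\E{|\D_t|}=O(\Delta L/\delta)$ implicit in Lemmas~\ref{lemma-hardcore-1} and~\ref{lemma-hardcore-2} give $\E{\+D}=O(\Delta L N(n)\log n)$, so the estimator update costs $\Delta L N(n)\,\mathrm{polylog}(n)$, which is dominated by the sampling cost. This yields the stated hardcore time bound $\widetilde O(\Delta^3 L N(n)+\Delta n)$.

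The genuinely delicate point, and thus the main obstacle, is the hardcore case: one must be sure the inference-from-sampling reduction of Section~\ref{section-proof-main} survives the switch from a Hamming-distance contraction to a potential-function contraction. What makes this go through is that the reduction never uses the contraction itself, only the resulting bound on $\E{\+D}$ expressed through $\E{|\D_t|}$; and the up-bound-to-Hamming property of Lemma~\ref{lemma-hardcore} converts the $\rho_\I$-based decay into exactly the $\E{|\D_t|}$ bound the reduction needs. The remaining work is routine bookkeeping: checking that the feasible initial configuration $X_0\equiv 0$ keeps all intermediate instances $\Imid,\I_1,\I_2,\I_3$ inside the regime where Lemma~\ref{lemma-hardcore} applies (Observation~\ref{observation-hardcore}), and that the $\widetilde O$ notation absorbs the $\mathrm{polylog}$ factors uniformly across the three models.
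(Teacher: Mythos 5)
Your proposal is correct and follows essentially the same route as the paper: the Ising and coloring cases are treated as corollaries of Theorem~\ref{theorem-infer-MRF} once their regimes are checked against the Dobrushin--Shlosman condition, while the hardcore case is obtained by feeding the Vigoda-coupling sampler of Theorem~\ref{theorem-model} (whose extra factor of $\Delta$ comes from Lemmas~\ref{lemma-hardcore-1} and~\ref{lemma-hardcore-2}) through the estimator-update reduction of Section~\ref{section-proof-main}. Your explicit influence-matrix computations for Ising and coloring, and the hardcore bound $\E{\+D}=O(\Delta L N(n)\log n)$ being dominated by the sampling cost, are details the paper leaves implicit, but they match its argument exactly.
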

With the dynamic sampling algorithm in \Cref{theorem-model}, \Cref{theorem-inf-model} can be proved by going through the same proof in~\Cref{section-proof-main}.

\section{Conclusion}\label{sec:conclusion}
In this paper we study probabilistic inference problem in a graphical model when the model itself is changing dynamically with time.
We study the non-local updates so that two consecutive graphical models may differ everywhere as long as the total amount of their  difference is bounded.
This general setting covers many typical applications.
We give a sampling-based dynamic inference algorithm that maintains an inference solution efficiently  against the dynamic inputs. The algorithm significantly improves the time cost compared to the static sampling-based inference algorithm. 

Our algorithm generically reduces the dynamic inference to dynamic sampling problem. 
Our main technical contribution is a dynamic Gibbs sampling algorithm that maintains random samples for graphical models dynamically changed by non-local updates. 
Such technique is extendable to all single-site dynamics. 
%
This gives us a systematic approach for transforming classic MCMC samplers on static inputs to the sampling and inference algorithms in a dynamic setting. 
Our dynamic  algorithms are efficient as long as the one-step optimal coupling exhibits a step-wise decay, a key property that has been widely used in supporting efficient MCMC sampling in the classic static setting
 and captured by the Dobrushin-Shlosman condition.

Our result is the first one that shows the possibility of 
efficient probabilistic inference 
in dynamically changing graphical models (especially when the graphical models are changed by non-local updates). 
Our dynamic inference algorithm has potentials in speeding up the iterative algorithms 
for learning graphical models, which deserves more theoretical and experimental research.
In this paper, we focus on discrete graphical models and  sampling-based inference algorithms.
Important future directions include considering more general distributions and the dynamic algorithms based on other inference techniques.


\bibliographystyle{alpha}
\bibliography{refs.bib}

\newcommand{\etalchar}[1]{$^{#1}$}
\begin{thebibliography}{NSWN17}

\bibitem[ADK{\etalchar{+}}16]{abraham2016fully}
Ittai Abraham, David Durfee, Ioannis Koutis, Sebastian Krinninger, and Richard
  Peng.
\newblock On fully dynamic graph sparsifiers.
\newblock In {\em FOCS}, 2016.

\bibitem[AQ{\etalchar{+}}17]{anacleto2017dynamic}
Osvaldo Anacleto, Catriona Queen, et~al.
\newblock Dynamic chain graph models for time series network data.
\newblock {\em Bayesian Anal.}, 12(2):491--509, 2017.

\bibitem[BC16]{bernstein2016deterministic}
Aaron Bernstein and Shiri Chechik.
\newblock Deterministic decremental single source shortest paths: beyond the
  $o(mn)$ bound.
\newblock In {\em STOC}, 2016.

\bibitem[BD97]{bubley1997path}
Russ Bubley and Martin Dyer.
\newblock Path coupling: A technique for proving rapid mixing in {Markov}
  chains.
\newblock In {\em FOCS}, 1997.

\bibitem[CLRS09]{cormen2009introduction}
Thomas~H Cormen, Charles~E Leiserson, Ronald~L Rivest, and Clifford Stein.
\newblock {\em Introduction to algorithms}.
\newblock MIT press, 2009.

\bibitem[CW07]{carvalho2007dynamic}
Carlos~M. Carvalho and Mike West.
\newblock Dynamic matrix-variate graphical models.
\newblock {\em Bayesian Anal.}, 2(1):69--97, 2007.

\bibitem[DG00]{dyer2000markov}
Martin Dyer and Catherine Greenhill.
\newblock On {M}arkov chains for independent sets.
\newblock {\em J. Algorithms}, 35(1):17--49, 2000.

\bibitem[DGGP18]{durfee2018fully}
David Durfee, Yu~Gao, Gramoz Goranci, and Richard Peng.
\newblock Fully dynamic effective resistances.
\newblock {\em arXiv preprint arXiv:1804.04038}, 2018.

\bibitem[DGGP19]{durfee2019dynamicsoectral}
David Durfee, Yu~Gao, Gramoz Goranci, and Richard Peng.
\newblock Fully dynamic spectral vertex sparsifiers and applications.
\newblock In {\em STOC}, 2019.

\bibitem[DGJ08]{dyer2008dobrushin}
Martin Dyer, Leslie~Ann Goldberg, and Mark Jerrum.
\newblock Dobrushin conditions and systematic scan.
\newblock {\em Combin. Probab. Comput.}, 17(6):761--779, 2008.

\bibitem[DS85a]{dobrushin1985completely}
Roland~L Dobrushin and Senya~B Shlosman.
\newblock Completely analytical {G}ibbs fields.
\newblock In {\em Statistical Physics and Dynamical Systems}, pages 371--403.
  Springer, 1985.

\bibitem[DS85b]{dobrushin1985constructive}
Roland~Lvovich Dobrushin and Senya~B Shlosman.
\newblock Constructive criterion for the uniqueness of {G}ibbs field.
\newblock In {\em Statistical Physics and Dynamical Systems}, pages 347--370.
  Springer, 1985.

\bibitem[DS87]{dobrushin1987completely}
RL~Dobrushin and SB~Shlosman.
\newblock Completely analytical interactions: constructive description.
\newblock {\em J. Statist. Phys.}, 46(5-6):983--1014, 1987.

\bibitem[DSOR16]{sa2016ensuring}
Christopher De~Sa, Kunle Olukotun, and Christopher R{\'e}.
\newblock Ensuring rapid mixing and low bias for asynchronous {Gibbs} sampling.
\newblock In {\em ICML}, 2016.

\bibitem[FG19]{forster2019dynamic}
Sebastian Forster and Gramoz Goranci.
\newblock Dynamic low-stretch trees via dynamic low-diameter decompositions.
\newblock In {\em STOC}, pages 377--388, 2019.

\bibitem[FVY19]{feng2019dynamic}
Weiming Feng, Nisheeth~K Vishnoi, and Yitong Yin.
\newblock Dynamic sampling from graphical models.
\newblock In {\em STOC}, 2019.

\bibitem[GHP18]{goranci2018dynamic}
Gramoz Goranci, Monika Henzinger, and Pan Peng.
\newblock {Dynamic Effective Resistances and Approximate Schur Complement on
  Separable Graphs}.
\newblock In {\em ESA}, volume 112, 2018.

\bibitem[G{\v{S}}V15]{galanis2015inapproximability}
Andreas Galanis, Daniel {\v{S}}tefankovi{\v{c}}, and Eric Vigoda.
\newblock Inapproximability for antiferromagnetic spin systems in the tree
  nonuniqueness region.
\newblock {\em J. ACM}, 62(6):50, 2015.

\bibitem[G{\v{S}}V16]{galanis2016inapproximability}
Andreas Galanis, Daniel {\v{S}}tefankovi{\v{c}}, and Eric Vigoda.
\newblock Inapproximability of the partition function for the antiferromagnetic
  {Ising} and hard-core models.
\newblock {\em Combin. Probab. Comput.}, 25(04):500--559, 2016.

\bibitem[Hay06]{hayes2006simple}
Thomas~P Hayes.
\newblock A simple condition implying rapid mixing of single-site dynamics on
  spin systems.
\newblock In {\em FOCS}, 2006.

\bibitem[Hin12]{hinton2012practical}
Geoffrey~E Hinton.
\newblock A practical guide to training restricted boltzmann machines.
\newblock In {\em Neural Networks: Tricks of the Trade}, pages 599--619.
  Springer, 2012.

\bibitem[HKN14]{henzinger2014decremental}
Monika Henzinger, Sebastian Krinninger, and Danupon Nanongkai.
\newblock Decremental single-source shortest paths on undirected graphs in
  near-linear total update time.
\newblock In {\em FOCS}, 2014.

\bibitem[HKN16]{henzinger2016dynamic}
Monika Henzinger, Sebastian Krinninger, and Danupon Nanongkai.
\newblock Dynamic approximate all-pairs shortest paths: Breaking the {$O(mn)$}
  barrier and derandomization.
\newblock {\em SIAM J. Comput.}, 45(3):947--1006, 2016.

\bibitem[Jer95]{jerrum1995very}
Mark Jerrum.
\newblock A very simple algorithm for estimating the number of {$k$}-colorings
  of a low-degree graph.
\newblock {\em Random Structures \& Algorithms}, 7(2):157--165, 1995.

\bibitem[JVV86]{JVV86}
Mark Jerrum, Leslie~G. Valiant, and Vijay~V. Vazirani.
\newblock Random generation of combinatorial structures from a uniform
  distribution.
\newblock {\em Theoret. Comput. Sci.}, 43:169--188, 1986.

\bibitem[KFB09]{koller2009probabilistic}
Daphne Koller, Nir Friedman, and Francis Bach.
\newblock {\em Probabilistic graphical models: principles and techniques}.
\newblock MIT press, 2009.

\bibitem[LMV19]{lee2019online}
Holden Lee, Oren Mangoubi, and Nisheeth Vishnoi.
\newblock Online sampling from log-concave distributions.
\newblock In {\em NIPS}, 2019.

\bibitem[LP17]{levin2017markov}
David~A Levin and Yuval Peres.
\newblock {\em Markov chains and mixing times}.
\newblock American Mathematical Soc., 2017.

\bibitem[LV99]{luby1999fast}
Michael Luby and Eric Vigoda.
\newblock Fast convergence of the {G}lauber dynamics for sampling independent
  sets.
\newblock {\em Random Structures \& Algorithms}, 15(3-4):229--241, 1999.

\bibitem[MM09]{mezard2009information}
Marc Mezard and Andrea Montanari.
\newblock {\em Information, physics, and computation}.
\newblock Oxford University Press, 2009.

\bibitem[NR17]{narayanan2017efficient}
Hariharan Narayanan and Alexander Rakhlin.
\newblock Efficient sampling from time-varying log-concave distributions.
\newblock {\em J. Mach. Learn. Res.}, 18(1):4017--4045, 2017.

\bibitem[NSWN17]{nanongkai2017dynamic}
Danupon Nanongkai, Thatchaphol Saranurak, and Christian Wulff-Nilsen.
\newblock Dynamic minimum spanning forest with subpolynomial worst-case update
  time.
\newblock In {\em FOCS}, 2017.

\bibitem[QS93]{queen1993multiregression}
Catriona~M. Queen and Jim~Q. Smith.
\newblock Multiregression dynamic models.
\newblock {\em J. Roy. Statist. Soc. Ser. B}, 55(4):849--870, 1993.

\bibitem[RKD{\etalchar{+}}19]{renggli2019continuous}
Cedric Renggli, Bojan Karla{\v{s}}, Bolin Ding, Feng Liu, Kevin Schawinski,
  Wentao Wu, and Ce~Zhang.
\newblock Continuous integration of machine learning models: A rigorous yet
  practical treatment.
\newblock In {\em SysML}, 2019.

\bibitem[{\v{S}}VV09]{vstefankovivc2009adaptive}
Daniel {\v{S}}tefankovi{\v{c}}, Santosh Vempala, and Eric Vigoda.
\newblock Adaptive simulated annealing: A near-optimal connection between
  sampling and counting.
\newblock {\em J. {ACM}}, 56(3):18, 2009.

\bibitem[SWA09]{smyth2009asynchronous}
Padhraic Smyth, Max Welling, and Arthur~U Asuncion.
\newblock Asynchronous distributed learning of topic models.
\newblock In {\em NIPS}, 2009.

\bibitem[Vig99]{vigoda1999fast}
Eric Vigoda.
\newblock Fast convergence of the {G}lauber dynamics for sampling independent
  sets: Part {II}.
\newblock Technical Report TR-99-003, International Computer Science Institute,
  1999.

\bibitem[WJ08]{wainwright2008graphical}
Martin~J. Wainwright and Michael~I. Jordan.
\newblock {\em Graphical models, exponential families, and variational
  inference}.
\newblock Now Publishers Inc, 2008.

\bibitem[WN17]{wulff2017fully}
Christian Wulff-Nilsen.
\newblock Fully-dynamic minimum spanning forest with improved worst-case update
  time.
\newblock In {\em STOC}, 2017.

\end{thebibliography}

\end{document}